\documentclass[aps,floatfix,amsmath,twocolumn,notitlepage,nofootinbib,longbibliography,superscriptaddress]{revtex4-2}

\usepackage[ruled]{algorithm2e}
\SetKwFor{For}{for (}{) $\lbrace$}{$\rbrace$}
\usepackage{amssymb}
\usepackage{graphicx}
\usepackage{graphics}
\usepackage{amsmath}
\usepackage{amsthm}
\usepackage{color}
\usepackage{dsfont}
\usepackage{wrapfig}

\usepackage{longtable}

\usepackage{tabularray}

\usepackage{float}
\usepackage{mathtools}
\usepackage[colorlinks=true, linkcolor=blue, citecolor=red]{hyperref}
\usepackage{verbatim}
\usepackage{xcolor}
\usepackage{soul}

\usepackage{multirow}

\usepackage{wrapfig}
\usepackage{array}

\usepackage{multirow}

\newcolumntype{C}[1]{>{\centering\let\newline\\\arraybackslash\hspace{0pt}}m{#1}}

\newcommand{\p}{\mathrm{p}}

\usepackage{blkarray}

\usepackage{color}

\def\>{\rangle}
\def\<{\langle}

\def\id{\mathsf{id}}

\renewcommand{\geq}{\geqslant}
\renewcommand{\leq}{\leqslant}

\newcommand{\A}{\mathbf{A}}
\newcommand{\B}{\mathbf{B}}

\newtheorem*{theorem*}{Theorem}

\newtheorem{lemma}{Lemma}
\newtheorem*{lemma*}{Lemma}
\newtheorem{claim}{Claim}
\newtheorem*{claim*}{Claim}
\newtheorem{proposition}{Proposition}

\newtheorem*{definition*}{Definition}
\newtheorem{conjecture}{Conjecture}

\theoremstyle{definition}

\theoremstyle{remark}

\theoremstyle{definition}

\newtheorem*{axiom*}{Axiom}

\newcommand{\tr}{{\rm Tr}}

\newcommand{\ve}[1]{{\left\vert\kern-0.25ex\left\vert\kern-0.25ex\left\vert #1 
    \right\vert\kern-0.25ex\right\vert\kern-0.25ex\right\vert}}

\newcommand{\ket}[1]{|#1\rangle}

\newcommand{\ketbra}[2]{|#1\rangle\!\langle#2|}

\newcommand{\1}{\mathds{1}}

\usepackage[capitalise]{cleveref}

\definecolor{cool_green}{rgb}{0.0, 0.5, 0.0}

\begin{document}

\title{Quantum Nonlocality and Device-Independent Randomness are Robust to Noisy Signaling Channels}

\author{Kuntal Sengupta}
\thanks{A preliminary version of this work appears in the author's Ph.D. thesis submitted to the University of York~\cite{sengupta2025}.
\href{kuntal.sengupta@neel.cnrs.fr}{kuntal.sengupta@neel.cnrs.fr}}
\affiliation{Inria, Univ.\ Grenoble Alpes, CNRS, 38000 Grenoble, France} 

\author{Lewis Wooltorton}
\thanks{This work was partially conducted at the author's previous affiliation: the University of Bristol and the University of York, UK. \href{lewis.wooltorton@ens-lyon.fr}{lewis.wooltorton@ens-lyon.fr}}
\affiliation{Inria, ENS de Lyon, UCBL, LIP, 46 Allee d’Italie, 69364 Lyon Cedex 07, France}
\begin{abstract} 
    Given a pair of isolated devices that accept random binary inputs and return binary outputs, a user can deduce from the observed data alone if the underlying mechanism can be explained classically. Bell's theorem further states that a classical explanation can be ruled out if the devices perform certain measurements on an entangled quantum state, underpinning the security of cryptographic protocols that are device-independent (DI). For certain protocols, such as those performed in a tight space, it might be difficult to perfectly enforce the non-signaling assumption required in Bell's theorem. This prompts the question: is quantum nonlocality robust to such imperfections? We show that if a binary channel sends a noisy copy of one party's input to the other before any measurements take place, the answer is yes. We completely characterize the vertices and facets of the local polytope in this scenario, and identify Bell inequalities that certify non-signaling quantum correlations. This is possible even when a near perfect copy of the input is sent. We go on to show that the identified inequalities are more robust to depolarizing noise than the Clauser-Horne-Shimony-Holt inequality when certifying DI randomness in this scenario. In addition, we characterize the local polytope when both parties receive a noisy copy of each other's input and make similar conclusions, leaving many new potential Bell inequalities to be explored. 
\end{abstract}
\maketitle

\section{Introduction}

Bell's theorem~\cite{Bell, Brunner_review} enables an experimentalist to certify the nonlocal behavior  of their setup using observed data only. Given access to a source of free randomness and the ability to enforce a non-signaling condition between different physical components, witnessing the violation of a Bell inequality rules out any local hidden variable (LHV) explanation of the observed correlations~\cite{EPR}. This has lead to the advent of device-independent (DI) information processing, where protocols for securely amplifying and expanding randomness~\cite{ColbeckThesis,CK2,PAMBMMOHLMM,MS1,MS2,CR_free}, or distributing a secret key~\cite{Ekert,ABGMPS,PABGMS,VV2,ADFRV}, can be realized without any physical characterization of the devices used. Provided the assumptions of Bell's theorem are met, the violation of a Bell inequality certifies that the devices are functioning correctly, and this alone is sufficient to establish the protocol's security.          

However, failing to meet one of Bell's assumptions could lead to incorrect security claims. This could happen when performing a Bell test in a tight space, such as on a single quantum device where strict non-signaling (often refered to as parameter independence) between different physical regions cannot be guaranteed~\cite{Foreman2023,Fyrillas24}. For example, the authors of Ref.~\cite{Silman_2013} estimated the amount of signaling present in a Bell test featuring two superconducting qubits on a single chip~\cite{Ansmann2009}, and in Ref.~\cite{Fyrillas24}, DI randomness was certified from a single two-qubit photonic chip\footnote{See~\cite{Silman_2013} and~\cite[Appendix A]{Fyrillas24} for discussions on estimating the amount of signaling present in a given setup.}. To address this concern, investigating whether the assumptions of Bell's theorem can be relaxed while preserving the separation between classical and quantum phenomena has become an interesting avenue of research. 

The amount of communication needed for classical devices to reproduce quantum correlations has been well studied~\cite{Maudlin1992,BCT99,Steiner_2000,Pawlowski_2010,Gha21}. In particular, Toner and Bacon showed that all quantum correlations generated by performing projective measurements on the singlet can be reproduced by LHV models when one bit of classical communication is permitted~\cite{TonerBacon03}. Furthermore, two bits of communication are sufficient to reproduce all quantum correlations generated by projective measurements on arbitrary two-qudit states~\cite{TonerBacon03,RT10,VB9}. Whether outcome communication can be used to the same effect was more recently studied in Ref.~\cite{vieira2025}. A closely related line of research has investigated how much measurement dependence, i.e., the extent to which a hidden variable can be correlated with the input settings, is required to reproduce quantum correlations~\cite{BG11,Hall10,Hall11}. Notably, Barrett and Gisin showed that this type of relaxation is equivalent to models with classical communication about the input settings~\cite{BG11}. 

Rather than assuming the existence of a non-signaling quantum correlation and asking how much a given assumption must be relaxed to explain it classically (in the spirit of Bacon and Toner for parameter independence~\cite{TonerBacon03} or Barrett and Gisin for measurement independence~\cite{BG11}), one can also ask the converse. Consider an experimentalist who can only perform an imperfect Bell test, where the extent of the imperfection (e.g., the amount of signaling) has been characterized. Are there non-signaling quantum correlations realizable within their setup that cannot be explained classically? It was shown by P\"utz \textit{et al.} that an arbitrarily small amount of measurement independence is sufficient to witness nonlocality~\cite{Putz_2016,Aktas15,Supic23}, and, more recently, Vieira, Ramanathan and Cabello showed that quantum correlations cannot be simulated even in the presence of arbitrary relaxations of measurement and parameter independence simultaneously~\cite{Vieira_2025,ramanathan2025}.     

The existence of such correlations have naturally lead to new DI protocols robust to certain relaxations. For example, in DI randomness amplification, a weak source of randomness can be used to choose the inputs of the Bell test, and the outputs can be processed to generate a private random string~\cite{Colbeck2012,Gallego_2013,Foreman2023,kulikov2024}. In the case of perfectly random inputs, constrained signaling between the devices can also be tolerated in randomness expansion protocols~\cite{Silman_2013, tan2023, Fyrillas24, arqand2024, ramanathan2025}.

With the exception of Refs.~\cite{Silman_2013,tan2023} that consider quantum models for weak cross-talk and constrained leakage, respectively, the aforementioned works relax parameter (or measurement) dependence at the level of the LHV behavior . For example, in Refs.~\cite{Hall11, Vieira_2025,ramanathan2025}, LHV models with relaxed parameter dependence are those where Alice's outcome $A=a$ given an input $X=x$ and hidden variable $\Lambda=\lambda$ is allowed to depend on Bob's input $Y=y$ by some fixed amount $\epsilon > 0$, i.e.,
\begin{equation}
   \big|  \p(a|x,y=0,\lambda) - \p(a|x,y=1,\lambda) \big| \leq \epsilon. \label{eq:sig1}
\end{equation}
Note that \eqref{eq:sig1} does not assume any particular mechanism by which the signaling takes place, only that it does so in a restricted manner. Checking for membership of this set can be phrased as a linear program, and new inequalities that are satisfied by all such LHV models, yet exhibit a quantum violation, have been derived~\cite{Hall11, Vieira_2025,ramanathan2025}. However, to the best of our knowledge, a complete description of all facets and vertices of an LHV polytope under relaxed parameter independence has not yet been found. Identifying facet Bell inequalities in this scenario is important for DI randomness, since they are often more robust to noise than non-facet alternatives~\cite{WBC}. Furthermore, a complete characterization would be a useful tool for systematically searching over different inequalities to find one that is optimal for randomness generation in the presence of signaling.                

In this manuscript, we obtain complete descriptions of LHV correlations that can exploit input signaling via an explicit noisy channel. As a result, we identify Bell inequalities that certify nonlocal quantum correlations in the presence of signaling. In more detail, we consider the minimal Bell scenario\footnote{We refer to the bipartite Bell scenario with two inputs and two outputs per party as the minimal Bell scenario.} in the presence of binary channels that send a noisy copy of one party's input to the other before any measurements take place; see \cref{fig:scenario} for an example featuring one-way signaling from Alice to Bob. This enables LHV models to simulate certain nonlocal correlations depending on the crossover probability of the channel, and we obtain a complete description of this set in terms of its vertices and facets for certain channel instances. The Bell inequalities we identify are facets that are violated by non-signaling quantum correlations. Furthermore, we provide examples of such inequalities that are maximally violated by a partially entangled pair of qubits even when one party's input is almost perfectly sent to the other, a similar phenomenon to that recently reported in Refs.~\cite{Vieira_2025,ramanathan2025} using different techniques. Interestingly, the facets we identify correspond to a previously discovered family of self-tests among the set of non-signaling quantum correlations~\cite{Barizien_2024}. 

We also fully characterize the LHV polytope when both parties receive a noisy copy of each other's input, and identify exposed non-signaling quantum correlations even when the signaling is arbitrarily strong. In the two-way setting, we prove an equivalence between the set of LHV correlations induced by our signaling model and those induced by the general definition of joint parameter and measurement dependence, as considered by Refs.~\cite{Vieira_2025,ramanathan2025}. This establishes the noisy channel model as a general tool to study relaxed assumptions in the minimal Bell scenario.

\begin{figure}
    \centering
    \includegraphics[width=0.5\textwidth]{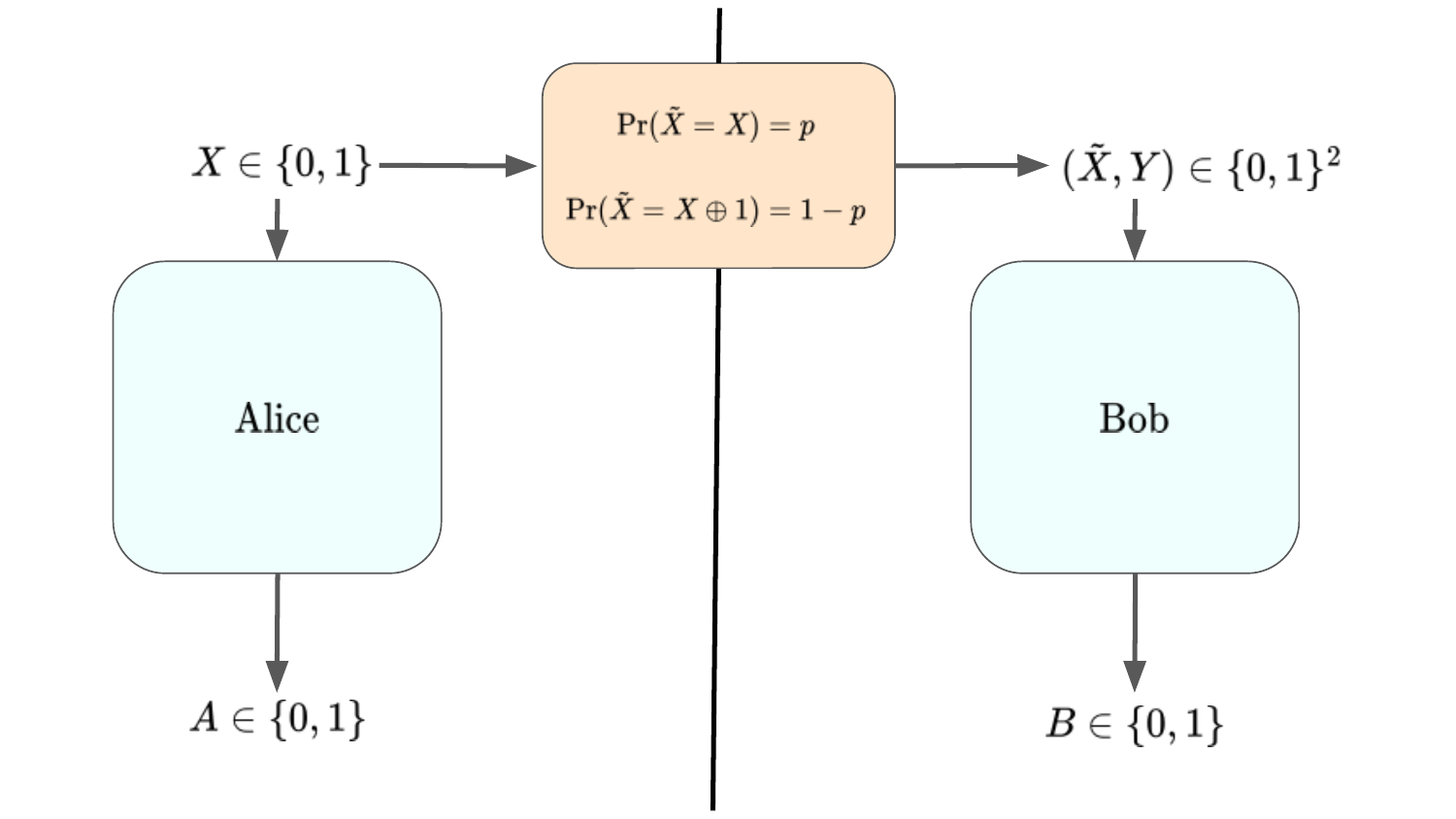}
    \caption{An illustration of the minimal Bell scenario in the presence of a one-way binary symmetric channel from Alice to Bob. Depending on the crossover probability $1-p \in (0,1/2]$, a noisy copy of Alice's input $X$ is also an input to Bob's device. When $p = 1/2$, Bob receives no information about Alice's input and we recover the standard non-signaling Bell scenario, whereas when $p = 1$, Bob's strategy can depend arbitrarily on Alice's input.}
    \label{fig:scenario}
\end{figure}

We then investigate the certification of DI randomness in the presence of noisy input signaling via binary channels. We show using the numerical technique of Ref.~\cite{BrownDeviceIndependent2} that the identified facet inequalities are more robust to depolarizing noise than the Clauser-Horne-Shimony-Holt (CHSH) inequality~\cite{CHSH} for any degree of input signaling by one party, suggesting the practical benefit of using partially entangled states in this context. Interestingly, our numerical results suggest that almost maximum local randomness can be certified in the presence of near perfect input singling from one party, and we leave an analytical investigation of this to future work. 

The remainder of the manuscript is structured as follows. We review the necessary background in \cref{sec:background}. In \cref{sec:sig_sec}, we introduce the set of LHV correlations under the noisy channel model and study its properties, including connections to general definitions of parameter and measurement dependence. In \cref{sec:exposed} we highlight certain facet inequalities that are violated by non-signaling quantum correlations, and in \cref{sec:rand} we apply them to DI randomness certification. We conclude with a discussion in \cref{sec:disc}. All proofs can be found in the Appendix.

\section{Background}
\label{sec:background}       
The minimal Bell scenario comprises of two parties, Alice and Bob, who are independently asked questions by a referee, denoted by random variables $X$ and $Y$ that take values $x \in \{0,1\}$ and $y \in \{0,1\}$, respectively. To these questions they need to return answers denoted by random variables $A$ and $B$ that take values $a \in \{0,1\}$ and $b \in \{0,1\}$, respectively. The set of probabilities $\p(a,b,x,y)$ is a \emph{non-signaling local behavior} (or admits a non-signaling LHV model) if it is of the form
\begin{equation}
    \p(a,b,x,y) = \p(x,y)\sum_{\lambda} \p(\lambda) \p(a|x,\lambda) \p(b|y,\lambda), \label{eq:p_classical}
\end{equation}
where $\Lambda$ is a common cause that takes values $\Lambda = \lambda$ according to the distribution $\p(\Lambda)$. The distribution over $XY$ is assumed to be fixed and uniform, $\p(x,y) =\p(x)\p(y) = 1/4$ for all $x$ and $y$, by the referee. The local description \eqref{eq:p_classical} takes into account three assumptions: $(i)$ $X$ and $Y$ are independent of $\Lambda$ (measurement independence), $(ii)$ $A$ is independent of $Y$ and $B$ is independent of $X$ (parameter independence) and $(iii)$ $A$ and $B$ are independent when conditioned on $XY\Lambda$ (outcome independence).

A behavior is a \emph{non-signaling quantum behavior} if there exists a pair of Hilbert spaces $\mathcal{H}_{Q_{A}}$ and $\mathcal{H}_{Q_{B}}$, sets of POVMs $\{M_{a|x}\}_{a\in \{0,1\}}$ and $\{N_{b|y}\}_{b \in \{0,1\}}$ for $x,y\in \{0,1\}$ on $\mathcal{H}_{Q_{A}}$ and $\mathcal{H}_{Q_{B}}$, respectively, and a joint density operator $\rho$ on $\mathcal{H}_{Q_{A}} \otimes \mathcal{H}_{Q_{B}}$ such that
\begin{equation}
    \p(a,b,x,y) = \p(x,y)\tr[(M_{a|x} \otimes N_{b|y})\rho]. \label{eq:p_quantum}
\end{equation}
For simplicity, we take Alice's and Bob's measurements to be projective throughout this work\footnote{This can be taken without loss of generality in the context of cryptography (see \cref{sec:rand}), however, more care should be taken in the context of self-testing~\cite{Baptista2025}. In this work, we explicitly assume projective measurements for our self-testing claim.}. Bell's theorem then states that there exist behaviors of the form  \eqref{eq:p_quantum} that cannot be represented in the form \eqref{eq:p_classical}~\cite{Bell}. 

A Bell inequality witnesses this separation, and is given by a tuple $(f,\eta)$ where $f$ is a linear functional of the behavior  $\mathrm{P}$ (see \eqref{eq:behav}) and $\eta \in \mathbb{R}$, such that $f(\mathrm{P}_{\mathrm{c}}^{\text{NS}}) \leq \eta$ for all non-signaling local behaviors $\mathrm{P}_{\mathrm{c}}^{\text{NS}}$, but there exists a non-signaling quantum behavior  $\mathrm{P}_{\mathrm{q}}^{\text{NS}}$ such that $f(\mathrm{P}_{\mathrm{q}}^{\text{NS}}) > \eta$. An important example is the CHSH inequality~\cite{CHSH}, given by $(f_{\mathrm{CHSH}},2)$ where
\begin{equation}
    f_{\text{CHSH}}(\mathrm{P}) = \langle A_{0}(B_{0} + B_{1})\rangle + \langle A_{1}(B_{0} - B_{1})\rangle. \label{eq:chsh}
\end{equation}
Here, $\langle A_{x}B_{y} \rangle = \sum_{a,b\in \{0,1\}} (-1)^{a+b}\p(a,b|x,y)$ are the correlators\footnote{The notation $\langle A_{x}(B_{0} \pm B_{1})\rangle$ represents $\langle A_{x}B_{0}\rangle \pm \langle A_{x}B_{1} \rangle$.}. The CHSH inequality is the only non-trivial facet of the local polytope in the minimal Bell scenario up to relabelings (see, e.g., Ref.~\cite{Brunner_review}), and it can be violated up to a value of $2\sqrt{2}$ by a maximally entangled pair of qubits and a pair of maximally incompatible qubit measurements per party. Up to local isometries, this is the only quantum strategy that can achieve $f_{\mathrm{CHSH}}(\mathrm{P}_{\text{q}}^{\text{NS}}) = 2\sqrt{2}$, and we say $f_{\mathrm{CHSH}}$ \textit{self-tests} this quantum state and measurements~\cite{MayersYao,McKagueSinglet,YangSelfTest} (see Ref.~\cite{SupicSelfTest} for an overview).  

As noted in \eqref{eq:chsh}, it will be convenient to work with the conditional behavior $\p(a,b|x,y) = \p(a,b,x,y)/\p(x,y)$, which we represent as a matrix
\begin{equation}
\mathrm{P} :=\left(\begin{array}{cc|cc}
\p(00|00) & \p(01|00) & \p(00|01) & \p(01|01)\\
\p(10|00) & \p(11|00) & \p(10|01) & \p(11|01)\\
\hline
\p(00|10) &\p(01|10) & \p(00|11) & \p(01|11) \\
\p(10|10) & \p(11|10) & \p(10|11) & \p(11|11)\\
\end{array}\right). \label{eq:behav}
\end{equation}
Parameter independence can be equivalently stated in terms of non-signaling constraints, which are satisfied if the scalar products of P with the following matrices are zero: 

\begin{equation*}
    \begin{aligned}
        \mathrm{NS}_{0}^{ \A \to \B} \coloneqq \left(
\begin{array}{cc|cc}
 1 & 0 & 0 & 0 \\
 1 & 0 & 0 & 0 \\\hline
 -1 & 0 & 0 & 0 \\
 -1 & 0 & 0 & 0 \\
\end{array}
\right), \mathrm{NS}_{1}^{\A \to \B} \coloneqq \left(
\begin{array}{cc|cc}
 0 & 0 & 1 & 0 \\
 0 & 0 & 1 & 0 \\\hline
 0 & 0 & -1 & 0 \\
 0 & 0 & -1 & 0 \\
\end{array}
\right), \\
  \mathrm{NS}_{0}^{\B \to \A} \coloneqq \left(
\begin{array}{cc|cc}
 1 & 1 & -1 & -1 \\
 0 & 0 & 0 & 0 \\\hline
 0 & 0 & 0 & 0 \\
 0 & 0 & 0 & 0 \\
\end{array}
\right),  \mathrm{NS}_{1}^{\B \to \A} \coloneqq \left(
\begin{array}{cc|cc}
 0 & 0 & 0 & 0 \\
 0 & 0 & 0 & 0 \\\hline
 1 & 1 & -1 & -1 \\
 0 & 0 & 0 & 0 \\
\end{array}
\right) .
\end{aligned}
\label{Equation::NSConstraints}
\end{equation*}
A behavior  $\mathrm{P}$ satisfying $\left<\mathrm{NS}_{y}^{\A \to \B},\mathrm{P}\right>=0$ is said to be non-signaling from Alice to Bob with respect to Bob's measurement setting $Y=y$. Similarly, if $\left<\mathrm{NS}_{x}^{\B \to \A},\mathrm{P}\right>=0$, $\mathrm{P}$ is non-signaling from Bob to Alice, with respect to Alice's measurement setting $X=x$.

Conversely, a non-zero scalar product between $\mathrm{P}$ and at least one of the non-signaling vectors implies that parameter independence is violated. If the scalar product with either $\mathrm{NS}_{y}^{\A \to \B}$ or $\mathrm{NS}_{x}^{\B \to \A}$ is non-zero for some $x,y \in \{0,1\}$, but not both, either Alice's input $x$ influences Bob's output $b$ or Bob's input $y$ influences Alice's output $a$. We call behaviors with this property \textit{one-way signaling}. If for some $x,y\in \{0,1\}$ the scalar product is non-zero with both $\mathrm{NS}_{y}^{\A \to \B}$ and $\mathrm{NS}_{x}^{\B \to \A}$, Alice's input $x$ influences Bob's output $b$ and simultaneously Bob's input $y$ influences Alice's output $a$. We call behaviors with this property \textit{two-way signaling}.  We denote the largest scalar product of $\mathrm{P}$ with $\mathrm{NS}_{y}^{ \A \to \B}$ over $y\in \{0,1\}$ by $\epsilon^{\A \to \B}$, and similarly denote the largest scalar product with $\mathrm{NS}_{x}^{ \B \to \A}$ over $x\in \{0,1\}$ by $\epsilon^{\B \to \A}$. Therefore, any behavior  $\mathrm{P}$ satisfies, 

\begin{align}\begin{split}
    &\left| \sum_{a} \p(a,b|x,y) - \sum_{a} \p(a,b|x',y)\right| \leq \epsilon^{\A \to \B}  \text{ and }\\
     &\left| \sum_{b} \p(a,b|x,y) - \sum_{b} \p(a,b|x,y')\right| \leq \epsilon^{\B \to \A} \\
    \end{split}
    \label{Def::GenPD}
\end{align}
for all $a,b,x,y,x'$ and $y'$. We denote by $\mathbf{S}_{(\epsilon^{\A \to \B},\epsilon^{\B \to \A})}$ the set of measurement and outcome independent LHV behaviors of the form
\begin{equation*}
     \p(a,b|x,y) = \sum_{\lambda}\p(\lambda)\p(a|x,y,\lambda)\p(b|x,y,\lambda)
\end{equation*}
 admitting~\eqref{Def::GenPD}. Furthermore, if we replace $\p(\lambda)$ with $\p(\lambda|x,y) = \p(\lambda)\p(x,y|\lambda) / \p(x,y)$ where $l \leq \p(x,y|\lambda)\leq 1/4$ and $0 \leq l \leq 1/4$, the LHV model will also be measurement dependent, where the amount of dependence is quantified by $l$~\cite{Putz_2016}. We denote by $\mathbf{S}_{(\xi^{\A \to \B},\xi^{\B \to \A})}^l$ the set of all measurement dependent LHV behaviors of this form satisfying
 \begin{equation}
 \begin{aligned}
     \big | \p(b|x,y,\lambda) - \p(b|x',y,\lambda) \big | &\leq \xi^{\A \to \B} \ \text{and} \\
    \big | \p(a|x,y,\lambda) - \p(a|x,y',\lambda) \big | &\leq \xi^{\B \to \A} \label{eq:lam_dep}
\end{aligned}
 \end{equation}
 for all $\lambda,a,b,x,y,x'$ and $y'$. Finally, in the expressions above, note that conditioned on the variables $X,Y$ and $\Lambda$, the joint probability of Alice and Bob factorizes. In particular, $\p(a,b|x,y,\lambda) = \p(a|x,y,\lambda)\p(b|x,y,\lambda)$. This results from the LHV model admitting outcome independence. We will hold on to this assumption throughout the paper.

\section{Binary channel model for partial signaling} \label{sec:sig_sec}

To introduce parameter dependence in the minimal Bell scenario, we consider a noisy communication channel between Alice and Bob that leaks information about one party's input to the other before any measurements take place, as illustrated for one-way signaling in \cref{fig:scenario}. In the general case, Alice's input is a pair of bits $(X,\tilde{Y})$, where $\tilde{Y}$ is a noisy copy of Bob's input $Y$, whose distribution is determined by the channel: $\{\text{p}_{\tilde{Y}|Y}(\tilde{y}|y) \ : \ \tilde{y} \in \{0,1\}\}$. Similarly, $\{\text{p}_{\tilde{X}|X}(\tilde{x}|x) \ : \ \tilde{x} \in \{0,1\}\}$ determines Bob's additional information $\tilde{X}$ about Alice's input\footnote{We include subscripts on $\p$, e.g., $\text{p}_{\tilde{X}}$, to indicate the random variable over which $\p$ is a distribution; they will be omitted when it is clear from context.}. In general, these channels need not be symmetric, allowing for a different crossover probability for each symbol and for each party. 

\subsection{Vertices of the LHV polytope}\label{sec:verts}

If Alice and Bob share a local hidden variable $\Lambda$, and output deterministic functions of their inputs and $\Lambda$, they realize a \textit{signaling local behavior }:
\begin{multline}
    \p(a,b,x,y) = \p(x,y)\sum_{\lambda}\p_{\Lambda}(\lambda) \sum_{\tilde{y},\tilde{x}}\p_{\tilde{Y}|Y}(\tilde{y}|y) \, \p_{\tilde{X}|X}(\tilde{x}|x) \\ \cdot \p_{A|X\tilde{Y}\Lambda}(a|x,\tilde{y},\lambda) \, \p_{B|\tilde{X}Y\Lambda}(b|\tilde{x},y,\lambda),  \label{eq:p_sig_model}
\end{multline}
where $\p_{A|X\tilde{Y}\Lambda}$ and $\p_{B|\tilde{X}Y\Lambda}$ are Alice's and Bob's local response functions, respectively. Notice that if $\p_{A|X\tilde{Y}\Lambda}$ and $\p_{B|\tilde{X}Y\Lambda}$ are independent of $\tilde{Y}$ and $\tilde{X}$, i.e., the parties ignore their additional input, \eqref{eq:p_sig_model} reduces to \eqref{eq:p_classical}. The same happens if $\p_{\tilde{Y}|Y}(\tilde{y}|y) =\p_{\tilde{X}|X}(\tilde{x}|x) = 1/2$ for all $\tilde{y}, \, y, \, \tilde{x}, \, x \in \{0,1\}$, i.e., the channel transmits no information about the other party's input. 

We now observe that the set of behaviors \eqref{eq:p_sig_model} is a polytope. Each response function  can be decomposed into a convex sum of deterministic functions, 
\begin{equation*}
    \p_{A|X\tilde{Y}\Lambda}(a|x,\tilde{y},\lambda) = \sum_{k} \mu(k|\lambda) \, \delta_{a,f_{k}(x,\tilde{y})},
\end{equation*}
where $k$ indexes one of finitely many deterministic functions $f_{k}:\{0,1\}^2 \to \{0,1\}$, and $\{\mu(k|\lambda)\}_{k}$ is a distribution over this index that may in general depend on the local hidden variable $\Lambda=\lambda$. A similar expression holds for Bob's response functions, and we label the associated distributions $\{\nu(k|\lambda)\}_{k}$. We can therefore rewrite \eqref{eq:p_sig_model}:
\begin{multline}
    \p(a,b,x,y) = \p(x,y)\sum_{k,l}q(k,l) \\ \cdot \Bigg(\sum_{\tilde{y}}\p_{\tilde{Y}|Y}(\tilde{y}|y) \, \delta_{a,f_{k}(x,\tilde{y})}\Bigg)\Bigg(\sum_{\tilde{x}}\p_{\tilde{X}|X}(\tilde{x}|x) \, \delta_{b,f_{l}(\tilde{x},y)}\Bigg) \label{eq:p_sig_model_new}
\end{multline}
where $q(k,l) = \sum_{\lambda}\p_{\Lambda}(\lambda)\mu(k|\lambda)\nu(l|\lambda)$. Notice that \eqref{eq:p_sig_model_new} is a convex combination of deterministic local response functions that are randomized over possible outputs of the noisy communication channel. These form a finite set of vertices, and hence for every choice of channel, the set of behaviors \eqref{eq:p_sig_model} forms a polytope, which we label $\mathbb{S}$. 

To illustrate this, we return to \cref{fig:scenario}, where a binary symmetric channel with crossover probability $1-p$ sends a noisy copy of Alice's input to Bob. We therefore have 
\begin{equation*}
    \begin{gathered}
    \p_{\tilde{Y}|Y}(\tilde{y}|y) = 1/2 \ \ \forall \tilde{y},y\in \{0,1\}  \ \ \text{and}\\
    \p_{\tilde{X}|X}(\tilde{x}|x) = p \, \delta_{\tilde{x},x} + (1-p)\, \delta_{\tilde{x} \oplus 1,x},
    \end{gathered}
\end{equation*}
i.e., Alice receives no information about Bob's input and Bob receives a copy of Alice's input $X$ with probability $p \in [1/2,1)$, and the flipped bit $X \oplus 1$ with probability $1-p$. Now consider the choice of deterministic functions 
\begin{equation*}
    \begin{aligned}
        f_{A}(x,\tilde{y}) = x, \ \ \ \text{and} \ \ \ f_{B}(\tilde{x},y) = \tilde{x} \oplus y,
    \end{aligned}
\end{equation*}
that is, Alice outputs $A=X$, while Bob outputs $B = \tilde{X} \oplus Y$. These are equivalently described by the following tables, respectively:
\begin{equation}
    \begin{array}{c|c||c}
    X & \tilde{Y} & A  \\ \hline
    0 & 0 & 0 \\
    0 & 1 & 0  \\
    1 & 0 & 1  \\
    1 & 1 & 1  \\
\end{array}  \quad \text{and} \quad
\begin{array}{c|c||c}
    \tilde{X} & Y & B  \\ \hline
    0 & 0 & 0 \\
    0 & 1 & 1  \\
    1 & 0 & 1  \\
    1 & 1 & 0  \\
\end{array} . \label{eq:fun_tab}
\end{equation}
Following \eqref{eq:p_sig_model_new}, the induced conditional behavior takes the form
$$
\text{P} = \left(
\begin{array}{cc|cc}
 p & 1-p & 1-p & p \\
 0 & 0 & 0 & 0 \\ \hline
 0 & 0 & 0 & 0 \\
 1-p & p & p & 1-p \\
\end{array}
\right).
$$
We can see that $\epsilon^{\B \to \A} = 0$, i.e., there is no signaling from Bob to Alice, while $\epsilon^{\A \to \B} = 2p-1$, reflecting the fact that Alice can partially signal to Bob constrained by the parameter $p$. The remaining vertices of this polytope are generated similarly by choosing different deterministic functions constructed from truth tables such as \eqref{eq:fun_tab}.

The procedure used to derive the behavior $\mathrm P$ above can generate local strategies under the most general form of two-way signaling, where, for example, the communication fidelity of the symbol $X=0$ need not be the same as $X=1$. In this work, we consider the fidelities of $X=0$, $X=1$, $Y=0$ and $Y=1$ to be $p,q,r$ and $s$, respectively, with $p,q,r,s \in [0,1]$. Specifically,
\begin{equation*}
\begin{aligned}
    &\p_{\tilde{X}|X}(0|0) = p, \ \ \ \p_{\tilde{X}|X}(1|1) = q \\
    &\p_{\tilde{Y}|Y}(0|0) = r \ \ \text{and} \ \ \p_{\tilde{Y}|Y}(1|1) = s.
\end{aligned}
\end{equation*}
We denote by $\mathbb{S}_{(p,q),(r,s)}$ the corresponding polytope of all distributions obtained by this channel model. In Appendix~\ref{Appendix::FullSigVerts}, we present the vertices of this polytope up to equivalence of local relabelings for both one-way and two-way signaling scenarios. 

\subsection{Facets of the LHV polytope}

In \cref{app:all_facets}, we present the facets enumerated for certain families of the parameters $p,q,r$ and $s$. Note that since the vertices are parameterised by continuous variables $(p,q,r,s)$, one cannot use facet enumeration software to generate the facets directly. In this work, we have developed an interpolation technique for doing so. First, for certain fixed values of the parameters, we used the facet enumeration software PANDA~\cite{LORWALD2015297} to generate the facets. Then, for each fixed set of values for these parameters, we grouped them into equivalent relabeling classes and ordered them according to the set of extremal local strategies that saturate them. This helps us to track how a facet inequality changes with changing values of the parameters. From this, we guess the analytic form of these facets and ensure that the analytic versions of the set of local extremal strategies saturate them. By convexity, every local strategy must also satisfy the inequality. Finally, since in every case the number of local extremals that saturate each facet inequality is greater than or equal to the dimension of the probability space, we ensure that we have the correct form of the facet inequality. In \cref{Table::FacetClassification}, we provide a quantitative summary of the facets derived for each signaling polytope considered. 

\begin{widetext}

        \begin{table}[h]
    \centering 
    \begin{tabular}{|c|c|c|c|c|c|} \hline
        Model & Description & $\#$ Classes  & $\#$ Interesting Classes & Reference \\ \hline \hline
     \vphantom{\Big(}  $\mathbb{S}_{(p,p),(1/2,1/2)}$  \vphantom{\Big(}& $\substack{\A \to \B \\ \text{Equal weights}}$    & 13  & 2 & \vphantom{\Big(}  Appendix~\ref{Appendix::one_way}  \vphantom{\Big(} \\ \hline
        \vphantom{\Big(}  $\mathbb{S}_{(p,p),(p,p)} $  \vphantom{\Big(} & $\substack{\A \to \B \text{ and } \B \to \A \\ \text{Equal weights}}$    &  15 &  3 & \vphantom{\Big(} Appendix~\ref{Appendix::G_One_Way} \vphantom{\Big(} \\ \hline
           $\mathbb{S}_{(p,1),(r,1)}$ & $\substack{\A \to \B \text{ and } \B \to \A \\ \text{Varying weights} \\ \text{Fixed bit per party perfectly sent}}$ &  29  & 20 & Appendix~\ref{Appendix::one_symbol_perfect}\\ \hline
           $\mathbb{S}_{(p,1)^*,(p,1)^*}$  & $\substack{\A \to \B \text{ and } \B \to \A \\ \text{Varying weights} \\ \text{Random bit per party perfectly sent}}$  &  1043  &  5  & Appendix~\ref{Appendix::random_symbol_perfect}\\ \hline
    \end{tabular}
    \caption{$\mathbb{S}_{(p,q),(r,s)}$ denotes the set of behaviors obtained using local strategies and a noisy channel model where the symbols $X=0$, $X=1$, $Y=0$ and $Y=1$ are sent with fidelities $p,q,r$ and $s$ respectively. $\#$ Classes denote the number of equivalence classes up to local relabelings and $\#$ Interesting Classes denote the number of equivalence classes for which there exists numerical evidence that non-signaling quantum correlations violate them.}
    \label{Table::FacetClassification}
\end{table}

\end{widetext}

\subsection{Channel model corresponding to general parameter independence} \label{sec:strong_sig}

A particular variation of the noisy channel model of interest is when one of Alice's symbols, say $X=1$, is always perfectly transmitted to Bob, but the other symbol $X=0$ is transmitted with a fidelity $p\in[0,1)$. Similarly, Bob's symbol $Y=0$ is transmitted to Alice with fidelity $r\in[0,1)$, while the symbol $Y=1$ is transmitted perfectly. The local polytope in this scenario is denoted $\mathbb{S}_{(p,1),(r,1)}$, and we present its facets in \cref{Appendix::one_symbol_perfect}. Note that choosing $p = r = 0$ recovers the non-signaling scenario, since a constant symbol ($X=1$ for Alice and $Y=1$ for Bob) is transmitted to the other party, regardless of the actual input. 

In a more general version of this signaling scenario, one could consider mixing different combinations of $\mathbb{S}_{(p,1), (r,1)}, \, \mathbb{S}_{(p,1),(1,r)}, \, \mathbb{S}_{(1,p),(r,1)}$ and $\mathbb{S}_{(1,p),(1,r)}$, i.e., mixing which of the two symbols will be perfectly transmitted to the other party, while the remaining symbol is transmitted with fidelity $p$ and $r$ for Alice and Bob, respectively. We denote this set by

\begin{multline}
    \mathbb{S}_{(p,1)^*,(r,1)^*} = \mathrm{ConvHull}\Big[ \mathbb{S}_{(p,1), (r,1)} \\ \cup \mathbb{S}_{(p,1),(1,r)} \cup  \mathbb{S}_{(1,p),(r,1)} \cup \mathbb{S}_{(1,p),(1,r)}\Big]. \label{eq:poly_star}
\end{multline}
$\mathbb{S}_{(p,1)^*,(r,1)^*}$ is a polytope whose vertices are equal to the union of the vertices of each polytope in the convex hull. In \cref{Appendix::random_symbol_perfect} we provide two of its facet inequalities for which we have numerical evidence that they are violated by a non-signaling quantum behavior for all $p,r<1$ (cf. \cref{sec:2way_exp}). 

We now prove that the polytope $\mathbb{S}_{(p,1)^*,(r,1)^*}$ is equal to the set of LHV behaviors achievable under a general relaxation of parameter independence, $\mathbf{S}_{(\epsilon^{\A \to \B},\epsilon^{\B \to \A})}$ (see \cref{Def::GenPD}), when setting $p=\epsilon^{\B \to \A}$ and $r=\epsilon^{\A \to \B}$. This is formally stated below. 

\begin{lemma} Let $p,r\in[0,1)$, and $\mathbb{S}_{(p,1)^*,(r,1)^*}$ be the set of LHV behaviors obtained under the asymmetric signaling channel, as defined in \cref{eq:poly_star}. Let $\mathbf{S}_{(p,r)}$ denote the set of LHV behaviors obtained under a general relaxation of parameter independence, as defined below \cref{Def::GenPD}. Then
    \begin{equation*}
        \mathbb{S}_{(p,1)^*,(r,1)^*} = \mathbf{S}_{(p,r)}. 
    \end{equation*}
\label{Theorem:modelconnection}
\end{lemma}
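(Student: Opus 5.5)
The plan is to prove the two inclusions separately. Throughout I read the index of $\mathbf{S}_{(p,r)}$ so that $p$ bounds the signaling $\A\to\B$ and $r$ bounds $\B\to\A$, which is the reading consistent with the channel fidelities; if the convention below the statement is meant literally, the roles of $p$ and $r$ are swapped.

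The first observation is that $\mathbf{S}_{(p,r)}$ is itself a polytope. Conditioned on $(x,y)$, any joint distribution over $(a,b)$ is a mixture of product (deterministic) distributions, so the local-hidden-variable form places no restriction. Hence $\mathbf{S}_{(p,r)}$ is simply the set of behaviors satisfying positivity, normalization and the linear constraints of \cref{Def::GenPD}. Because outputs are binary, these constraints reduce to
\begin{equation*}
|\p(b{=}0|0,y)-\p(b{=}0|1,y)|\le p \quad\text{and}\quad |\p(a{=}0|x,0)-\p(a{=}0|x,1)|\le r .
\end{equation*}

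\emph{Inclusion $\mathbb{S}_{(p,1)^*,(r,1)^*}\subseteq\mathbf{S}_{(p,r)}$.} Since $\mathbf{S}_{(p,r)}$ is convex, it suffices to check the vertices of each of the four polytopes in \cref{eq:poly_star}, i.e.\ the behaviors of \cref{eq:p_sig_model_new} for deterministic $f_k,f_l$. Take a vertex of $\mathbb{S}_{(p,1),(r,1)}$. For $x=1$ we have $\tilde x=1$ with certainty, so $\p(b|1,y)=\delta_{b,f_l(1,y)}$. For $x=0$ we get $\p(b|0,y)=p\,\delta_{b,f_l(0,y)}+(1-p)\,\delta_{b,f_l(1,y)}$. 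The difference of the two marginals therefore has modulus at most $p$. The same argument applied to Alice gives the bound $r$. The three other channel assignments follow by relabeling which symbol is sent perfectly. Independence of the two channels is irrelevant here, since only marginals enter the constraints.

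\emph{Inclusion $\mathbf{S}_{(p,r)}\subseteq\mathbb{S}_{(p,1)^*,(r,1)^*}$.} I would characterize the vertices of $\mathbf{S}_{(p,r)}$ and show each one is a convex combination of channel vertices. For fixed $y$, Bob's pair of marginals $(u,v)=(\p(b{=}0|0,y),\p(b{=}0|1,y))$ ranges over $[0,1]^2\cap\{|u-v|\le p\}$. The vertices of this hexagon are
\begin{equation*}
(0,0),\ (1,1),\ (p,0),\ (0,p),\ (1,1-p),\ (1-p,1).
\end{equation*}
Each is produced by a deterministic Bob function $h(\tilde x,y)$ under either the $(p,1)$ or the $(1,p)$ channel. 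For instance, $(p,0)$ arises from $h(\tilde x,y)=\tilde x$ under $(p,1)$, and $(0,p)$ from the same kind of function under $(1,p)$; this is exactly why the starred convex hull is needed. The analogous statement holds for Alice's marginals with $r$.

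Next I would use that a vertex of $\mathbf{S}_{(p,r)}$ fixes extreme marginals and, in each cell $(x,y)$, an extreme coupling of them: either the comonotone value $\p(0,0|x,y)=\min(\alpha,\beta)$ or the antimonotone value $\max(0,\alpha+\beta-1)$. Using the permutation symmetries (relabel inputs, outputs and parties), I would reduce to a small list of vertex types. In cells where at most one party has a fractional marginal, the coupling is forced to be a product, and the vertex is directly a channel vertex.

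\emph{Main obstacle.} The hard case is a cell where both marginals are fractional, e.g.\ Alice's equals $r$ and Bob's equals $p$, with a comonotone coupling $\p(0,0|x,y)=\min(p,r)$. A single channel vertex gives the product $pr$, because the two channels are independent. So such a vertex must be written as a mixture of channel vertices that differ in their deterministic functions. The point is that the $\min$ coupling can be obtained by mixing strategies with weights $\min(p,r)$, $|p-r|$ and $1-\max(p,r)$, in which in each branch only one party "uses" its channel. What must be checked is that the entries of the other three cells remain consistent across the branches. I would first attempt this decomposition explicitly for each vertex type. If that becomes unwieldy, the fallback is the dual route: for an arbitrary functional $f$, show that the linear program $\max_{\mathbf{S}_{(p,r)}} f$ attains its optimum at a point already realized in the hull. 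This can be done by comparing the finitely many candidate optima, which are affine in $p$ and $r$, cell by cell.
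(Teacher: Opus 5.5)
Your first inclusion is correct, and it is the ``direct calculation'' half of the paper's proof. The gap is in the reverse inclusion. Neither your explicit decomposition nor the LP fallback can close it, because the reverse inclusion is false. Under your own (literal, behavior-level) reading this is immediate. As you note, the hidden-variable form imposes nothing, so $\mathbf{S}_{(0,0)}$ is the whole non-signaling polytope and contains the PR box $\p(a,b|x,y)=\tfrac12\delta_{a\oplus b,xy}$. By contrast, $\mathbb{S}_{(0,1)^*,(0,1)^*}$ is the local polytope. The PR box is also a vertex of $\mathbf{S}_{(0,0)}$ with all marginals equal to $1/2$, so your premise that vertices have extreme marginals fails as well.

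For $p,r\in(0,1)$, the vertex you flag is not merely hard to decompose; it lies outside the hull. Let $P^*$ output $(1,1)$ with certainty in cells $(0,1),(1,0),(1,1)$, and set $\p(00|00)=m:=\min(p,r)$, $\p(01|00)=r-m$, $\p(10|00)=p-m$, $\p(11|00)=1-\max(p,r)$. Then $P^*\in\mathbf{S}_{(p,r)}$. Any channel vertex entering a decomposition of $P^*$ must also output $(1,1)$ in those three cells. Write $f$ for Alice's deterministic response and $g$ for Bob's.
\begin{itemize}
\item If Alice's channel sends $Y=1$ perfectly, $\p_A(1|0,1)=1$ forces $f(0,1)=1$, hence $\p_A(0|0,0)=r\,\delta_{0,f(0,0)}\le r$.
\item If it sends $Y=0$ perfectly, the same condition forces $f(0,0)=1$ (as $1-r>0$), hence $\p_A(0|0,0)=0$.
\end{itemize}
Cell $(1,0)$ likewise gives $\p_B(0|0,0)\le p$. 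Channel vertices are products cell by cell, so every point of the hull that agrees with $P^*$ on the three deterministic cells has $\p(00|00)\le pr<\min(p,r)$. The two parties' channel noises are independent, so comonotone couplings of fractional marginals cannot be reached. Concretely, a branch of your mixture that outputs $(0,0)$ with certainty in cell $(0,0)$, while agreeing with $P^*$ elsewhere, would signal with strength $1$.

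The paper's proof begins by asserting that extreme points of $\mathbf{S}_{p,p}$ are products. This is true only under the per-$\lambda$ reading of \eqref{eq:sig1}; $P^*$ refutes it under yours. The equality fails under the per-$\lambda$ reading too, at a step you also took for granted: that a vertex whose cells each have at most one fractional marginal is ``directly a channel vertex.'' Each polytope in the hull fixes which symbol of $Y$ reaches Alice perfectly, for both of her inputs at once. Let Bob always output $0$, and let Alice's pairs $(\p_A(0|x,0),\p_A(0|x,1))$ be $(r,0)$ for $x=0$ and $(0,r)$ for $x=1$. This single product strategy obeys the constraints, so it lies in $\mathbf{S}_{(p,r)}$ under both readings. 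Yet the same two-case argument shows that every channel vertex with $\p_A(0|0,1)=\p_A(0|1,0)=0$ has $\p_A(0|0,0)+\p_A(0|1,1)\le r$, whereas this behavior gives $2r$.

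The paper's argument misses this because it uses one sign $t$ for both values of $x$ (and one $u$ for both values of $y$). Moreover, its Case~1 inversion yields $\tilde{\p}_A(a|x,0)=1+\p_A(a|x,1)$, which is a valid response only when $\p_A(a|x,1)=0$. So neither your plan nor the paper's proof establishes more than $\mathbb{S}_{(p,1)^*,(r,1)^*}\subseteq\mathbf{S}_{(p,r)}$. The equality fails whenever $p$ or $r$ lies in $(0,1)$ under both readings, and for all $p,r\in[0,1)$ under yours.
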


\noindent See \cref{Proof::Theorem_Model_connection} for proof.

\subsection{Connection to joint relaxations of measurement and parameter independence}
So far, we have focused on relaxations of parameter independence. However, given outcome independence holds, our techniques apply to relaxations of measurement independence or a joint relaxation of both measurement and parameter independence. This is due to the fact that the set of behaviors $\p(A,B|X,Y)$ obtained from a measurement dependent but parameter and outcome independent LHV model can also be realized by a LHV model admitting parameter dependence but measurement and outcome independence. Since the set of such behaviors is a convex polytope and the fact that the extremal behaviors in the set $\mathbf{S}_{(\epsilon^{\A \to \B},\epsilon^{\B\to\A})}$ are either non-signaling and local, or signaling and non-local, the extremal behaviors obtained upon relaxing measurement independence only must also have this property. A stronger statement can also be made for a joint relaxation of both measurement and parameter independence, as stated below.

\begin{lemma} \label{Lemma::MD_PD_and_PD}
    Let $A,B,X, Y$ and $\Lambda$ be random variables where $A,B,X$ and $Y$ are binary. Then for any behavior $\p(A,B|X,Y)$ admitting
    \begin{equation*}
        \p(a,b|x,y) = \sum_{\lambda} \p_{\Lambda}(\lambda|x,y)\p_{A}(a|x,y,\lambda)\p_{B}(b|x,y,\lambda),
    \end{equation*}
    where $\p_{\Lambda},\,\p_{A}$ and $\p_{B}$ are distributions, there exists a random variable $\Lambda'$ and tuple of distributions $(\p_{\Lambda'}',\p_{A}',\p_{B}')$ such that 
    \begin{equation*}
        \p(a,b|x,y) = \sum_{\lambda'} \p'_{\Lambda'}(\lambda')\p'_{A}(a|x,y,\lambda')\p'_{B}(b|x,y,\lambda').
    \end{equation*}
\end{lemma}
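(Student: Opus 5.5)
The plan is to absorb the setting-dependence of the hidden variable into the response functions. This works because the response functions in the statement are already allowed to depend on both $x$ and $y$. The idea is to replace $\Lambda$ by four independent copies, one per input pair, each drawn from its own conditional distribution, and to let each party's response read off the copy indexed by the actual inputs.

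Concretely, I will define $\Lambda' = (\Lambda_{00},\Lambda_{01},\Lambda_{10},\Lambda_{11})$, taking values $\lambda' = (\lambda_{00},\lambda_{01},\lambda_{10},\lambda_{11})$. Its distribution is the product of the conditionals,
\begin{equation*}
\p'_{\Lambda'}(\lambda') = \prod_{x,y\in\{0,1\}} \p_{\Lambda}(\lambda_{xy}|x,y),
\end{equation*}
which is manifestly independent of the inputs. Both parties then use the component selected by the inputs:
\begin{equation*}
\p'_{A}(a|x,y,\lambda') = \p_{A}(a|x,y,\lambda_{xy}), \qquad \p'_{B}(b|x,y,\lambda') = \p_{B}(b|x,y,\lambda_{xy}).
\end{equation*}
These are valid distributions for every fixed $(x,y,\lambda')$.

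The verification fixes $(x,y)$ and computes $\sum_{\lambda'}\p'_{\Lambda'}(\lambda')\p'_A(a|x,y,\lambda')\p'_B(b|x,y,\lambda')$. The integrand depends on $\lambda'$ only through $\lambda_{xy}$, so the sums over the three other components factor out. Each of those factors is $\sum_{\lambda_{x'y'}}\p_\Lambda(\lambda_{x'y'}|x',y')=1$. What remains is $\sum_{\lambda_{xy}}\p_\Lambda(\lambda_{xy}|x,y)\p_A(a|x,y,\lambda_{xy})\p_B(b|x,y,\lambda_{xy}) = \p(a,b|x,y)$, as required. Outcome independence is preserved because, conditioned on $(x,y,\lambda')$, Alice's and Bob's outputs are still the product of the original response functions evaluated at the same $\lambda_{xy}$.

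I do not expect a genuine obstacle. The only point needing care is that both parties must use the same component $\lambda_{xy}$, so that the product form, and hence outcome independence, survives. If $\Lambda$ is not finite, the product distribution is replaced by the product measure and the sums by integrals, with Fubini justifying the marginalisation. As a remark, the construction preserves any $\lambda$-wise bounds of the form \eqref{eq:lam_dep} only per component, not across components. So the lemma should be read as a statement about the sets of behaviors, with the bounds \eqref{Def::GenPD} then checked at the level of $\p(a,b|x,y)$.
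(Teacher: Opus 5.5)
Your proposal is correct and is essentially the paper's proof. The paper likewise takes $\Lambda'$ to be the tuple of four independent components distributed as $\p_\Lambda(\cdot|x,y)$, has both parties read off the same component $\lambda_{x,y}$, and sums out the other three. Your closing remark, that $\lambda$-wise signaling bounds are not preserved and so the constraints must be checked at the level of $\p(a,b|x,y)$, is also how the paper uses this lemma in the proof of Lemma~\ref{Lemma::jointbound}.
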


\noindent See Appendix~\ref{Appendix::MD_PD_and_PD} for proof. 

Lemma~\ref{Lemma::MD_PD_and_PD} shows that the set of behaviors upon joint relaxations of measurement and parameter independence coincides with another set of behaviors where only parameter independence is relaxed. It is then natural to ask what would be the effective signaling in this set as a function of the parameters for measurement and  parameter independence relaxations. We provide an analytical formula for this next.

\begin{lemma} \label{Lemma::jointbound}    
    Let $0\leq l \leq 1/4$ and $0 \leq \epsilon \leq 1$. Let $\mathbf{S}^{l}_{(\epsilon,\epsilon)}$ be the set of behaviors over binary random variables $X,Y,A$ and $B$ with relaxed measurement and parameter dependence parameters $l$ and $\xi^{\A \to \B}=\xi^{\B \to \A}=\epsilon$, respectively, as defined above Eq~\eqref{eq:lam_dep}. Then $\mathbf{S}^{l}_{(\epsilon,\epsilon)} \subseteq \mathbf{S}_{(\kappa,\kappa)}$, with 
\begin{equation*}
    \kappa = \epsilon +(1-\epsilon)\frac{1-4l}{1-2l}.
\end{equation*}
\end{lemma}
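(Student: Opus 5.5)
The plan is to reduce membership in $\mathbf{S}_{(\kappa,\kappa)}$ to a bound on the \emph{observed} signaling of the behavior. By \cref{Lemma::MD_PD_and_PD}, any $\mathrm{P}\in\mathbf{S}^{l}_{(\epsilon,\epsilon)}$ already admits a measurement independent, outcome independent decomposition $\p(a,b|x,y)=\sum_{\lambda'}\p'(\lambda')\p'(a|x,y,\lambda')\p'(b|x,y,\lambda')$. Hence it lies in $\mathbf{S}_{(\kappa,\kappa)}$ as soon as its marginals satisfy \eqref{Def::GenPD} with $\epsilon^{\A\to\B}=\epsilon^{\B\to\A}=\kappa$. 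By the symmetry of the hypotheses it suffices to treat $\A\to\B$, i.e.\ to show $|\p(b|x,y)-\p(b|x',y)|\leq\kappa$.

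Write $w_\lambda=\p(\lambda|x,y)=4\p(\lambda)\p(x,y|\lambda)$ and $w'_\lambda=\p(\lambda|x',y)$, and set $g_\lambda=\p(b|x,y,\lambda)$ and $g'_\lambda=\p(b|x',y,\lambda)$. Then $\p(b|x,y)-\p(b|x',y)=\sum_\lambda(w_\lambda g_\lambda-w'_\lambda g'_\lambda)$, with $g,g'\in[0,1]$ and $|g_\lambda-g'_\lambda|\leq\epsilon$ by \eqref{eq:lam_dep}. The plan is to maximize this expression over $g,g'$ pointwise in $\lambda$, with $w,w'$ held fixed. On $\{w_\lambda\geq w'_\lambda\}$ take $g=1,\ g'=1-\epsilon$; on the complement take $g=\epsilon,\ g'=0$. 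The resulting maximum is $\epsilon+(1-\epsilon)D$, where $D=\frac12\sum_\lambda|w_\lambda-w'_\lambda|$ is the total variation distance between the two conditional hidden-variable distributions. The same bound holds for the negative of the difference. Since $\epsilon+(1-\epsilon)D$ is increasing in $D$, it remains to show $D\leq\frac{1-4l}{1-2l}$.

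For the distance I will use $D=1-\sum_\lambda\min(w_\lambda,w'_\lambda)$. Since $\p(x,y|\lambda)\geq l$ for every $(x,y)$, we have $\min(w_\lambda,w'_\lambda)\geq 4l\,\p(\lambda)$, so $D\leq1-4l$. This is at most $\frac{1-4l}{1-2l}$, because $0<1-2l\leq1$ for $0\leq l\leq 1/4$. Hence $\epsilon+(1-\epsilon)D\leq\kappa$, and the same argument with the roles of the parties exchanged gives the $\B\to\A$ bound.

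The step needing the most care is the pointwise optimization in the first step. One must check that the two regimes ($w\gtrless w'$) indeed give the stated extremal choices under the box constraints. Because the bound $1-4l$ is tighter than the stated $\kappa$, the argument may be cruder than the authors' computation. If my reading of the normalisation of $\p(x,y|\lambda)$ differs from theirs, I would instead bound $D$ by a small linear program over $(\p(x,y|\lambda))_{x,y}$. That program would keep the constraints $l\leq\p(x,y|\lambda)$, $\sum_{x,y}\p(x,y|\lambda)=1$ and $\sum_\lambda\p(\lambda)\p(x,y|\lambda)=1/4$, and I would solve it by considering hidden-variable values with extremal weights.
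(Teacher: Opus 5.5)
Your proof is correct. It reaches the inclusion by a more direct route than the paper, and with a sharper constant.

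The paper's proof is modular. It first shows (\cref{Lemma::LocalPart}) that any $\lambda$-level response with $|\p(a|x,y,\lambda)-\p(a|x,y',\lambda)|\le\epsilon$ splits as $\epsilon\,v(a|x,y,\lambda)+(1-\epsilon)\,u(a|x,\lambda)$, with $u$ independent of $y$. After the triangle inequality, it bounds the $v$-part trivially by $1$. It bounds the $u$-part by \cref{Lemma::DiffBound}, namely $|\sum_\lambda(q(\lambda|x,y)-q(\lambda|x,y'))g(\lambda)|\le\frac{1-4l}{1-2l}$ for every $g:\Lambda\to[0,1]$. That lemma is proved from the likelihood-ratio bounds $\frac{l}{1-3l}\le q(\lambda|x,y)/q(\lambda|x,y')\le\frac{1-3l}{l}$, plus a max--min over the $q(\cdot|x,y')$-mass of the set where the ratio exceeds $1$. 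The proof then finishes with \cref{Lemma::MD_PD_and_PD}, exactly as you do.

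Your pointwise linear optimisation replaces both auxiliary lemmas at once, and it is exact for fixed hidden-variable conditionals. The step you flag takes two lines: for $w\ge w'$ write $wg-w'g'=w'(g-g')+(w-w')g$, and for $w<w'$ write $wg-w'g'=w(g-g')-(w'-w)g'$.

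Your overlap estimate $\sum_\lambda\min(w_\lambda,w'_\lambda)\ge 4l$ uses the absolute bounds $w_\lambda,w'_\lambda\ge 4l\,\p(\lambda)$, which the ratio argument discards. So you get $D\le 1-4l$, strictly smaller than $\frac{1-4l}{1-2l}$ for $0<l<1/4$. With only the lower bound $\p(x,y|\lambda)\ge l$ imposed, as in \cref{Lemma::DiffBound}, this value is attained. An example is two equiprobable hidden values assigning $(\tfrac{1}{2}-l,\,l)$ and $(l,\,\tfrac{1}{2}-l)$ to the two compared input pairs and $\tfrac{1}{4}$ to the other two.

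You therefore prove the stronger inclusion into $\mathbf{S}_{(\kappa',\kappa')}$, with $\kappa'=\epsilon+(1-\epsilon)(1-4l)\le\kappa$. You also avoid the divisions by $l$, $\epsilon$ and $1-\nu$, which leave boundary cases implicit in the paper's lemmas. What the paper's route buys is a reusable structural statement: the splitting of an $\epsilon$-signalling response into a non-signalling and a signalling part. For the inequality itself, your argument is shorter and tight.

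Your normalisation $\p(\lambda|x,y)=4\p(\lambda)\p(x,y|\lambda)$ agrees with the paper's, so the fallback linear program is not needed.
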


\noindent See Appendix~\ref{Appendix::MD_PD_and_PD} for proof. 

Note the following consistencies: when either $l= 0$ or $\epsilon =1$, i.e., either measurement or parameter independence is perfectly relaxed, $\kappa=1$ in the effective model, leading to the set of all behaviors. On the other hand, when $l = 1/4$ and $\epsilon=0$, $\kappa=0$, meaning that one recovers the non-signaling LHV set when neither measurement or parameter independence is relaxed.

\subsection{Non-signaling subspaces} \label{sec:NS_sub}
Since the set of non-signaling quantum behaviors is a strict subset of the set of all non-signaling behaviors, it might be relevant to understand whether there are non-signaling behaviors that can be realized by LHV models in a two-way signaling scenario, but cannot be realized with one-way signaling. Conversely, if a one-way signaling scenario can realize the same set of non-signaling behaviors as a two-way scenario, it would show that this particular two-way scenario is redundant when it comes to simulating quantum behaviors. In this sub-section, we claim this to be the case for the two-way scenario $\mathbb{S}_{(p,p),(p,p)}$. We do so by providing an algorithm to calculate the largest non-signaling set realizable within a signaling polytope $\mathbb{S}$, which we now describe.

In the minimal Bell scenario, all non-signaling distributions lie on the intersection of the  hyperplanes defined by the vectors $ \mathrm{NS}_{0}^{ \A \to \B}$, $ \mathrm{NS}_{1}^{ \A \to \B}$, $ \mathrm{NS}_{0}^{ \B \to \A}$ and  $ \mathrm{NS}_{1}^{ \B \to \A}$. Now, for each behavior  $\mathrm{P}$, there are 16 probabilities, knowing all of which is not necessary. The normalization condition implies that within each block (that is, a pair of inputs for Alice and Bob, and corresponding quadrant of the matrix \eqref{eq:behav}), the probabilities sum up to 1. This reduces the dimension by 4. Now, if a probability distribution is non-signaling, then the sum of the first two probabilities in every row and column is the same as the sum of the last two probabilities. This further reduces the dimension by 4. Therefore, a signaling probability distribution must have true dimension between 9 and 12. In particular, the set containing this distribution can be seen as an embedding in $\mathbb{R}^d$ where $9 \leq d \leq 12$. Given such a set, one can look at its projection on the $d-1$ hyperplane described by one of the non-signaling constraints above. If $d-1 \neq 8$, one can then further project it down by another non-signaling constraint, and continue until the dimension reaches 8~\footnote{One might lose distributions if in the first step one projects down to $d-2$.}. This intuition can be formalized into the following algorithm.

\begin{itemize}
    \item \textbf{Step 1:} Take the set of vertices, $\mathrm{Vert}\left[\mathbb{S}\right]$, of the set $\mathbb{S}$ and pick a non-signaling constraint vector, say $ \mathrm{NS}_{0}^{ \A \to \B}$. 
    \item \textbf{Step 2:} Take  $\mathbb{S}^*  \subseteq \mathrm{Vert}\left[\mathbb{S}\right]$ such that for every $\mathrm{P} \in \mathbb{S}^*$, $\left<\mathrm{NS}_{0}^{ \A \to \B},\mathrm{P}\right> \neq 0$.
    \item \textbf{Step 3:} Collect all pairs $(\mathrm{P}_j,\mathrm{P}_k)$, such that $\mathrm{P}_{j},\mathrm{P}_{k} \in \mathbb{S}^*$. 
    \item \textbf{Step 4:} For every pair $(\mathrm{P}_j,\mathrm{P}_k)$, define a line segment $\tilde{\mathrm{P}}_{j,k,\alpha} \coloneqq \alpha \mathrm{P}_i + (1-\alpha) \mathrm{P}_j$ with $0 \leq \alpha \leq 1$ and solve for $\alpha'$ such that $\left<\mathrm{NS}_{0}^{ \A \to \B},\tilde{\mathrm{P}}_{j,k,\alpha'}\right> = 0$. Define the set of all such intersection points as $\Pi_{\mathrm{NS}_{0}^{ \A \to \B}}\left[\mathbb{S}^*\right]$.  Ignore cases when no solutions exist. 
    \item \textbf{Step 5:} Calculate the set 
    $$
    \big(\mathrm{Vert}\left[\mathbb{S}\right] \setminus \mathbb{S}^*\big) \bigcup \Pi_{\mathrm{NS}_{0}^{ \A \to \B}}\left[\mathbb{S}^*\right].
    $$
    \item \textbf{Step 6:} If this set is signaling, repeat Step 1 through Step 4 starting with this set and a different non-signaling constraint vector.
\end{itemize}

\noindent We denote by $\Pi_{\mathrm{NS}}[\mathbb{S}]$ the non-signaling subset of $\mathbb{S}$. We now claim that there exists signaling channel models where two-way signaling does not provide leverage in simulating non-signaling correlations over one-way signaling. 

\begin{claim}     
    Let $p,q,r,s\in [1/2,1)$ and $\mathbb{S}_{(p,q),(r,s)}$ be the set of LHV behaviors obtained under a pair of signaling channels as defined in \cref{sec:verts}, and let $$\mathbb{S}_{\mathrm{1-W}}^p \coloneqq \mathrm{ConvHull}\left[\mathbb{S}_{(p,p),(1/2,1/2)} \  \cup \ \mathbb{S}_{(1/2,1/2),(p,p)} \right].$$
    Then $ \Pi_{\mathrm{NS}}\left[\mathbb{S}_{\mathrm{1-W}}^p\right] = \Pi_{\mathrm{NS}}\left[\mathbb{S}_{(p,p),(p,p)}  \right].$   
    \label{theorem:nsprojection}
\end{claim}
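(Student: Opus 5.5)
The plan is to prove the two inclusions separately. The inclusion $\Pi_{\mathrm{NS}}[\mathbb{S}_{\mathrm{1-W}}^p]\subseteq\Pi_{\mathrm{NS}}[\mathbb{S}_{(p,p),(p,p)}]$ should be immediate. Every vertex of $\mathbb{S}_{(p,p),(1/2,1/2)}$ is obtained from \eqref{eq:p_sig_model_new} with a response function $f_k(x,\tilde y)$ for Alice. If instead we choose $f_k$ independent of $\tilde y$, the same behavior arises in the channel setting $(p,p),(p,p)$. This is because averaging a $\tilde y$-independent function over $\p_{\tilde Y|Y}$ gives the same result whatever the channel. Conversely, a response function depending on $\tilde y$ under the useless channel $(1/2,1/2)$ averages to a mixture of $\tilde y$-independent deterministic functions. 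Hence $\mathbb{S}_{(p,p),(1/2,1/2)}\subseteq\mathbb{S}_{(p,p),(p,p)}$, and symmetrically for the Bob-to-Alice polytope. Since the right-hand side is convex, the convex hull $\mathbb{S}_{\mathrm{1-W}}^p$ is also contained in it. Intersecting both sides with the non-signaling subspace (the common zero set of $\mathrm{NS}_{0,1}^{\A\to\B}$ and $\mathrm{NS}_{0,1}^{\B\to\A}$) preserves the inclusion. This inclusion holds in fact for any $p$.

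For the reverse inclusion I would work in the $8$-dimensional non-signaling affine subspace. First, I run the projection algorithm above symbolically in $p$ on the vertex list of $\mathbb{S}_{\mathrm{1-W}}^p$ from Appendix~\ref{Appendix::FullSigVerts}. This gives the vertices of $\Pi_{\mathrm{NS}}[\mathbb{S}_{\mathrm{1-W}}^p]$, and from these its facet description, obtained with PANDA at several fixed $p$ and then interpolated exactly as for the facets in \cref{app:all_facets}. By the relabeling symmetry the facets come in a small number of classes: positivity, CHSH-type inequalities with a $p$-dependent bound $\eta(p)$, and possibly a few others. Write each such facet as $\langle F,\mathrm{P}\rangle\le\eta(p)$, valid on non-signaling $\mathrm{P}$.

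The key step is a Farkas-type certificate for each facet class. I look for coefficients $c_0,c_1,d_0,d_1$, possibly depending on $p$, such that the lifted functional
\begin{equation*}
F' = F+\sum_{y}c_y\,\mathrm{NS}_y^{\A\to\B}+\sum_x d_x\,\mathrm{NS}_x^{\B\to\A}
\end{equation*}
satisfies $\langle F',V\rangle\le\eta(p)$ for every vertex $V$ of $\mathbb{S}_{(p,p),(p,p)}$. On non-signaling behaviors $F'$ and $F$ agree, so validity on all two-way vertices implies, by convexity, that every non-signaling element of $\mathbb{S}_{(p,p),(p,p)}$ satisfies the facet. Since the facets of $\Pi_{\mathrm{NS}}[\mathbb{S}_{\mathrm{1-W}}^p]$ cut out exactly that set within the non-signaling subspace, this gives the reverse inclusion. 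I would find the $c,d$ numerically by a linear program at fixed $p$ and then guess their closed form. The two-way facet list in Appendix~\ref{Appendix::G_One_Way} is a good source of candidates: facets of $\mathbb{S}_{(p,p),(p,p)}$ whose restriction to the non-signaling subspace coincides with a one-way facet directly supply $F'$.

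The main obstacle is the final verification. The vertices of $\mathbb{S}_{(p,p),(p,p)}$ have entries that are products of channel probabilities, such as $p^2$, $p(1-p)$ and $(1-p)^2$, so each check $\langle F',V\rangle\le\eta(p)$ is a polynomial inequality in $p$ that must hold on $[1/2,1)$. After reducing by local relabelings there are still many vertex–facet pairs. I would handle them by factoring each difference $\eta(p)-\langle F',V\rangle$, expecting factors such as $(2p-1)$ and $(1-p)$ that are manifestly nonnegative on the interval. I would also confirm numerically on a fine grid of $p$ that no facet class of the one-way projection has been missed in the interpolation step.
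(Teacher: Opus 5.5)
Your proposal is correct, but it takes a different route from the paper.

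\emph{The paper's route.} The paper works entirely with vertex descriptions. It runs the edge-intersection algorithm of \cref{sec:NS_sub} on each polytope, classifies the resulting points up to relabelings, and writes the non-extremal ones as explicit convex combinations. It uses the fact that the non-signaling section of $\mathbb{S}^p_{\mathrm{1-W}}$ is the convex hull of the sections of the two one-way polytopes. This holds because points of $\mathbb{S}_{(p,p),(1/2,1/2)}$ never signal from Bob to Alice, and points of $\mathbb{S}_{(1/2,1/2),(p,p)}$ never signal from Alice to Bob. It then shows that the section of $\mathbb{S}_{(p,p),(1/2,1/2)}$ is generated by the local deterministic points together with the class \eqref{eq:rep_class}. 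The analogous computation for $\mathbb{S}_{(p,p),(p,p)}$ is only asserted, not carried out.

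\emph{Your route.} You get $\Pi_{\mathrm{NS}}[\mathbb{S}^p_{\mathrm{1-W}}]\subseteq\Pi_{\mathrm{NS}}[\mathbb{S}_{(p,p),(p,p)}]$ with no computation, from $\mathbb{S}_{(p,p),(1/2,1/2)}\subseteq\mathbb{S}_{(p,p),(p,p)}$ (Alice ignoring $\tilde y$). The paper does not state this inclusion separately. You prove the reverse inclusion on the facet side via LP duality, so you never compute the non-signaling section of the two-way polytope.

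\emph{What this buys.} The part the paper omits, successively intersecting the edges of a polytope with up to $256$ vertices with four hyperplanes, becomes a few small LPs over the $256$ deterministic two-way vertices. Their output is a compact, independently checkable certificate. Strong duality guarantees that the multipliers $c_y,d_x$ exist whenever the claim holds, so failing to find them at some $p$ would refute the claim there.

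\emph{What it costs.} Your argument needs the facet list of $\Pi_{\mathrm{NS}}[\mathbb{S}^p_{\mathrm{1-W}}]$ to be complete for every $p\in[1/2,1)$. A missed facet only proves containment in a larger set. PANDA at sample values, interpolation and a grid check do not establish completeness. You should verify exactly that your guessed inequalities cut out a polytope whose vertices are precisely the known generators, on each interval of $p$ where the combinatorial type is constant.

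\emph{A shortcut.} The paper's convex-hull observation spares you the first computational step. The generators of the one-way section are just the local deterministic points, the class \eqref{eq:rep_class}, and its mirror image under exchanging the parties.
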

\noindent See Appendix~\ref{Appendix::NS_Subspace} for a proof sketch.

An implication of Claim \ref{theorem:nsprojection} is that if there exists a non-signaling quantum strategy that lies outside the one-way signaling polytope $\mathbb{S}_{\mathrm{1-W}}^p$, it must also lie outside the two-way signaling polytope $\mathbb{S}_{(p,p),(p,p)}$. Furthermore, the existence of an analogous connection for different signaling models can also be investigated using the algorithm provided, which we leave for future investigation. 

\section{Exposed non-signaling quantum correlations} \label{sec:exposed}

In this section, we focus on two different signaling scenarios from \cref{sec:sig_sec}, and show that for both, there are non-signaling quantum correlations that cannot be simulated locally.  

\subsection{Symmetric one-way signaling}

Recall the symmetric one-way signaling scenario from Alice to Bob in \cref{fig:scenario}, with crossover probability $1-p$. Specifically, the LHV polytope of interest in this setting is denoted $\mathbb{S}_{(p,p),(1/2,1/2)}$ and its facets can be found in \cref{Appendix::one_way}. One facet of particular interest (denoted $\text{F}_{13}$ in \cref{Appendix::one_way}) is given by $\langle \mathrm{P}, F_{p}\rangle \leq 1$, where
$$
F_{p} = \left(
\begin{array}{cc|cc}
 0 & 0 & 1 & 0 \\
 0 & 1 & 0 & 0 \\ \hline
 0 & -\frac{1-p}{p} & 0 & \frac{1-p}{p} \\
 0 & 0 & 0 & 0 \\
\end{array}
\right).
$$
Let us define the following correlators:
\begin{equation*}
    \begin{aligned}
        \langle A_{x,y} \rangle &= \sum_{a,b\in \{0,1\}} (-1)^{a}\p(a,b|x,y) \\
        \langle B_{x,y} \rangle &= \sum_{a,b\in \{0,1\}} (-1)^{b}\p(a,b|x,y)\\
        \langle A_{x}B_{y} \rangle &= \sum_{a,b\in \{0,1\}} (-1)^{a+b}\p(a,b|x,y).
    \end{aligned}
\end{equation*}
Note that for $x,y\in \{0,1\}$, the constraints $\langle A_{x,0} \rangle = \langle A_{x,1} \rangle \equiv \langle A_{x}\rangle $ and $\langle B_{0,y} \rangle = \langle B_{1,y} \rangle \equiv \langle B_{y} \rangle$ are equivalent to the non-signaling constraints. For non-signaling quantum behaviors, we define the observables $A_{x} = \sum_{a}(-1)^{a}M_{a|x}$ and $B_{y} = \sum_{b}(-1)^b N_{b|y}$, which yields $\langle A_{x}\rangle = \tr[(A_{x}\otimes \id_{Q_{B}})\rho]$, $\langle B_{y}\rangle = \tr[(\id_{Q_{B}}\otimes B_{y})\rho]$ and $\langle A_{x}B_{y}\rangle = \tr[(A_{x} \otimes B_{y})\rho]$. The behavior  can be expressed in terms of correlators via the relation
\begin{multline*}
    \p(a,b|x,y) = \frac{1}{4}\Big( 1 + (-1)^{a}\langle A_{x,y} \rangle + (-1)^{b}\langle B_{x,y} \rangle \\+ (-1)^{a+b}\langle A_{x}B_{y} \rangle\Big).
\end{multline*}
A direct substitution shows that $\langle \mathrm{P},F_{p}\rangle \leq 1$ is equivalent to 
\begin{multline*}
      \langle A_{01} - A_{00}\rangle + \frac{1-p}{p} \langle A_{11} - A_{10}\rangle \\ + \langle B_{01} - B_{00} \rangle + \frac{1-p}{p}\langle B_{10} - B_{11} \rangle \\ + \langle A_{0}(B_{0} + B_{1})\rangle + \frac{1-p}{p} \langle A_{1}(B_{0} - B_{1})\rangle \leq 2.
\end{multline*}
By restricting to behaviors that are non-signaling from Bob to Alice, i.e., those that satisfy $\langle A_{x,0} - A_{x,1}\rangle = 0$, we obtain the following expression, along with some properties.   
\begin{proposition} \label{lem:self-test}
    Let $p \in (2/5,1)$ and $T_{p}$ be the Bell functional
    \begin{multline}
        T_{p}(\mathrm{P}) = \langle B_{01} - B_{00} \rangle + \frac{1-p}{p}\langle B_{10} - B_{11} \rangle \\ + \langle A_{0}(B_{0} +B_{1}) \rangle + \frac{1-p}{p} \langle A_{1}(B_{0}  - B_{1} )\rangle. \label{eq:tilt_chsh}
    \end{multline}
    Then the following properties hold:
    \begin{enumerate}
        \item For all non-signaling local behaviors $\mathrm{P}_{\mathrm{L}}^{\mathrm{NS}}$, $T_{p}(\mathrm{P}_{\mathrm{L}}^{\mathrm{NS}}) \leq 2(|2p-1|+1-p)/p$.
        \item For all one-way signaling local behaviors $\mathrm{P}_{\mathrm{L}}^{\mathrm{S}} \in \mathbb{S}_{(p,p),(1/2,1/2)}$, $T_{p}(\mathrm{P}_{\mathrm{L}}^{\mathrm{S}}) \leq 2\max\Big\{p , \Big| -p + 2\frac{(1-p)^2}{p}\Big|\Big\} + 2(1-p)$.
        \item For all non-signaling quantum behaviors $\mathrm{P}_{\mathrm{Q}}^{\mathrm{NS}}$, $T_{p}(\mathrm{P}_{\mathrm{Q}}^{\mathrm{NS}}) \leq 2\sqrt{2}\sqrt{\frac{p}{3p-1}}$.
        \item Up to local isometries, there is a unique non-signaling quantum strategy that achieves $T_{p}(\mathrm{P}_{\mathrm{Q}}^{\mathrm{NS}}) = 2\sqrt{2}\sqrt{\frac{p}{3p-1}}$, given by
        \begin{equation}
        \begin{gathered}
            \rho = \ketbra{\psi_{\theta}}{\psi_{\theta}}, \ \ket{\psi_{\theta}} = \cos(\theta) \ket{00} + \sin(\theta) \ket{11},\\
            A_{0} = \sigma_{X}, \ \ \ A_{1} = \sigma_{Z}, \\
            B_{0} = \cos(\phi) \, \sigma_{Z} + \sin(\phi) \, \sigma_{X}, \\
            B_{1} = -\cos(\phi) \, \sigma_{Z} + \sin(\phi) \, \sigma_{X},
        \end{gathered} \label{eq:qubit_strat}
        \end{equation}
        where
        \begin{equation*}
            \begin{gathered}
                \theta = \frac{1}{2} \mathrm{arctan}\big[f_{1}/g_{1}\big] - \pi/2,\ \
                \phi = \mathrm{arctan}\big[f_{2}/g_{2}\big], \\
                f_{1} = \frac{\sqrt{p(-2 + 5p)}}{1 - 3p}, \ \ g_{1} = \frac{1-2p}{-1+3p},\\
                f_{2} = -\sqrt{\frac{-2+5p}{-2+6p}}, \ \ g_{2} = \sqrt{\frac{p}{-2+6p}}
            \end{gathered}
        \end{equation*}
        when $p \in (2/5,1/2) \cup (1/2,1)$ and $\theta = \phi= -\pi/4$ when $p = 1/2$.
    \end{enumerate}
\end{proposition}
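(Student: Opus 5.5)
Write $t=(1-p)/p$, so that for non-signaling behaviors
$$T_p = \langle B_1\rangle(1-t) - \langle B_0\rangle(1-t) + \langle A_0(B_0+B_1)\rangle + t\langle A_1(B_0-B_1)\rangle,$$
since $\langle B_{01}-B_{00}\rangle+t\langle B_{10}-B_{11}\rangle=(1-t)\langle B_1-B_0\rangle$. This is a tilted CHSH expression with marginal terms on Bob's side only, and I would treat the four items in order.

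\textbf{Items 1 and 2.} For item 1, I maximise over the 16 deterministic non-signaling strategies, since by convexity it suffices to check vertices. Fix $b_0,b_1\in\{\pm1\}$ and choose $a_0,a_1$ optimally. The value is then
$$(1-t)(b_1-b_0)+|b_0+b_1|+t|b_0-b_1|.$$
If $b_0=b_1$ this gives $2$. If $b_0=-b_1$ it gives $2|1-t|+2t$. Note $1-t=(2p-1)/p$, so the claimed bound $2(|2p-1|+1-p)/p$ is exactly $\max\{2,\,2|1-t|+2t\}$, with the two cases compared directly.

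Item 2 is the same computation over the vertices of $\mathbb{S}_{(p,p),(1/2,1/2)}$ listed in Appendix~\ref{Appendix::FullSigVerts}. Since $T_p$ has no Alice-marginal terms, only the correlators and $B_{x,y}$ matter. Bob's response is now a function $f(\tilde x,y)$, and I average it over the channel: $B_{x,y}=p\,f(x,y)+(1-p)f(\bar x,y)$, and similarly for $\langle A_xB_y\rangle$ with $a_x$ deterministic. I would enumerate Bob's 16 functions, choose $a_0,a_1$ optimally, and collect the maxima. The candidates $2p+2(1-p)$ and $2|-p+2(1-p)^2/p|+2(1-p)$ should appear, for instance from $f=\tilde x\oplus y$-type strategies. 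This step is routine but tedious bookkeeping.

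\textbf{Item 3.} I would use the standard technique for tilted CHSH expressions. By Jordan's lemma, since all measurements are projective with binary outcomes, it suffices to consider qubit strategies: a direct sum of qubit blocks, with the value bounded by the maximum over blocks. One can then take $A_0,A_1$ in the $x$--$z$ plane. For fixed Alice observables, the operator
$$\mathcal{B} = A_0\otimes(B_0+B_1) + tA_1\otimes(B_0-B_1) + (1-t)\,\id\otimes(B_1-B_0)$$
groups as $\id\otimes B_0\,(\ldots) + \id\otimes B_1\,(\ldots)$. I would then apply an SOS or operator-norm bound. Concretely, I would look for a decomposition
$$\beta\,\id-\mathcal{B}=\sum_i \lambda_i P_i^\dagger P_i,$$
with $P_i$ linear in $A_x\otimes\id$, $\id\otimes B_y$ and products, where $\beta=2\sqrt{2p/(3p-1)}$. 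Alternatively, I would parametrise the qubit state $\cos\theta\ket{00}+\sin\theta\ket{11}$, Alice's measurement angle, and Bob's angles, and maximise analytically. The stated optimal angles suggest the second route. Setting the gradient to zero should reproduce $f_1,g_1,f_2,g_2$, and the condition $p>2/5$ arises where $\sqrt{p(5p-2)}$ is real, i.e.\ where the stationary point is interior.

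\textbf{Item 4 (main obstacle).} The main obstacle is this self-testing claim. Turning the optimisation into uniqueness up to local isometries requires an explicit SOS decomposition. I would first try to guess the SOS polynomials $P_i$ as the operators annihilating the optimal state, namely $(A_0\otimes\id-\text{combination of }B\text{'s})\ket{\psi}$-type relations, with coefficients fixed by the optimal angles. Matching the coefficients then yields the polynomials. From the relations $P_i\ket\psi=0$, I would derive anticommutation relations for regularised Alice and Bob operators. Finally, I would apply the standard swap-isometry argument, as for tilted CHSH (and matching the family in~\cite{Barizien_2024}), to extract $\ket{\psi_\theta}$ and the stated measurements. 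If the explicit SOS proves unwieldy, I would fall back on the Jordan-block reduction: uniqueness of the qubit optimum within each block, together with strictness of the block bound away from the optimal angles, gives the self-test for projective measurements.
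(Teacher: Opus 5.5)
Your overall route is the paper's: vertex enumeration for items 1--2, and for items 3--4 a sum-of-squares certificate plus the self-testing analysis of Barizien et al., which the paper imports rather than re-derives. Item 1 is complete and correct.

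Item 2 is the same vertex computation the paper does, but the paper organises it more efficiently. On a vertex, $T_p$ is linear in Bob's signs $\beta_{\tilde x,y}$ (your $f(\tilde x,y)$), with coefficients depending only on Alice's signs $\alpha_x$. With $c=(1-p)^2/p$, these coefficients are $-p+c+p\alpha_0+c\alpha_1$, $p-c+p\alpha_0-c\alpha_1$, $(1-p)(\alpha_0+\alpha_1)$ and $(1-p)(\alpha_0-\alpha_1)$ for $\beta_{00},\beta_{01},\beta_{10},\beta_{11}$. Four cases for $(\alpha_0,\alpha_1)$ then finish it. Carrying this out also corrects your expectation about the maximisers. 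The $\tilde x\oplus y$-type responses never reach the bound. It is attained with Bob ignoring $\tilde x$ (e.g.\ $\alpha_0=\alpha_1=1$ and Bob outputting $b=y$ when $p<1/2$). In fact the bound of item 2 coincides with that of item 1 on all of $(2/5,1)$.

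The genuine gap is item 3, which is also the input to item 4: you never produce the certificate. Saying you would ``look for'' an SOS is not a proof. Setting the gradient of the two-qubit value to zero only locates stationary points; it gives an upper bound only after a global comparison of all critical points, which you do not supply. The paper's proof rests on the explicit two-term SOS
\[
2\sqrt{2}\sqrt{\tfrac{p}{3p-1}}\,\id-B_p=\frac{\sin\phi}{\lambda^{2}\sin 2\theta}\big(N_0^2+\lambda^2N_1^2\big).
\]
Here, suppressing tensor products, $N_0=A_1-\frac{B_0-B_1}{2\cos\phi}$ and $N_1=A_0-\sin2\theta\,\frac{B_0+B_1}{2\sin\phi}-\cos2\theta\,A_0\frac{B_0-B_1}{2\cos\phi}$. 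The parameters are $(\cos\phi,\sin\phi)=(g_2,f_2)$, $(\cos2\theta,\sin2\theta)=(g_1,f_1)$ and $\lambda^{-2}=\sin^22\theta/\tan^2\phi-\cos^22\theta=\frac{1-p}{3p-1}$.

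Expanding with $A_x^2=B_y^2=\id$, the identity checks as follows:
\begin{itemize}
\item The cross term between the last two summands of $N_1$ vanishes, because $(B_0+B_1)(B_0-B_1)+(B_0-B_1)(B_0+B_1)=0$.
\item The choice of $\lambda$ cancels all $\{B_0,B_1\}$ terms.
\item The prefactor is positive because $\sin\phi<0$ and $\sin2\theta<0$ on $(2/5,1)$.
\end{itemize}
These $N_i$ are exactly the annihilating operators you proposed to guess, so your idea is right, but it must be carried out. Once it is, every optimal strategy satisfies $N_0\ket{\psi}=N_1\ket{\psi}=0$, which is where the self-testing argument of item 4 starts. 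Your Jordan-block fallback would need uniqueness of the qubit optimum, i.e.\ the same missing global optimisation.

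One caution when you match angles: the displayed $\arctan$ formula for $\theta$ yields $(\cos2\theta,\sin2\theta)=(g_1,f_1)$ only for $p>1/2$. On $(2/5,1/2)$ it yields $(-g_1,-f_1)$, which does not attain the bound, so there the shift by $-\pi/2$ must be dropped.
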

\noindent See \cref{app:proof1} for proof. Note that when $p\in [1/2,1)$, the maximum local value of $T_{p}$ is equal to 2 for both non-signaling and one-way signaling local behaviors. 

When evaluated on fully non-signaling strategies $\mathrm{P}^{\text{NS}}$, we have $\langle B_{x,y} \rangle = \langle B_{y} \rangle$, and therefore
\begin{multline}
    T_{p}(\mathrm{P}^{\text{NS}}) = \frac{1-2p}{p}\langle B_{0} - B_{1} \rangle \\ + \langle A_{0}(B_{0} +B_{1}) \rangle + \frac{1-p}{p} \langle A_{1}(B_{0}  - B_{1} )\rangle \leq 2, \label{eq:new_ineq_ns}
\end{multline}
which reduces to the CHSH inequality when $p = 1/2$. Thus, since \cref{eq:tilt_chsh} is a facet, we can view it as a generalization of the CHSH inequality to the symmetric one-way signaling scenario. Furthermore, \eqref{eq:new_ineq_ns} is a member of a family of tilted-CHSH inequalities derived in Ref.~\cite[Section 3.2.1]{Barizien_2024}. It was shown in Ref.~\cite[Appendix B.3]{Barizien_2024} that this family self-tests the partially entangled strategy in \cref{eq:qubit_strat}, and this is summarized in points 3 and 4 of \cref{lem:self-test}. Notably, the maximum value of $T_{p}$ under non-signaling quantum behaviors (achieved by the strategy \eqref{eq:qubit_strat}) is strictly larger than $2$ for all $p \in [1/2,1)$, and converges to $2$ as $p$ converges to 1. We can therefore conclude that in the presence of arbitrarily strong (but not perfect) input signaling from Alice to Bob via a binary symmetric channel, there exist non-signaling quantum correlations that cannot be produced classically.     

\subsection{Asymmetric two-way signaling} \label{sec:2way_exp}
We now consider the two-way noisy channel model considered in \cref{sec:strong_sig}, which induces the local polytope $\mathbb{S}_{(p,1),(r,1)}$. Recall that here, $X=1$ is transmitted perfectly to Bob, $X=0$ is transmitted with a fidelity $p<1$, $Y=0$ is transmitted to Alice with fidelity $r<1$ and $Y=1$ is transmitted perfectly. We now select a particular facet (5$^{\text{th}}$ on the list in \cref{Appendix::one_symbol_perfect}), given by 
$$
G_{p,r} = \left(
\begin{array}{cc|cc}
 1 & 1 & 0 & r-1 \\
 1 & 0 & 0 & 0 \\ \hline
 0 & 0 & 0 & 0 \\
 p-1 & 0 & 0 & (1-p)(1-r) \\
\end{array}
\right).
$$
We now show that this facet is in fact a witness for quantum behaviors. That is, for all $p,r<1$, there exists a quantum non-signaling behavior that cannot be realized within the two-way signaling LHV model $\mathbb{S}_{(p,1),(r,1)}$.

\begin{proposition} \label{lem:2_ways}
    Let $p,r \in [0,1)$ and $W_{p,r}$ be the Bell functional $W_{p,r}(\mathrm{P}) = \langle G_{p,r}, \mathrm{P}\rangle$, given by
    \begin{multline}
    W_{p,r}(\mathrm{P}) = 1 - \p(11|00) - (1-r)\p(01|01) \\ - (1-p)\p(10|10) + (1-p)(1-r)\p(11|11). \label{eq:tilt_chsh_2}
    \end{multline}
    Then the following properties hold:
    \begin{enumerate}
        \item For all two-way signaling local behaviors $\mathrm{P}_{\mathrm{L}}^{\mathrm{S}} \in \mathbb{S}_{(p,1),(r,1)}$, $W_{p,r}(\mathrm{P}_{\mathrm{L}}^{\mathrm{S}}) \leq 1$.
        \item There exists a non-signaling quantum behavior $P_{\mathrm{Q}}^{\mathrm{NS}}$ such that $W_{p,r}(\mathrm{P}_{\mathrm{Q}}^{\mathrm{NS}}) > 1$. 
    \end{enumerate}
\end{proposition}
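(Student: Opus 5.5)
The plan is to prove property 1 by checking the vertices of $\mathbb{S}_{(p,1),(r,1)}$, and property 2 by exhibiting an explicit Hardy-type quantum behavior.

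\emph{Reformulation.} Rewrite the claim $W_{p,r}(\mathrm{P})\leq 1$ in the equivalent form
\begin{equation*}
(1-p)(1-r)\,\p(11|11) \leq \p(11|00) + (1-r)\,\p(01|01) + (1-p)\,\p(10|10).
\end{equation*}
By \eqref{eq:p_sig_model_new}, $\mathbb{S}_{(p,1),(r,1)}$ is the convex hull of behaviors induced by pairs of deterministic functions $f_A(x,\tilde y)$ and $f_B(\tilde x,y)$, randomized over the channel outputs. Since the inequality is linear, it suffices to verify it on each such vertex.

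\emph{Channel in this model.}
\begin{itemize}
\item For $x=1$, Bob receives $\tilde x=1$ with certainty. For $x=0$, he receives $\tilde x=0$ with probability $p$ and $\tilde x=1$ with probability $1-p$.
\item For $y=1$, Alice receives $\tilde y=1$ with certainty. For $y=0$, she receives $\tilde y=0$ with probability $r$ and $\tilde y=1$ with probability $1-r$.
\item The two channel outputs are independent, so joint probabilities factorize.
\end{itemize}

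\emph{Property 1: case analysis on a vertex.} At $x=y=1$ both parties receive $\tilde x=\tilde y=1$ deterministically. Hence $\p(11|11)\in\{0,1\}$, and it equals $1$ only if $f_A(1,1)=f_B(1,1)=1$. If $\p(11|11)=0$ there is nothing to prove, so assume both equal $1$. Three cases remain.
\begin{enumerate}
\item If $f_A(0,1)=0$: on input $(0,1)$ Alice outputs $0$ with certainty. Bob outputs $f_B(\tilde x,1)$, which equals $1$ whenever $\tilde x=1$, an event of probability $1-p$. Hence $\p(01|01)\geq 1-p$, and the right-hand side is at least $(1-r)(1-p)$.
\item If $f_A(0,1)=1$ and $f_B(1,0)=0$: on input $(1,0)$ Bob outputs $0$ with certainty. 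Alice outputs $1$ whenever $\tilde y=1$, which has probability $1-r$. Hence $\p(10|10)\geq 1-r$, and the right-hand side is at least $(1-p)(1-r)$.
\item If $f_A(0,1)=f_B(1,0)=1$: on input $(0,0)$ both output $1$ whenever $\tilde y=1$ and $\tilde x=1$. By independence this has probability $(1-r)(1-p)$, so $\p(11|00)\geq(1-p)(1-r)$.
\end{enumerate}
In every case the inequality holds, which proves property 1.

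\emph{Property 2: Hardy's paradox.} Take a non-signaling quantum behavior with
\begin{equation*}
\p(11|00)=\p(01|01)=\p(10|10)=0 \quad\text{and}\quad \p(11|11)=q>0.
\end{equation*}
Up to relabeling of outcomes, these are exactly Hardy's conditions. Hardy's construction realizes them with a partially entangled two-qubit state $\cos\theta\ket{00}+\sin\theta\ket{11}$ and projective qubit measurements. One can take the optimal Hardy strategy, which gives $q=(5\sqrt5-11)/2\approx 0.09$. I would write the state and measurements explicitly and check the three zero constraints directly, adapting the outcome labels so that the vanishing events are $(11|00)$, $(01|01)$ and $(10|10)$. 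This gives
\begin{equation*}
W_{p,r}(\mathrm{P}_{\mathrm{Q}}^{\mathrm{NS}}) = 1+(1-p)(1-r)\,q > 1 \quad \text{for all } p,r<1.
\end{equation*}

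The main obstacle I expect is this relabeling bookkeeping: matching Hardy's standard zero-probability events to the specific entries of $G_{p,r}$. If the labels do not align directly, I would flip outcome labels locally, and adjust which setting is called $0$ or $1$, until the three vanishing events coincide with the negatively weighted entries.
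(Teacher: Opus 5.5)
Your proposal is correct and follows essentially the paper's route. For property 1, the paper also checks vertices and lower-bounds each right-hand term by its $(1-p)(1-r)$-weighted component in which both parties receive $\tilde x=\tilde y=1$. It then closes with a CHSH inequality on the four responses $f_A(0,1),f_A(1,1),f_B(1,0),f_B(1,1)$, and your three-case analysis is exactly the by-hand version of that step. For property 2, the paper's explicit two-qubit strategy gives $W_{p,r}-1\propto(1-p)(1-r)$ and is in fact a non-optimal Hardy strategy: it satisfies the same three zero constraints with $\p(11|11)\approx 0.0235$. Your appeal to Hardy's optimal construction is therefore the same idea made explicit, and it gives the larger margin $\approx 0.09\,(1-p)(1-r)$.
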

\noindent See \cref{app:proof1} for proof. 

Note that when evaluated on non-signaling behaviors and $p = r = 0$, $W_{p,r}(\mathrm{P}^{\text{NS}})$ reduces to the CHSH inequality up to constants and relabeling:
\begin{equation*}
    W_{0,0}(\mathrm{P}^{\text{NS}}) = \frac{1}{2} + \frac{1}{4}\langle A_{0}(-B_{0} + B_{1})\rangle + \frac{1}{4}\langle A_{1}(B_{0} + B_{1})\rangle \leq 1.
\end{equation*}
In the same spirit as \cref{eq:tilt_chsh}, we can view \cref{eq:tilt_chsh_2} as another generalization of the CHSH facet to signaling scenarios, this time in the presence of (the more general) two-way signaling. While we were not able to analytically characterize the optimal non-signaling quantum strategy in this case, \cref{lem:2_ways} shows that such exposed points do exist even when the amount of two-way signaling is arbitrarily high. 

We conjecture that the same is true for the more general signaling scenario $\mathbb{S}_{(p,1)^*,(r,1)^*}$, which is equal to the convex hull of $\mathbb{S}_{(p,1),(r,1)}, \, \mathbb{S}_{(p,1),(1,r)}, \, \mathbb{S}_{(1,p),(r,1)}$ and $\mathbb{S}_{(1,p),(1,p)}$. 

\begin{conjecture}
    \label{conj:sig} 
    Let $\mathbb{S}_{(p,1)^*,(r,1)^*}$ denote the LHV polytope defined in \cref{eq:poly_star}. For all $p,r < 1$, there exists a non-signaling quantum behavior $P_{\mathrm{Q}}^{\mathrm{NS}}$ such that $P_{\mathrm{Q}}^{\mathrm{NS}} \notin \mathbb{S}_{(p,1)^*,(r,1)^*}$ can be witnessed by one the Bell inequalities reported in \cref{Appendix::random_symbol_perfect}. 
\end{conjecture}

\noindent Our conjecture is based on numerically searching over two-qubit strategies that violate one of the facets provided in \cref{Appendix::random_symbol_perfect}, which are expected to be optimal in the standard (non-signaling) minimal Bell scenario following Jordan's Lemma~\cite{Jordan,Barizien2025}.  

Proving that $\mathbb{S}_{(p,1)^*,(r,1)^*}$ cannot simulate all non-signaling quantum correlations would imply that the same is true for the most general model of parameter independence $\mathbf{S}_{(p,r)}$, a consequence of the equivalence between the two established in \cref{Theorem:modelconnection}. Furthermore, following \cref{Lemma::MD_PD_and_PD,Lemma::jointbound}, the equivalence between a joint relaxation of measurement and parameter independence, and an effective relaxation of parameter independence only, suggests that the set $\mathbf{S}_{(p,r)}^{l}$ does not contain the set of all non-signaling quantum correlations whenever $p,r<1$ and $l>0$. We summarize this in the following proposition. 

\begin{proposition}
    Suppose Conjecture \ref{conj:sig} holds. Then for every $p,r\in [0,1)$ and $l \in (0,1/4]$, there exists a non-signaling quantum behavior $P_{\mathrm{Q}}^{\mathrm{NS}}$ such that $P_{\mathrm{Q}}^{\mathrm{NS}} \notin \mathbf{S}_{p,r}^{l}$. That is, for any general relaxation of parameter independence and measurement independence (see \cref{Def::GenPD}), there exists a non-signaling quantum behavior that cannot be simulated using LHV models.   \label{prop:conj_prob}
\end{proposition}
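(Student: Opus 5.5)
The plan is to chain the reductions already established. Fix $p,r\in[0,1)$ and $l\in(0,1/4]$. We want a non-signaling quantum behavior outside $\mathbf{S}^{l}_{(p,r)}$. Lemma~\ref{Lemma::jointbound} is stated only for symmetric parameters $\xi^{\A\to\B}=\xi^{\B\to\A}=\epsilon$. So first set $\epsilon=\max\{p,r\}<1$ and use the monotonicity of the constraints \eqref{eq:lam_dep} in $\xi$, which is immediate from the definition, to get
\[
\mathbf{S}^{l}_{(p,r)}\subseteq \mathbf{S}^{l}_{(\epsilon,\epsilon)}.
\]
Lemma~\ref{Lemma::jointbound} then gives $\mathbf{S}^{l}_{(\epsilon,\epsilon)}\subseteq \mathbf{S}_{(\kappa,\kappa)}$ with $\kappa=\epsilon+(1-\epsilon)\frac{1-4l}{1-2l}$.

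The key check is that $\kappa<1$. Since $l>0$, we have $1-4l<1-2l$, and $1-2l>0$ because $l\le 1/4$. Hence $\frac{1-4l}{1-2l}<1$. Together with $1-\epsilon>0$, this gives $\kappa<\epsilon+(1-\epsilon)=1$. Also $\kappa\geq 0$, so $\kappa\in[0,1)$ is an admissible parameter for Lemma~\ref{Theorem:modelconnection}. That lemma yields $\mathbf{S}_{(\kappa,\kappa)}=\mathbb{S}_{(\kappa,1)^*,(\kappa,1)^*}$.

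Conjecture~\ref{conj:sig}, applied with both parameters equal to $\kappa<1$, supplies a non-signaling quantum behavior $P_{\mathrm{Q}}^{\mathrm{NS}}\notin\mathbb{S}_{(\kappa,1)^*,(\kappa,1)^*}$. Chaining the inclusions gives
\[
P_{\mathrm{Q}}^{\mathrm{NS}}\notin \mathbf{S}_{(\kappa,\kappa)}\supseteq \mathbf{S}^{l}_{(\epsilon,\epsilon)}\supseteq\mathbf{S}^{l}_{(p,r)},
\]
which is the claim. The witnessing Bell inequality is the one from \cref{Appendix::random_symbol_perfect} at parameter $\kappa$.

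The step most likely to cause trouble is matching definitions across the lemmas. Lemma~\ref{Lemma::jointbound} constrains the hidden-variable-level signaling $\xi$ in \eqref{eq:lam_dep}, whereas Lemma~\ref{Theorem:modelconnection} uses the behavior-level $\epsilon$ in \eqref{Def::GenPD}. I would verify that the output of Lemma~\ref{Lemma::jointbound} is indeed $\mathbf{S}_{(\kappa,\kappa)}$ in the sense used by Lemma~\ref{Theorem:modelconnection}, namely LHV behaviors satisfying \eqref{Def::GenPD}, which is how it is stated.

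I would also check the monotonicity reduction to the symmetric case. Enlarging $\xi$ only weakens \eqref{eq:lam_dep}, so the inclusion holds. The alternative is to redo the proof of Lemma~\ref{Lemma::jointbound} with asymmetric parameters, obtaining $\kappa_{A}$ and $\kappa_{B}$, both $<1$. Either route works, and the first avoids revisiting the appendix.
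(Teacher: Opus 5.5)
Your chain is exactly the justification the paper sketches just before the proposition. You use monotonicity of \eqref{eq:lam_dep} in $\xi$ to pass to $\epsilon=\max\{p,r\}$, then Lemma~\ref{Lemma::jointbound} to land in $\mathbf{S}_{(\kappa,\kappa)}$ with $\kappa<1$, then Lemma~\ref{Theorem:modelconnection} to identify that set with $\mathbb{S}_{(\kappa,1)^*,(\kappa,1)^*}$, and finally Conjecture~\ref{conj:sig}. As a formal composition of the lemmas as stated it is valid, and the reduction to the symmetric case and the check $\kappa<1$ are details the paper leaves implicit. The genuine gap is the definitional check you flagged and then waved through with ``which is how it is stated''. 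That check fails, so the inference $\mathrm{P}_{\mathrm{Q}}^{\mathrm{NS}}\notin\mathbb{S}_{(\kappa,1)^*,(\kappa,1)^*}\Rightarrow\mathrm{P}_{\mathrm{Q}}^{\mathrm{NS}}\notin\mathbf{S}_{(\kappa,\kappa)}$ is unsupported. You inherit this defect from the paper's own sketch rather than introduce it.

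\emph{Behavior-level reading.} The proof of Lemma~\ref{Lemma::jointbound} only bounds the behavior-level marginals in \eqref{Def::GenPD} and then invokes Lemma~\ref{Lemma::MD_PD_and_PD}. So the $\mathbf{S}_{(\kappa,\kappa)}$ it lands in is the set of behaviors admitting some representation $\sum_{\lambda}\p(\lambda)\p(a|x,y,\lambda)\p(b|x,y,\lambda)$ and satisfying \eqref{Def::GenPD} with parameter $\kappa$. Every behavior admits such a representation: mix the deterministic responses $a=f(x,y)$, $b=g(x,y)$. Every non-signaling behavior satisfies \eqref{Def::GenPD} with parameter $0$. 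Hence this set contains every non-signaling behavior, and no $\mathrm{P}_{\mathrm{Q}}^{\mathrm{NS}}$ can lie outside it. The inclusion $\mathbf{S}_{(\kappa,\kappa)}\subseteq\mathbb{S}_{(\kappa,1)^*,(\kappa,1)^*}$ you import from Lemma~\ref{Theorem:modelconnection} is then false. At $\kappa=0$ the right-hand side is the local polytope while the left contains the PR box, and for every $\kappa$ the inclusion is incompatible with Conjecture~\ref{conj:sig} itself.

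\emph{Per-$\lambda$ reading.} Switching to per-$\lambda$ constraints as in \eqref{eq:sig1} does not rescue the chain, for two reasons.
\begin{enumerate}
\item Lemma~\ref{Lemma::MD_PD_and_PD} pairs independent copies $\lambda_{x,0},\lambda_{x,1}$, so its per-$\lambda'$ signaling is uncontrolled. Lemma~\ref{Lemma::jointbound} is therefore no longer proved.
\item The hard inclusion of Lemma~\ref{Theorem:modelconnection} still fails. Take a single $\lambda$ in which Bob always outputs $0$, and Alice's $(\p(0|x,0),\p(0|x,1))$ equals $(\kappa,0)$ for $x=0$ and $(1,1-\kappa)$ for $x=1$. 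Its per-$\lambda$ signaling is $\kappa$. Now consider the functional $\p(0|0,0)-2\p(0|0,1)+2\p(0|1,0)-\p(0|1,1)$ of Alice's marginals. It equals $1+2\kappa$ on this behavior. On each of the four polytopes in \eqref{eq:poly_star} it is at most $\kappa+\max\{1,2\kappa\}$, which is $<1+2\kappa$ when $0<\kappa<1$. The reason is that the two values of $x$ would need different Bob-to-Alice channel types.
\end{enumerate}
To close the argument you would need a per-$\lambda$ version of Lemma~\ref{Lemma::jointbound}. You would also need either a conjecture about the larger per-$\lambda$ polytope, or a proof that its non-signaling section coincides with that of $\mathbb{S}_{(\kappa,1)^*,(\kappa,1)^*}$.
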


\cref{prop:conj_prob} is consistent with recent reports by Refs.~\cite{Vieira_2025,ramanathan2025} who made the same conclusion using different approaches. In the former, the authors consider a Bell scenario with binary inputs per party and infinitely many outcomes per input. In the latter, the authors report a similar result with binary outcomes per input. However, neither assume an explicit signaling mechanism. The equivalences established in \cref{Theorem:modelconnection} and \cref{Lemma::MD_PD_and_PD,Lemma::jointbound} show that it is in fact sufficient to consider LHV models arising from binary signaling channels. Moreover, a positive resolution to Conjecture \ref{conj:sig} would provide an explicit Bell inequality for this LHV model that witnesses a non-signaling quantum behavior.       

\section{DI randomness certification} \label{sec:rand}

In the previous section, we showed that certain quantum correlations cannot be simulated by classical models even with near perfect input signaling. We now explore the impact of this on the certification of DI randomness. 

\subsection{DI randomness without signaling} 

To begin, we recall the cryptographic setting without signaling. We call the tuple $\mathcal{H}_{Q_{A}}$, $\mathcal{H}_{Q_{B}}$, $\{M_{a|x}\}$, $\{N_{b|y}\}$, $\rho_{Q_{A}Q_{B}}$ that realizes a non-signaling quantum behavior  a \textit{non-signaling quantum strategy}. In the cryptographic context, we must allow for an adversary Eve, who holds a system $E$ that purifies the bipartite state $\rho_{Q_{A}Q_{B}}$~\footnote{The measurement operators $M_{a|x}$ and $N_{b|y}$ act trivially on the system $E$.}. We denote the tripartite pure state by $\ket{\psi}_{Q_{A}Q_{B}E}$, and note that without loss of generality we can take Alice's and Bob's measurements to be projective according to Naimark’s dilation theorem~\cite{paulsen_2003,BhavsarDI}. Furthermore, we allow Eve to prepare the strategy however she likes, provided the induced behavior  $\mathrm{P}$ is compatible with the set of observations made by Alice and Bob, e.g., the violation of a Bell inequality, expressed as a linear functional $f(\mathrm{P}) \geq \omega > \eta$, where $\eta$ is the local bound and $\omega$ is the observed value. Eve's goal is to select the compatible strategy that allows her to learn the most information about the outcome of one of Alice's measurements, in this case the setting $X=0$. This corresponds to the security of a spot-checking protocol~\cite{MS2}, where Alice mostly chooses $X=0$ to generate randomness, and sometimes chooses a random input to test for a Bell violation with Bob (see, e.g., \cite[Section XIV.B.2]{Pirandola_2020}). The asymptotic rate of randomness generation $r_{\text{NS}}$, in bits per interaction, is given by
\begin{equation}
    r_{\text{NS}}(f,\omega) = \inf_{\substack{\mathcal{H}_{Q_{A}},\mathcal{H}_{Q_{B}}, \mathcal{H}_{E} \\ \ket{\psi}_{Q_{A}Q_{B}E}, \{M_{a|x}\},\{N_{b|y}\} \\ \text{compatible with} \  f(\mathrm{P}) \geq \omega}} H(A|X=0,E)_{\rho}
\end{equation}
where $H(A|B)_{\rho} = H(AB)_{\rho} - H(B)_{\rho}$ is the conditional von Neumann entropy with $H(A)_{\rho} = - \tr[\rho \log(\rho)]$, which is evaluated on the post-measurement state after Alice receives the input setting $X=0$, 
\begin{multline}
    \rho_{AE|X=0} = \sum_{a \in \{0,1\}} \ketbra{a}{a}_{A} \\ \otimes \tr_{Q_{A}Q_{B}}\big[(M_{a|0} \otimes \id_{Q_{B}E})\ketbra{\psi}{\psi}\big]. \label{eq:pms}
\end{multline}
Asymptotic rates can be also used as a basis for rates with finite statistics~\cite{DFR,MetgerGEAT,ADFRV}.

\subsection{DI randomness with signaling} 

To allow for noisy input signaling between parties, we slightly modify the previous formulation. For ease of presentation, we consider the setting of one-way signaling of Alice's input to Bob via a symmetric binary channel with crossover probability $1-p \in (0,1/2]$ (as illustrated in \cref{fig:scenario}). This can be straightforwardly generalized to the asymmetric two-way case. As before, we consider a tripartite state $\ket{\psi}_{Q_{A}Q_{B}E}$, and Alice's measurement on $Q_{A}$ is modeled as before, i.e., as one of two binary outcome POVMs selected by her input $X$ only, $\{M_{a|x}\}$. Bob's device however has four possible input choices; his binary input $Y$ and the bit received from the noisy channel $\tilde{X}$. We therefore consider four POVMs for Bob, $\{N_{b|\tilde{x},y}\}$. However, they should capture the fact that $\tilde{X}$ is a noisy copy of Alice's input $X$. To do this, we define Bob's noisy POVM elements that depend on $X$ rather than $\tilde{X}$ according to the crossover probability,
\begin{equation}
    \widetilde{N}_{b|x,y}^{(p)} := p \, N_{b|x,y} + (1-p) \, N_{b|x\oplus 1,y}. 
\end{equation}
When $p = 1$, $\tilde{X} = X$ and Bob's measurement can depend arbitrarily on both inputs $(X,Y)$. When $p = 1/2$ (or $N_{b|\tilde{x},y}$ is independent of $\tilde{x}$), $\widetilde{N}_{b|0,y}^{(p)} = \widetilde{N}_{b|1,y}^{(p)}$ and Bob's measurement is independent of Alice's input $X$, recovering the non-signaling setting. Intermediate values imply that $X$ can influence Bob's measurement choice, however his device could incorrectly decode $X = 0$ when $X = 1$, or vice-versa, preventing perfect signaling from taking place. We call the induced behavior a \textit{signaling quantum behavior}, which is given by
\begin{equation}
    \p(a,b|x,y) = \tr\big[(M_{a|x} \otimes \widetilde{N}_{b|x,y}^{(p)})\rho_{Q_{A}Q_{B}} \big]. \label{eq:sig_p}
\end{equation}

Note that signaling could be detected in the observed behavior  $\mathrm{P}$. Specifically, the parties can check the overlap between $\mathrm{P}$ and the non-signaling matrices, which is equal to 
\begin{equation*}
\begin{aligned}
    \left<\mathrm{NS}_{y}^{\A \to \B},\mathrm{P}\right> &= \tr\big[ \widetilde{N}_{0|0,y}^{(p)}\rho_{Q_{B}} \big]  - \tr\big[\widetilde{N}_{0|1,y}^{(p)}\rho_{Q_{B}} \big] \\ 
    &= (2p-1) \tr\big[(N_{0|0,y} - N_{0|1,y})\rho_{Q_{B}} \big].
\end{aligned}
\end{equation*}
If any of these values are non-zero, the honest parties can conclude that the non-signaling assumption is violated and abort the protocol\footnote{When the number of rounds is finite, small deviations from zero of the non-signaling inner products are an inevitable consequence of statistical fluctuations. In this case, the honest parties can permit a small amount of signaling consistent with the expected fluctuations as a function of the number of rounds. If the observed value exceeds this, they can abort the protocol.}. We therefore include the constraints 
\begin{equation*}
    \left<\mathrm{NS}_{y}^{\A \to \B},\mathrm{P}\right> = 0, \ \forall y \in \{0,1\}.
\end{equation*}
If $p > 1/2$, these constraints imply
\begin{equation*}
    \tr\big[(N_{0|0,y} - N_{0|1,y})\rho_{Q_{B}} \big] =0,
\end{equation*}
which is a weaker condition than $N_{0|0,y} = N_{0|1,y}$. Thus, Eve may still prepare Bob's device such that it exploits the additional input $\tilde{X}$, but she must do so in such a way that it is not noticeable in the observed statistics.    

With this in mind, we define the asymptotic rate of randomness generation in the presence of noisy input signaling,
\begin{equation}
    r_{\mathrm{S},p}(f,\omega) = \inf_{\substack{\mathcal{H}_{Q_{A}},\mathcal{H}_{Q_{B}}, \mathcal{H}_{E} \\ \ket{\psi}_{Q_{A}Q_{B}E}, \{M_{a|x}\},\{N_{b|\tilde{x},y}\} \\ \text{compatible with} \  f(\mathrm{P}) \geq \omega \\ \text{and} \ \langle \text{NS}_{y}^{\mathbf{A}\to \mathbf{B}}, \mathrm{P} \rangle = 0 }} H(A|X=0,E)_{\rho}, \label{eq:sig_ent}
\end{equation}
where the entropy is evaluated on the state \eqref{eq:pms}, and the dependence on the parameter $p \in [1/2,1)$ is implicit in the definition of $\mathrm{P}$ given in \eqref{eq:sig_p}. We then have $r_{\mathrm{S},1/2}(f,\omega) = r_{\mathrm{NS}}(f,\omega)$. 

The setup we have described is one in which each device learns its own input and partial information about the input of the other device. Eve, on the other hand, learns no additional information about any inputs or outputs once the protocol commences. In particular, the noisy channel that leaks information between the devices does not leak anything to Eve. This is a consequence of the ``closed lab assumption'', which states that no information can enter or leave the lab where the devices operate once the protocol commences\footnote{Eve does however know which input Alice will use to generate randomness}. Without this, secret information could be arbitrarily leaked to Eve and security would never be possible. Since we are considering protocols for DI randomness generation, we assume that both devices are held inside the same secure lab\footnote{This would not be in case, however, for DI quantum key distribution protocols.}. We keep the closed lab assumption along with other standard assumptions in our analysis (see~\cite[Section XIV.B.1]{Pirandola_2020} for a summary). The assumption we are relaxing is the non-signaling assumption between the devices themselves; this could, for example, be a consequence of performing a Bell test in a tight space, though still within a secure laboratory.

\subsection{Choice of Bell inequalities} 

To compare the amount of randomness certified by the violation of different Bell inequalities in this scenario, we consider the statistics generated from a two-qubit non-signaling quantum strategy subject to depolarizing noise. Specifically, we consider a strategy of the form 
\begin{equation}
    \begin{gathered}
        \rho = (1-\mu) \ketbra{\psi_{\theta}}{\psi_{\theta}} + \mu \frac{\id_{4}}{4}, \\ 
        A_{x} = \cos(a_{x})\, \sigma_{Z} + \sin(a_{x}) \, \sigma_{X}, \\
        B_{y} = \cos(b_{y}) \, \sigma_{Z} + \sin(b_{y}) \, \sigma_{X},
    \end{gathered} \label{eq:noisy_strat}
\end{equation}
where $\mu \in [0,1]$ is the depolarizing noise parameter, and $(\theta,a_{0},a_{1},b_{0},b_{1})$ correspond to the angles that achieve the maximum quantum violation of the given Bell inequality, represented by the function $f$ in \cref{eq:sig_ent}. For example, when testing the CHSH inequality $f=f_{\text{CHSH}}$, we choose the values $(\pi/4,0,\pi/2,\pi/4,-\pi/4)$, corresponding to maximally anti-commuting measurements on a maximally entangled state. Then, for a fixed $\mu$, we evaluate the functional $f$ on the behavior  induced by the the noisy optimal strategy in \cref{eq:noisy_strat}. This gives a Bell value $\omega_{f}(\mu)$ (depending on the choice of Bell functional $f$ and noise $\mu$), for which we compute the randomness $r_{\mathrm{S},p}(f,\omega_{f}(\mu))$ for a fixed crossover probability $1-p$. This corresponds to the setting where every state that could be prepared by an experimentalist is subject to the same depolarizing noise $\mu$, and their setup is susceptible to noisy input communication from Alice to Bob with crossover probability $1-p$. Both $\mu$ and $p$ are assumed to be known to the experimentalist, and they are free to optimize over the choice of Bell functional $f$ (which determines the chosen angles $(\theta,a_{0},a_{1},b_{0},b_{1})$) that maximizes the rate $r_{\mathrm{S},p}(f,\omega_{f}(\mu))$. 

For our example, we compare three different choices of Bell functional $f$. The first is the CHSH functional, $f=f_{\text{CHSH}}$ defined in \cref{eq:chsh}, which corresponds to the only non-trivial facet (up to relabelings) of the non-signaling local polytope. The second is the tilted functional defined in \cref{eq:tilt_chsh}, where the tilting parameter is set equal to the crossover probability $1-p$, $f=T_{p}$. This corresponds to a facet of the LHV polytope with symmetric one-way signaling of Alice's input to Bob, and in particular $T_{1/2} = f_{\text{CHSH}}$. Finally, we consider the optimal choice of tilted functional, $f = T_{q}$, where we maximize over the value $q \in (2/5,1)$. We denote the optimal choice by $T_{q_{\text{opt}}}$. Since this family includes both $f_{\text{CHSH}}$ and $T_{p}$ as special cases, 
\begin{multline*}
     r_{\mathrm{S},p}(T_{q_{\text{opt}}},\omega_{T_{q_{\text{opt}}}}(\mu)) := \sup_{q \in (2/5,1)}r_{\mathrm{S},p}(T_{q},\omega_{T_{q}}(\mu)) \\\geq \max\{r_{\mathrm{S},p}(T_{p},\omega_{T_{p}}(\mu)),r_{\mathrm{S},p}(f_{\text{CHSH}},\omega_{f_{\text{CHSH}}}(\mu))\}    
\end{multline*}   
for all choices of $p$ and $\mu$. Note that in all cases, we only consider Bell values achievable by non-signaling quantum behaviors, i.e., those induced by \eqref{eq:noisy_strat}. 

We also deliberately certify the local randomness from Alice's first measurement setting ($X=0$) in \cref{eq:sig_ent} to align with the quantum strategy \eqref{eq:qubit_strat}. Here, Alice's device measures the observable $\sigma_{X}$ on a partially entangled state $\cos(\theta)\ket{00} + \sin(\theta)\ket{11}$, resulting in a uniformly distributed outcome, $\p(a|x=0) = 1/2$ $\forall a\in\{0,1\}$. This implementation therefore generates 1 bit of randomness, which is the maximum that can be achieved from the output of one party in the minimal Bell setting. If, using the same strategy, Alice's second setting ($X=1$) was used, a measurement in the computational basis would result in a biased outcome depending on $\theta$, and thus less entropy. This would also happen if randomness was generated from one of Bob's measurements. We therefore consider Alice's first measurement setting, while noting that a larger entropy in the noiseless setting does not necessarily imply a larger conditional entropy in the presence of an eavesdropper. It can however be a good indication in certain regimes.      

\subsection{Results}

Having defined the constraints, a lower bound on \cref{eq:sig_ent} can be computed using the numerical technique of Ref.~\cite{BrownDeviceIndependent2}. The results are shown in \cref{fig:rand_1,fig:rand_2}. In \cref{fig:rand_1}, we see that when $p=1/2$ (blue lines), $T_{1/2} = f_{\text{CHSH}}$, and optimizing over the family $T_{q}$ results in a slight improvement at intermediate noise values. For $p = 0.55$ (red lines), we find that the CHSH inequality outperforms $T_{p}$ for intermediate values of $\mu$ until the high noise regime, where both $T_{p}$ and $T_{q_{\text{opt}}}$ provide greater robustness. As $p$ increases further, both $T_{p}$ and $T_{q_{\text{opt}}}$ outperform the CHSH inequality, exhibiting higher rates at intermediate noise values and remaining non-zero at larger noise values. We find that for all values of $p$, the optimal choice $q_{\text{opt}}$ tends to $p$ as the noise becomes high. This is reflected in the fact that both the solid and dotted curves go to zero at the same noise value in \cref{fig:rand_1}. This suggests that the facet inequalities $T_{p}$ are the most robust in the high noise noise regime, replicating similar observations made when comparing tilted Bell inequalities to the CHSH inequality in the standard non-signaling scenario~\cite{WBC}. This is to be expected, since facet inequalities become the only inequalities that can be violated with high enough noise. We plot the optimal parameter $q_{\text{opt}}$ in \cref{app:plots} to illustrate this behavior .    

In \cref{fig:rand_2}, we focus on the case where there is no depolarizing noise ($\mu = 0$), i.e., the maximum quantum non-signaling value of each Bell functional is achieved in the presence of noisy one-way signaling. Here we also see the facet inequalities $T_{p}$ outperforming the CHSH inequality; the latter exhibits a positive rate when the probability of Bob receiving Alice's input is less than roughly $p\approx 0.7$, while the former certifies a non-zero rate for $p$ close to 1. Furthermore, the optimal choice $T_{q_{\text{opt}}}$ certifies a near maximum amount of randomness for almost all values of $p$ that are less than 1.  

\begin{figure}
    \centering
    \includegraphics[width=0.5\textwidth]{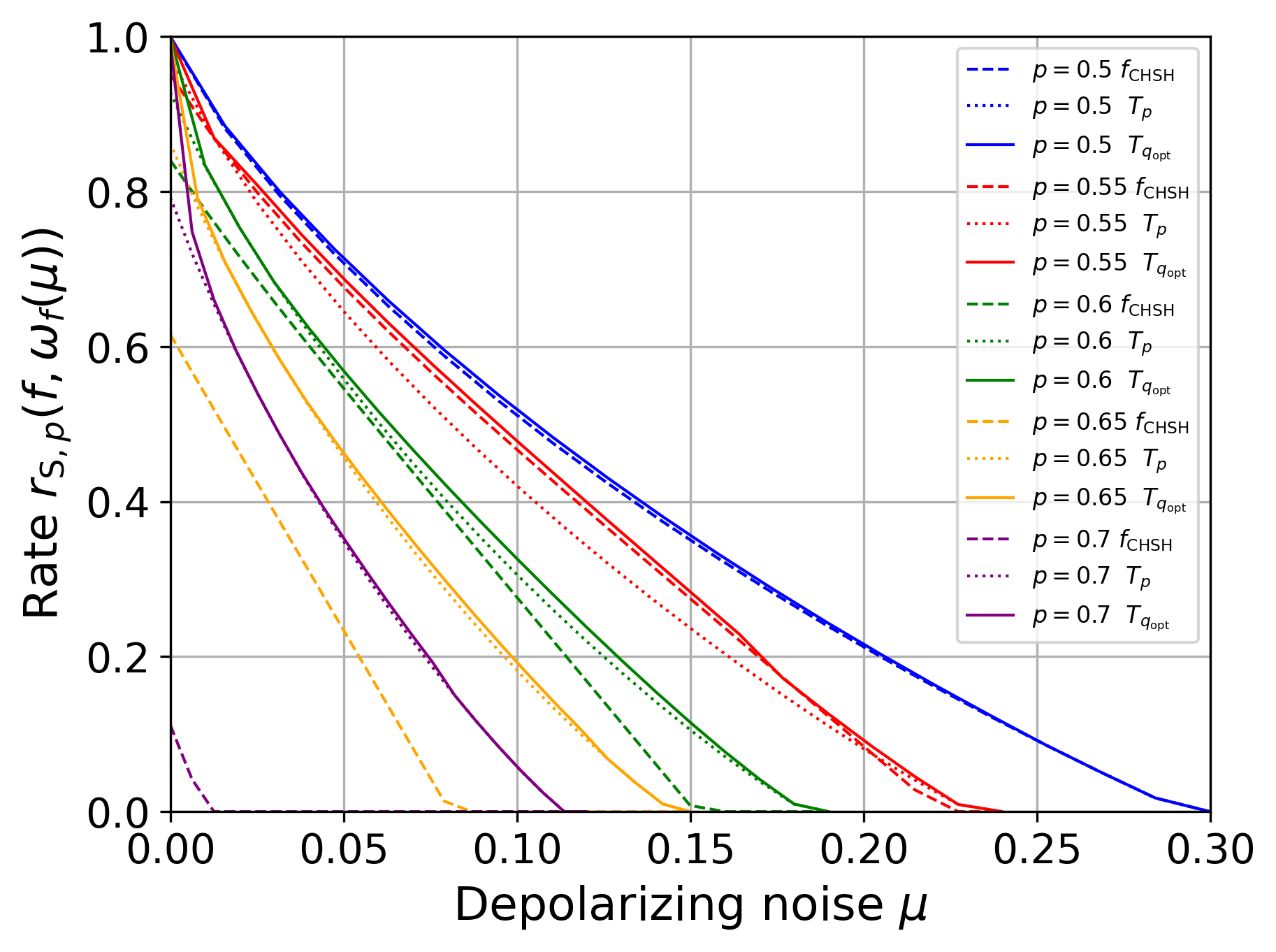}
    \caption{Comparison between different Bell expression when certifying DI randomness in the presence of one-way signaling and depolarizing noise. We consider a binary symmetric channel where the probability that Bob correctly receives a copy of Alice's input is $p \in [1/2,1)$. $f_{\mathrm{CHSH}}$ denotes the CHSH functional, $T_{p}$ denotes the new facet derived in this work \eqref{eq:tilt_chsh}, and $T_{q_{\text{opt}}}$ denotes the optimal choice from the family of expressions $T_{q}$ for $q\in (2/5,1)$.}
    \label{fig:rand_1}
\end{figure}

\begin{figure}
    \centering
    \includegraphics[width=0.5\textwidth]{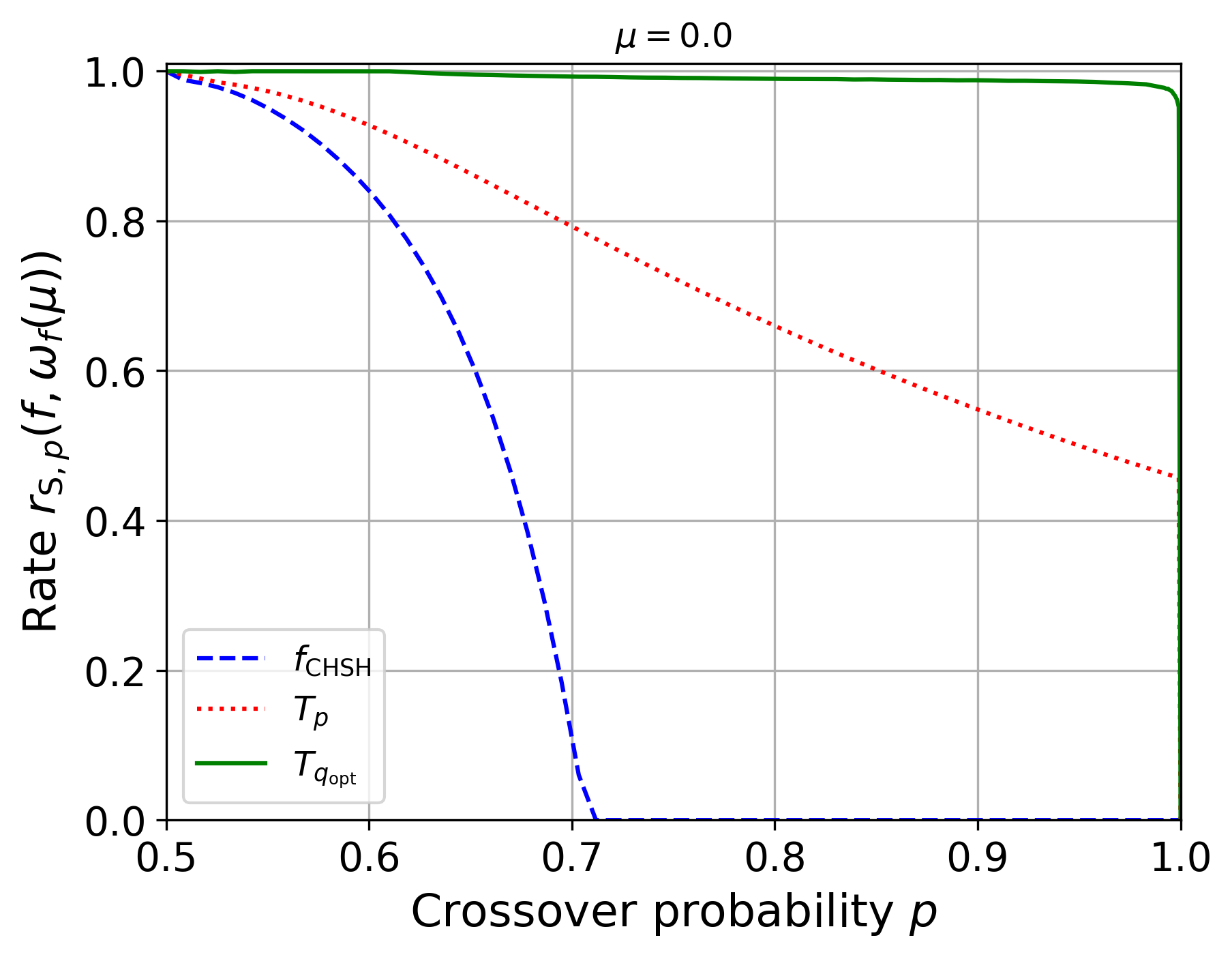}
    \caption{Comparison between different Bell expression when certifying DI randomness in the presence of one-way signaling only. The expressions $f_{\text{CHSH}},\, T_{p}$ and $T_{q_{\text{opt}}}$ are as described in \cref{fig:rand_1}. For each Bell expression, the observed value is set equal to the maximum non-signaling quantum value, i.e., the value is generated from the strategy \eqref{eq:noisy_strat} with $\mu = 0$ and the optimal measurement angles for the given Bell functional.}
    \label{fig:rand_2}
\end{figure}

\section{Discussion} \label{sec:disc}
Our results show that nonlocal quantum correlations are robust to noisy input signaling in the minimal Bell scenario. By considering a binary channel through which the signaling takes place, we were able to completely characterize the new local polytope for both one-way and two-way signaling scenarios. We then explored some properties of these polytopes, and identified new Bell inequalities that generalize the CHSH inequality to signaling scenarios. We showed that in the case of noisy one-way and two-way signaling, these inequalities can always be violated by non-signaling quantum behaviors, and in the one-way case, they certify more DI randomness than the CHSH inequality in the presence of depolarizing noise. We now discuss possible directions for future work.   

We presented a complete list of facets for the local polytope in three signaling scenarios: symmetric one-way, symmetric two-way, and asymmetric two-way with one symbol perfectly transmitted. For the first and third scenario, we identified a facet that can always be violated by a non-signaling quantum strategy, providing an explicit construction that, in the one-way case, is a self-test. It would be interesting to obtain similar results for other inequalities reported in \cref{app:all_facets}. In particular, analytically characterizing the optimal quantum strategy for the Bell expression $W_{p,r}$ studied in \cref{lem:2_ways} is left open, as is the case for the more general scenario in Conjecture \ref{conj:sig}. We made preliminary progress in this direction by numerically optimizing over qubit strategies, however finding a clean analytical solution, if one exists, is a more challenging task. 

In the one-way signaling scenario, we found improvements over the CHSH inequality when certifying DI randomness. In particular, \cref{fig:rand_2} shows that non-signaling quantum behaviors can obtain a rate bounded away from 0 for almost perfect one-way signaling, i.e., $p$ very close to 1 (cf. dotted red line in \cref{fig:rand_2}). It would be interesting to obtain the analytical form of this curve and understand if there is a discontinuity at $p=1$ (where no randomness can be certified), or if the rate rapidly drops to zero as $p$ approaches 1. Based on our numerics, we conjecture the former to be the case. Proving this would require analytically solving the rate function \eqref{eq:sig_ent}; in the non-signaling scenario, tools such as Jordan's lemma~\cite{Jordan,ABGMPS,BhavsarDI} and self-testing~\cite{WBC} have been very useful in achieving this goal, and whether similar steps can be taken in the signaling scenario is an interesting question. Furthermore, the optimal choice of Bell expression from the family \eqref{eq:tilt_chsh} certifies almost 1 bit of randomness even when $p$ approaches 1 (cf. solid green line in \cref{fig:rand_2}). The maximum amount of local randomness that can be certified in this Bell scenario is 1 bit, and analytically understanding how close one can get to this maximum in signaling scenarios is an appealing direction.   

We studied the randomness of one party's output that can be certified in a symmetric one-way signaling scenario. The numerical tools of Ref.~\cite{BrownDeviceIndependent2} used to achieve this can be extended to two-way signaling scenarios, and to certifying the randomness from both party's outcomes. Accounting for two-way signaling could be more relevant, since the assumption that one party's input leaks to the other but not the other way around could be difficult to justify in practice. Towards this end, we identified interesting facets in the symmetric two-way signaling scenario, and their application to randomness certification is yet to be investigated. We also leave the study of other noise models, such as inefficient detectors, to future work. 

Finally, it would be interesting to explore similar effects in Bell scenarios with more inputs and outputs. While there could be multiple relevant signaling channels, the resulting LHV model will still form a polytope and a similar analysis to the one here could be performed.   

\vspace{0.2cm}

\noindent \textit{Note added:} During the writing up of this work we became aware of a related work~\cite{zapatero2026} that also showed to possibility of certifying DI randomness in the presence of input signaling.

\section*{Acknowledgments}
We are grateful to Roger Colbeck for fruitful discussions. KS was supported by the Departmental Studentship from the Department of Mathematics, University of York, l’Agence Nationale de la Recherche (ANR) project ANR-22-CE47-001 and l’Agence Nationale de la Recherche (Plan France 2030) grant number ANR-22-PETQ-0009 . LW was supported by the European
Union Horizon Europe research and innovation program
under the project “Quantum Secure Networks Partnership” (QSNP, Grant Agreement No. 101114043), and the Engineering and Physical Sciences Research Council (EPSRC, Grant No. EP/SO23607/1). 

%

\onecolumngrid

\appendix

\section{Vertices of the local polytope with general signaling models}
\label{Appendix::FullSigVerts}

We consider here the most general setting with binary asymmetric channels between Alice and Bob. There two channels in total, the first from Alice to Bob and the second from Bob to Alice. Alice's setting $X=0$ is sent perfectly with a probability $p$ and her setting $X=1$ is sent perfectly with probability $q$. These parameters are $r$ and $s$, respectively, for Bob. We present the vertices up to local relabeling operations, calculated using the procedure described in \cref{sec:verts}. The numbers in the subscript denote the number of elements in each class. Note that the local deterministic non-signaling distributions remain extremal.

$$
\left(
\begin{array}{cc|cc}
 1 & 0 & 1 & 0 \\
 0 & 0 & 0 & 0 \\ \hline
 1 & 0 & 1 & 0 \\
 0 & 0 & 0 & 0 \\
\end{array}
\right)_{16}, \left(
\begin{array}{cc|cc}
 0 & 0 & 0 & 0 \\
 0 & 1 & 1-p & p \\ \hline
 0 & 0 & 0 & 0 \\
 0 & 1 & q & 1-q \\
\end{array}
\right)_{32},
\left(
\begin{array}{cc|cc}
 0 & 0 & 0 & 0 \\
 1-p & p & 1-p & p \\ \hline
 0 & 0 & 0 & 0 \\
 q & 1-q & q & 1-q \\
\end{array}
\right)_{16}, \left(
\begin{array}{cc|cc}
 0 & 0 & 0 & 0 \\
 0 & 1 & 0 & 1 \\ \hline
 0 & 1-r & 0 & s \\
 0 & r & 0 & 1-s \\
\end{array}
\right)_{32}
$$
$$
\left(
\begin{array}{cc|cc}
 0 & 0 & 0 & 0 \\
 0 & 1 & 1-p & p \\ \hline
 0 & 1-r & q s & s-q s \\
 0 & r & q-q s & (q-1) (s-1) \\
\end{array}
\right)_{16},
\left(
\begin{array}{cc|cc}
 0 & 0 & 0 & 0 \\
 1-p & p & 0 & 1 \\ \hline
 q r & r-q r & 0 & 1-s \\
 q-q r & (q-1) (r-1) & 0 & s \\
\end{array}
\right)_{16},$$
$$
\left(
\begin{array}{cc|cc}
 0 & 0 & 0 & 0 \\
 1-p & p & 1-p & p \\ \hline
 q r & r-q r & q-q s & (q-1) (s-1) \\
 q-q r & (q-1) (r-1) & q s & s-q s \\
\end{array}
\right)_{16},
\left(
\begin{array}{cc|cc}
 0 & 1-r & p s & s-p s \\
 0 & r & p-p s & (p-1) (s-1) \\ \hline
 0 & 0 & 0 & 0 \\
 0 & 1 & 1-q & q \\
\end{array}
\right)_{16},$$
$$
\left(
\begin{array}{cc|cc}
 (p-1) (r-1) & p-p r & 0 & s \\
 r-p r & p r & 0 & 1-s \\ \hline
 0 & 0 & 0 & 0 \\
 q & 1-q & 0 & 1 \\
\end{array}
\right)_{16},\left(
\begin{array}{cc|cc}
 (p-1) (r-1) & p-p r & p s & s-p s \\
 r-p r & p r & p-p s & (p-1) (s-1) \\ \hline
 0 & 0 & 0 & 0 \\
 q & 1-q & 1-q & q \\
\end{array}
\right)_{16},
$$
$$
\left(
\begin{array}{cc|cc}
 0 & 1-r & 0 & s \\
 0 & r & 0 & 1-s \\ \hline
 0 & 1-r & 0 & s \\
 0 & r & 0 & 1-s \\
\end{array}
\right)_{16},\left(
\begin{array}{cc|cc}
 0 & 1-r & p s & s-p s \\
 0 & r & p-p s & (p-1) (s-1) \\ \hline
 0 & 1-r & s-q s & q s \\
 0 & r & (q-1) (s-1) & q-q s \\
\end{array}
\right)_{16},\left(
\begin{array}{cc|cc}
 (p-1) (r-1) & p-p r & 0 & s \\
 r-p r & p r & 0 & 1-s \\ \hline
 q r & r-q r & 0 & 1-s \\
 q-q r & (q-1) (r-1) & 0 & s \\
\end{array}
\right)_{16},
$$
$$
\left(
\begin{array}{cc|cc}
 (p-1) (r-1) & p-p r & p s & s-p s \\
 r-p r & p r & p-p s & (p-1) (s-1) \\ \hline
 q r & r-q r & (q-1) (s-1) & q-q s \\
 q-q r & (q-1) (r-1) & s-q s & q s \\
\end{array}
\right)_{16};
$$

\section{Facets of signaling polytopes from noisy channel models} \label{app:all_facets}

\subsection{Facets of \texorpdfstring{$\mathbb{S}_{(p,p),(1/2,1/2)}$}{}}
\label{Appendix::one_way}

$$\mathrm{F}_1 = \left(
\begin{array}{cc|cc}
 1 & 0 & 0 & 0 \\
 0 & 0 & 0 & 0 \\ \hline
 0 & 0 & 0 & 0 \\
 0 & 0 & 0 & 0 \\
\end{array}
\right), \quad \mathrm{F}_2 = \left(
\begin{array}{cc|cc}
 1 & -\frac{1-p}{2 p-1} & 0 & 0 \\
 1 & -\frac{1-p}{2 p-1} & 0 & 0 \\ \hline
 0 & -\frac{p-1}{2 p-1} & 0 & 0 \\
 0 & -\frac{p-1}{2 p-1} & 0 & 0 \\
\end{array}
\right),$$
$$
\mathrm{F}_3 =\left(
\begin{array}{cc|cc}
 1 & -\frac{-3 p^3+4 p^2-3 p+1}{4 p^3-4 p^2+3 p-1} & 0 & -\frac{p^3}{4 p^3-4 p^2+3 p-1} \\
 -\frac{-5 p^3+5 p^2-3 p+1}{4 p^3-4 p^2+3 p-1} & -\frac{-4 p^3+3 p^2-2 p+1}{4 p^3-4 p^2+3 p-1} & \frac{(p-1)^2 p}{4 p^3-4 p^2+3 p-1} & 0 \\ \hline
 0 & -\frac{(p-1)^3}{4 p^3-4 p^2+3 p-1} & 0 & -\frac{p^3-p^2}{4 p^3-4 p^2+3 p-1} \\
 0 & -\frac{p^3-p^2}{4 p^3-4 p^2+3 p-1} & 0 & -\frac{(p-1)^3}{4 p^3-4 p^2+3 p-1} \\
\end{array}
\right),
$$
$$
\mathrm{F}_4 =\left(
\begin{array}{cc|cc}
 -\frac{-3 p^2+4 p-2}{2 p^2-3 p+2} & \frac{(p-1) \left(5 p^2-6 p+2\right)}{(2 p-1) \left(2 p^2-3 p+2\right)} & 0 & -\frac{(p-1)^2 p}{(2 p-1)
   \left(2 p^2-3 p+2\right)} \\
 -\frac{-3 p^2+4 p-2}{2 p^2-3 p+2} & -\frac{-5 p^3+9 p^2-7 p+2}{(2 p-1) \left(2 p^2-3 p+2\right)} & 0 & -\frac{p^3}{(2 p-1) \left(2 p^2-3
   p+2\right)} \\ \hline
 0 & -\frac{(p-1) p^2}{(2 p-1) \left(2 p^2-3 p+2\right)} & 0 & -\frac{(p-1) p^2}{(2 p-1) \left(2 p^2-3 p+2\right)} \\
 0 & -\frac{(p-1)^3}{(2 p-1) \left(2 p^2-3 p+2\right)} & -\frac{p^2-p}{2 p^2-3 p+2} & -\frac{(p-1) \left(3 p^2-3 p+1\right)}{(2 p-1) \left(2
   p^2-3 p+2\right)} \\
\end{array}
\right),
$$
$$\mathrm{F}_5 = \left(
\begin{array}{cc|cc}
 1 & -\frac{-3 p^3+4 p^2-3 p+1}{4 p^3-4 p^2+3 p-1} & 0 & -\frac{(p-1)^3}{4 p^3-4 p^2+3 p-1} \\
 -\frac{-3 p^2+2 p-1}{2 p^2-p+1} & -\frac{-5 p^3+5 p^2-3 p+1}{4 p^3-4 p^2+3 p-1} & 0 & -\frac{p^3-p^2}{4 p^3-4 p^2+3 p-1} \\ \hline
 0 & -\frac{(p-1)^3}{4 p^3-4 p^2+3 p-1} & 0 & -\frac{(p-1)^2 p}{4 p^3-4 p^2+3 p-1} \\
 0 & -\frac{p^3-p^2}{4 p^3-4 p^2+3 p-1} & 0 & -\frac{p^3}{4 p^3-4 p^2+3 p-1} \\
\end{array}
\right),$$
$$\mathrm{F}_6 = \left(
\begin{array}{cc|cc}
 1 & -\frac{-3 p^3+4 p^2-3 p+1}{4 p^3-4 p^2+3 p-1} & 0 & -\frac{(p-1)^3}{4 p^3-4 p^2+3 p-1} \\
 -\frac{-3 p^2+2 p-1}{2 p^2-p+1} & -\frac{-5 p^3+5 p^2-3 p+1}{4 p^3-4 p^2+3 p-1} & 0 & -\frac{p^3-p^2}{4 p^3-4 p^2+3 p-1} \\ \hline
 0 & -\frac{(p-1)^3}{4 p^3-4 p^2+3 p-1} & 0 & -\frac{(p-1)^2 p}{4 p^3-4 p^2+3 p-1} \\
 0 & -\frac{p^3-p^2}{4 p^3-4 p^2+3 p-1} & 0 & -\frac{p^3}{4 p^3-4 p^2+3 p-1} \\
\end{array}
\right),$$
$$\mathrm{F}_7 = \left(
\begin{array}{cc|cc}
 -\frac{-4 p^2+3 p-1}{3 p^2-2 p+1} & -\frac{-7 p^3+10 p^2-5 p+1}{(2 p-1) \left(3 p^2-2 p+1\right)} & 0 & -\frac{p^2-p}{3 p^2-2 p+1} \\
 -\frac{-4 p^2+3 p-1}{3 p^2-2 p+1} & -\frac{-7 p^3+8 p^2-4 p+1}{(2 p-1) \left(3 p^2-2 p+1\right)} & 0 & 0 \\ \hline
 0 & -\frac{(p-1) p^2}{(2 p-1) \left(3 p^2-2 p+1\right)} & 0 & 0 \\
 0 & -\frac{(p-1)^3}{(2 p-1) \left(3 p^2-2 p+1\right)} & -\frac{p^2-p}{3 p^2-2 p+1} & -\frac{p (2 p-1)}{3 p^2-2 p+1} \\
\end{array}
\right),$$
$$
\mathrm{F}_8=\left(
\begin{array}{cc|cc}
 1 & -\frac{-5 p^3+7 p^2-4 p+1}{(2 p-1) \left(3 p^2-2 p+1\right)} & 0 & -\frac{p^2}{3 p^2-2 p+1} \\
 -\frac{-4 p^2+3 p-1}{3 p^2-2 p+1} & \frac{(p-1) p^2}{(2 p-1) \left(3 p^2-2 p+1\right)}+1 & 0 & 0 \\ \hline
 0 & -\frac{(p-1)^3}{(2 p-1) \left(3 p^2-2 p+1\right)} & 0 & -\frac{(p-1) p}{3 p^2-2 p+1} \\
 0 & -\frac{(p-1) p^2}{(2 p-1) \left(3 p^2-2 p+1\right)} & 0 & 0 \\
\end{array}
\right),
$$
$$\mathrm{F}_{9}=\left(
\begin{array}{cc|cc}
 1 & -\frac{1-p}{2 p-1} & 0 & 0 \\
 \frac{(p-1)^2}{p (2 p-1)} & 0 & -\frac{2-3 p}{2 p-1} & -\frac{2-3 p}{2 p-1} \\ \hline
 0 & -\frac{p-1}{2 p-1} & 0 & 0 \\
 0 & -\frac{p-1}{2 p-1} & 0 & 0 \\
\end{array}
\right), \quad \mathrm{F}_{10}=\left(
\begin{array}{cc|cc}
 1 & \frac{(p-1)^3}{2 p^3-3 p^2+3 p-1} & 0 & 0 \\
 1 & -\frac{-p^3+p^2-2 p+1}{2 p^3-3 p^2+3 p-1} & -\frac{p^2}{p^2-p+1} & 0 \\ \hline
 0 & -\frac{(p-1)^3}{2 p^3-3 p^2+3 p-1} & 0 & 0 \\
 0 & -\frac{(p-1) p^2}{2 p^3-3 p^2+3 p-1} & 0 & \frac{(p-1) p}{p^2-p+1} \\
\end{array}
\right),$$
$$\mathrm{F}_{11}=\left(
\begin{array}{cc|cc}
 1 & -\frac{1-2 p}{3 p-1} & 0 & -\frac{p}{3 p-1} \\
 \frac{2 (2 p-1)}{3 p-1} & \frac{2 (2 p-1)}{3 p-1} & 0 & 0 \\ \hline
 0 & -\frac{p-1}{3 p-1} & 0 & 0 \\
 0 & 0 & 0 & -\frac{p-1}{3 p-1} \\
\end{array}
\right), \quad \mathrm{F}_{12}=\left(
\begin{array}{cc|cc}
 1 & -\frac{1-2 p}{3 p-1} & 0 & 0 \\
 1 & 1 & -\frac{1-p}{3 p-1} & 0 \\ \hline
 0 & 0 & 0 & 0 \\
 0 & -\frac{p-1}{3 p-1} & 0 & -\frac{p}{3 p-1} \\
\end{array}
\right)$$
$$\mathrm{F}_{13}= \left(
\begin{array}{cc|cc}
 0 & 0 & 1 & 0 \\
 0 & 1 & 0 & 0 \\ \hline
 0 & -\frac{1-p}{p} & 0 & -\frac{p-1}{p} \\
 0 & 0 & 0 & 0 \\
\end{array}
\right)$$

\subsection{Facets of \texorpdfstring{$\mathbb{S}_{(p,p),(p,p)} $}{}}
\label{Appendix::G_One_Way}

$$
\mathrm{F}_1'=\left(
\begin{array}{cc|cc}
 1 & 0 & 0 & 0 \\
 0 & 0 & 0 & 0 \\ \hline
 0 & 0 & 0 & 0 \\
 0 & 0 & 0 & 0 \\
\end{array}
\right),
$$
$$
\mathrm{F}_2'=\left(
\begin{array}{cc|cc}
 \frac{p}{2 p-1} & \frac{p}{2 p-1} & \frac{p-1}{2 p-1} & \frac{p-1}{2 p-1} \\
 0 & 0 & 0 & 0 \\ \hline
 0 & 0 & 0 & 0 \\
 0 & 0 & 0 & 0 \\ 
\end{array}
\right),
$$
$$
\mathrm{F}_3'=\left(
\begin{array}{cc|cc}
 1 & -\frac{-3 p^3+2 p^2-2 p+1}{4 p^3-4 p^2+3 p-1} & 0 & \frac{(p-1) p^2}{4 p^3-4 p^2+3 p-1} \\
 -\frac{-3 p^3+2 p^2-2 p+1}{4 p^3-4 p^2+3 p-1} & -\frac{-2 p^3+2 p^2-2 p+1}{4 p^3-4 p^2+3 p-1} & -\frac{(p-1)^3}{4 p^3-4 p^2+3 p-1} & 0 \\ \hline
 0 & -\frac{(p-1)^3}{4 p^3-4 p^2+3 p-1} & 0 & -\frac{-p^3+2 p^2-p}{4 p^3-4 p^2+3 p-1} \\
 0 & -\frac{(p-1) p^2}{4 p^3-4 p^2+3 p-1} & -\frac{p}{2 p^2-p+1} & -\frac{-p^3+2 p^2-p}{4 p^3-4 p^2+3 p-1} \\
\end{array}
\right),
$$
$$
\mathrm{F}_4'=\left(
\begin{array}{cc|cc}
 1 & -\frac{-p^2-p+1}{2 p-1} & 0 & 0 \\
 1 & -\frac{-p^2-p+1}{2 p-1} & 0 & 0 \\ \hline
 0 & \frac{(p-1)^2}{2 p-1} & 0 & 0 \\
 -\frac{p^2}{2 p-1} & 0 & -\frac{p^2-p}{2 p-1} & -\frac{p^2-p}{2 p-1} \\
\end{array}
\right),
$$
$$
\mathrm{F}_5'=\left(
\begin{array}{cc|cc}
 1 & 1 & 0 & 0 \\
 -\frac{1-2 p}{3 p-1} & 1 & 0 & -\frac{p-1}{3 p-1} \\ \hline
 0 & -\frac{1-p}{3 p-1} & 0 & 0 \\
 0 & 0 & 0 & -\frac{p}{3 p-1} \\
\end{array}
\right),
$$

\subsection{Facets of \texorpdfstring{$\mathbb{S}_{(p,1),(r,1)}$}{}}
\label{Appendix::one_symbol_perfect}
 In the subscript, $s$ refers to a facet for which numerical evidence suggests no quantum violation is possible using two-qubit strategies. On the other hand, $v$ refers to a facet for which there is numerical evidence that a non-signaling two-qubit strategy violates the corresponding facet inequality.
$$
 \left(
\begin{array}{cc|cc}
 1 & 0 & 0 & 0 \\
 0 & 0 & 0 & 0 \\ \hline
 0 & 0 & 0 & 0 \\
 0 & 0 & 0 & 0 \\
\end{array}
\right)_{16,s},
 \left(
\begin{array}{cc|cc}
 1 & 1 & 0 & 0 \\
 1-\frac{1}{r} & 1-\frac{1}{r} & \frac{1}{r}-1 & \frac{1}{r}-1 \\ \hline
 0 & 0 & 0 & 0 \\
 0 & 0 & 0 & 0 \\
\end{array}
\right)_{4,s},
\left(
\begin{array}{cc|cc}
 1 & -\frac{1-p}{p} & 0 & 0 \\
 1 & -\frac{1-p}{p} & 0 & 0 \\ \hline
 0 & -\frac{p-1}{p} & 0 & 0 \\
 0 & -\frac{p-1}{p} & 0 & 0 \\
\end{array}
\right)_{4,s},
\left(
\begin{array}{cc|cc}
 1 & 1 & 0 & -\frac{r-1}{p r-r-1} \\
 1 & 1-\frac{1}{-p r+r+1} & 0 & 0 \\ \hline
 0 & -\frac{1}{p r-r-1}-1 & 0 & 0 \\
 0 & -\frac{2-p}{p r-r-1}-1 & -\frac{p (-r)+p+r-1}{p r-r-1} & 0 \\
\end{array}
\right)_{16,v}
 $$
$$
\left(
\begin{array}{cc|cc}
 1 & 1 & 0 & r-1 \\
 1 & 0 & 0 & 0 \\ \hline
 0 & 0 & 0 & 0 \\
 p-1 & 0 & 0 & p r-p-r+1 \\
\end{array}
\right)_{16,v*},\left(
\begin{array}{cc|cc}
 1 & 1 & 0 & -\frac{1}{p} \\
 1 & 1 & 1-\frac{1}{p} & 1-\frac{1}{p} \\ \hline
 0 & 0 & 0 & -\frac{p-1}{p} \\
 0 & 0 & 0 & -\frac{p-1}{p} \\
\end{array}
\right)_{8,s},\left(
\begin{array}{cc|cc}
 1 & 1 & 0 & 0 \\
 1 & -\frac{-2 p r+p+r}{2 p r-p-r-1} & -\frac{p r-p}{2 p r-p-r-1} & -\frac{p r-p-r+1}{2 p r-p-r-1} \\ \hline
 0 & -\frac{p r-r}{2 p r-p-r-1} & 0 & -\frac{p (-r)+p+r-1}{2 p r-p-r-1} \\
 0 & -\frac{p r-p-r+1}{2 p r-p-r-1} & -\frac{p (-r)+p+r-1}{2 p r-p-r-1} & -\frac{p (-r)+p+r-1}{2 p r-p-r-1} \\
\end{array}
\right)_{16,v}
$$
$$
\left(
\begin{array}{cc|cc}
 p r-r+1 & p r-r+1 & 0 & r-1 \\
 p r-p-r+2 & p r-p-r+1 & p (-r)+p+r-1 & p (-r)+p+r-1 \\ \hline
 0 & r-p r & 0 & 0 \\
 p-1 & r-p r & 0 & p r-p-r+1 \\
\end{array}
\right)_{16,v},$$
$$\left(
\begin{array}{cc|cc}
 1 & 1 & 0 & 0 \\
 1 & 1-\frac{1}{p (-r)+p+1} & -\frac{1}{p r-p-1}-1 & -\frac{2-r}{p r-p-1}-1 \\ \hline
 0 & 0 & 0 & -\frac{p (-r)+p+r-1}{p r-p-1} \\
 -\frac{p-1}{p r-p-1} & 0 & 0 & 0 \\
\end{array}
\right)_{16,v}
$$
$$
\left(
\begin{array}{cc|cc}
 -\frac{-2 p r+2 p+2 r-1}{p r-p-r} & -\frac{-2 p r+2 p+2 r-1}{p r-p-r} & 0 & 0 \\
 -\frac{-2 p r+2 p+2 r-1}{p r-p-r} & \frac{2 (p r-p-r+1)}{p r-p-r} & -\frac{p r-p-r+1}{p r-p-r} & -\frac{p r-p-r+1}{p r-p-r} \\ \hline
 0 & -\frac{p r-p-r+1}{p r-p-r} & 0 & 0 \\
 0 & -\frac{p r-p-r+1}{p r-p-r} & 0 & 0 \\
\end{array}
\right)_{4,s}, \left(
\begin{array}{cc|cc}
 1 & 1 & 0 & 0 \\
 1 & 1 & 0 & 0 \\ \hline
 0 & 1-\frac{1}{r} & 0 & 0 \\
 -\frac{1}{r} & 1-\frac{1}{r} & -\frac{r-1}{r} & -\frac{r-1}{r} \\
\end{array}
\right)_{8,s}
$$
$$
\left(
\begin{array}{cc|cc}
 p r-p+1 & p r-p-r+2 & 0 & r-1 \\
 p r-p+1 & p r-p-r+1 & p-p r & p-p r \\ \hline
 0 & p (-r)+p+r-1 & 0 & 0 \\
 p-1 & p (-r)+p+r-1 & 0 & p r-p-r+1 \\
\end{array}
\right)_{16,v}
$$
$$
\left(
\begin{array}{cc|cc}
 -\frac{-3 p^2 r+2 p^2+2 p r-2 p+1}{2 p^2 r-p^2-p r+p-1} & -\frac{-3 p^2 r+2 p^2+3 p r-3 p-r+2}{2 p^2 r-p^2-p r+p-1} & 0 & -\frac{r-1}{2 p^2 r-p^2-p r+p-1} \\
 -\frac{-3 p^2 r^2+3 p^2 r-p^2+2 p r^2-3 p r+p+r}{r \left(2 p^2 r-p^2-p r+p-1\right)} & -\frac{-3 p^2 r^2+3 p^2 r-p^2+3 p r^2-3 p r+p-r^2+r}{r \left(2 p^2 r-p^2-p
   r+p-1\right)} & -\frac{\left(p^2-p\right) (r-1)^2}{r \left(2 p^2 r-p^2-p r+p-1\right)} & -\frac{\left(p^2-p\right) (r-1)^2}{r \left(2 p^2 r-p^2-p r+p-1\right)} \\ \hline
 0 & -\frac{(p-1)^2 (r-1)}{2 p^2 r-p^2-p r+p-1} & 0 & 0 \\
 -\frac{p-1}{2 p^2 r-p^2-p r+p-1} & -\frac{(p-1)^2 (r-1)}{2 p^2 r-p^2-p r+p-1} & 0 & -\frac{p r-p-r+1}{2 p^2 r-p^2-p r+p-1} \\
\end{array}
\right)_{16,v}
$$
$$
\left(
\begin{array}{cc|cc}
 p r-r+1 & p r-r+1 & 0 & r-1 \\
 p r-p-r+2 & p r-p-r+1 & p (-r)+p+r-1 & p (-r)+p+r-1 \\ \hline
 0 & 0 & 0 & 0 \\
 0 & 1-p & p (-r)+p+r-1 & 0 \\
\end{array}
\right)_{16,v},\left(
\begin{array}{cc|cc}
 -\frac{-p+r+1}{p r-r-1} & -\frac{2-p}{p r-r-1} & 0 & -\frac{r-1}{p r-r-1} \\
 -\frac{-p+r+1}{p r-r-1} & -\frac{1-p}{p r-r-1} & -\frac{p-p r}{p r-r-1} & -\frac{p-p r}{p r-r-1} \\ \hline
 0 & -\frac{p-1}{p r-r-1} & 0 & 0 \\
 0 & 0 & \frac{(p-1) (r-1)}{p r-r-1} & 0 \\
\end{array}
\right)_{16,v},$$
$$\left(
\begin{array}{cc|cc}
 -\frac{p-r+1}{p r-p-1} & -\frac{p-r+1}{p r-p-1} & 0 & 0 \\
 -\frac{2-r}{p r-p-1} & -\frac{1-r}{p r-p-1} & -\frac{r-1}{p r-p-1} & 0 \\ \hline
 0 & -\frac{r-p r}{p r-p-1} & 0 & \frac{(p-1) (r-1)}{p r-p-1} \\
 -\frac{p-1}{p r-p-1} & -\frac{r-p r}{p r-p-1} & 0 & 0 \\
\end{array}
\right)_{16,v}
$$
$$
\left(
\begin{array}{cc|cc}
 -\frac{-2 p r+2 r-1}{p r-r+1} & -\frac{-2 p r+2 r-1}{p r-r+1} & 0 & 0 \\
 -\frac{-2 p r+p+2 r-2}{p r-r+1} & -\frac{-2 p r+p+2 r-1}{p r-r+1} & -\frac{p r-p-r+1}{p r-r+1} & -\frac{p r-p-r+1}{p r-r+1} \\ \hline
 0 & -\frac{p r-r}{p r-r+1} & 0 & 0 \\
 0 & -\frac{p r-r}{p r-r+1} & 0 & 0 \\
\end{array}
\right)_{4,s},
$$
$$
\left(
\begin{array}{cc|cc}
 -\frac{-3 p^2 r+p^2+3 p r-p-r+1}{2 p^2 r-p^2-p r+p-1} & -\frac{-3 p^2 r+p^2+4 p r-2 p-2 r+2}{2 p^2 r-p^2-p r+p-1} & 0 & -\frac{r-1}{2 p^2 r-p^2-p r+p-1} \\
 -\frac{-3 p^2 r+2 p^2+3 p r-3 p-r+2}{2 p^2 r-p^2-p r+p-1} & -\frac{-3 p^2 r+2 p^2+4 p r-3 p-2 r+2}{2 p^2 r-p^2-p r+p-1} & -\frac{p^2 r-p^2-2 p r+2 p+r-1}{2 p^2 r-p^2-p
   r+p-1} & -\frac{p^2 r-p^2-2 p r+2 p+r-1}{2 p^2 r-p^2-p r+p-1} \\ \hline
 0 & -\frac{(p-1) (p r-2 r+1)}{2 p^2 r-p^2-p r+p-1} & 0 & 0 \\
 -\frac{p-1}{2 p^2 r-p^2-p r+p-1} & -\frac{(p-1) (p r-2 r+1)}{2 p^2 r-p^2-p r+p-1} & 0 & -\frac{(p-1) (r-1)}{2 p^2 r-p^2-p r+p-1} \\
\end{array}
\right)_{16,v}
$$
$$
\left(
\begin{array}{cc|cc}
 -\frac{1-p r}{p r-r-1} & 1 & 0 & -\frac{r-1}{p r-r-1} \\ 
 -\frac{2-p r}{p r-r-1} & 1 & -\frac{r-1}{p r-r-1} & -\frac{r-1}{p r-r-1} \\ \hline
 0 & 0 & 0 & 0 \\
 -\frac{p-1}{p r-r-1} & 0 & 0 & -\frac{(p-1) (r-1)}{p r-r-1} \\
\end{array}
\right)_{16,v}
$$
$$
\left(
\begin{array}{cc|cc}
 -\frac{-2 p^2 r+2 p^2-2 p+r+1}{p^2 r-p^2+p r+p-r-1} & -\frac{-2 p^2 r+2 p^2+p r-3 p+2}{p^2 r-p^2+p r+p-r-1} & 0 & -\frac{r-1}{p^2 r-p^2+p r+p-r-1} \\
 -\frac{-2 p^2 r^2+3 p^2 r-p^2-3 p r+p+r^2+r}{r \left(p^2 r-p^2+p r+p-r-1\right)} & -\frac{-2 p^2 r^2+3 p^2 r-p^2+p r^2-3 p r+p+r}{r \left(p^2 r-p^2+p r+p-r-1\right)} &
   -\frac{p^2 r^2-2 p^2 r+p^2-p r^2+2 p r-p}{r \left(p^2 r-p^2+p r+p-r-1\right)} & -\frac{p^2 r^2-2 p^2 r+p^2-p r^2+2 p r-p}{r \left(p^2 r-p^2+p r+p-r-1\right)} \\ \hline
 0 & -\frac{(p-1) (p r-p+1)}{p^2 r-p^2+p r+p-r-1} & 0 & 0 \\
 0 & -\frac{\left(p^2-p\right) (r-1)}{p^2 r-p^2+p r+p-r-1} & \frac{(p-1) (r-1)}{p^2 r-p^2+p r+p-r-1} & 0 \\
\end{array}
\right)_{16,v}
$$
$$
\left(
\begin{array}{cc|cc}
 -\frac{-2 p^2 r+p^2+p r-p+1}{p^2 r-1} & -\frac{-2 p^2 r+p^2+2 p r-2 p-r+2}{p^2 r-1} & 0 & 0 \\
 -\frac{-2 p^2 r^2+2 p^2 r-p^2+p r^2-2 p r+p+r}{r \left(p^2 r-1\right)} & -\frac{-2 p^2 r^2+2 p^2 r-p^2+2 p r^2-2 p r+p-r^2+r}{r \left(p^2 r-1\right)} & -\frac{(r-1)
   \left(p^2 r-p^2+p\right)}{r \left(p^2 r-1\right)} & -\frac{(p-1) (r-1) (p r-p+r)}{r \left(p^2 r-1\right)} \\ \hline
 0 & -\frac{p^2 r-p^2-2 p r+2 p+r-1}{p^2 r-1} & 0 & \frac{(p-1) (r-1)}{p^2 r-1} \\
 -\frac{p-1}{p^2 r-1} & -\frac{p^2 r-p^2-2 p r+2 p+r-1}{p^2 r-1} & 0 & 0 \\
\end{array}
\right)_{16,v}
$$
$$
\left(
\begin{array}{cc|cc}
 -\frac{1-p r}{p^2 r-1} & -\frac{-2 p r+r+1}{p^2 r-1} & 0 & -\frac{r-1}{p^2 r-1} \\
 -\frac{-p^2-p r+p+1}{p^2 r-1} & -\frac{-p^2-2 p r+2 p+r}{p^2 r-1} & \frac{(p-1) (p r-p)}{p^2 r-1} & \frac{(p-1) (p r-p)}{p^2 r-1} \\ \hline
 0 & \frac{-2 p r+r+1}{p^2 r-1}+1 & 0 & 0 \\
 -\frac{p-1}{p^2 r-1} & \frac{-2 p r+r+1}{p^2 r-1}+1 & 0 & -\frac{(p-1) (r-1)}{p^2 r-1} \\
\end{array}
\right)_{16,v},\left(
\begin{array}{cc|cc}
 -\frac{p-r+1}{p r-p-1} & -\frac{p-r+1}{p r-p-1} & 0 & 0 \\
 -\frac{2-r}{p r-p-1} & -\frac{1-r}{p r-p-1} & -\frac{r-1}{p r-p-1} & 0 \\ \hline
 0 & 0 & 0 & \frac{(p-1) (r-1)}{p r-p-1} \\
 0 & -\frac{1-p}{p r-p-1} & \frac{(p-1) (r-1)}{p r-p-1} & \frac{(p-1) (r-1)}{p r-p-1} \\
\end{array}
\right)_{16,v},
$$
$$
\left(
\begin{array}{cc|cc}
 -\frac{-2 p^2 r+2 p r-r+1}{p^2 r-1} & -\frac{-2 p^2 r+3 p r-p-2 r+2}{p^2 r-1} & 0 & -\frac{r-1}{p^2 r-1} \\
 -\frac{-2 p^2 r+p^2+2 p r-2 p-r+2}{p^2 r-1} & -\frac{-2 p^2 r+p^2+3 p r-2 p-2 r+2}{p^2 r-1} & -\frac{p^2 r-p^2-2 p r+2 p+r-1}{p^2 r-1} & -\frac{p^2 r-p^2-2 p r+2 p+r-1}{p^2
   r-1} \\ \hline
 0 & -\frac{(p-1) (p r-r+1)}{p^2 r-1} & 0 & 0 \\
 0 & -\frac{\left(p^2-2 p+1\right) r}{p^2 r-1} & \frac{(p-1) (r-1)}{p^2 r-1} & 0 \\
\end{array}
\right)_{16,v}
$$
$$
\left(
\begin{array}{cc|cc}
 1-r & 1 & 0 & r-1 \\
 2-r & 1 & r-1 & r-1 \\ \hline
 0 & p r-r & 0 & 0 \\
 0 & p r-p-r+1 & p (-r)+p+r-1 & 0 \\
\end{array}
\right)_{16,v}, \left(
\begin{array}{cc|cc}
 -\frac{-2 p r+2 p-1}{p r-p+1} & -\frac{-2 p r+2 p+r-2}{p r-p+1} & 0 & 0 \\
 -\frac{-2 p r+2 p-1}{p r-p+1} & -\frac{-2 p r+2 p+r-1}{p r-p+1} & -\frac{p r-p}{p r-p+1} & -\frac{p r-p}{p r-p+1} \\ \hline
 0 & -\frac{p r-p-r+1}{p r-p+1} & 0 & 0 \\
 0 & -\frac{p r-p-r+1}{p r-p+1} & 0 & 0 \\
\end{array}
\right)_{4,s}
$$
$$
\left(
\begin{array}{cc|cc}
 -\frac{p (-r)+p+r-1}{2 p r-p-r+1} & -\frac{p (-r)+p-1}{2 p r-p-r+1} & 0 & 0 \\
 -\frac{-p r+r-1}{2 p r-p-r+1} & \frac{p r}{2 p r-p-r+1} & \frac{p (r-1)}{2 p r-p-r+1} & \frac{p (r-1)}{2 p r-p-r+1} \\ \hline
 0 & \frac{(p-1) r}{2 p r-p-r+1} & 0 & 0 \\
 0 & \frac{(p-1) r}{2 p r-p-r+1} & 0 & 0 \\
\end{array}
\right)_{4,s}
$$
$$
\left(
\begin{array}{cc|cc}
 -\frac{-2 p^2 r+2 p^2-2 p+r+1}{p^2 r-p^2+p r+p-r-1} & -\frac{-2 p^2 r+2 p^2+p r-3 p+2}{p^2 r-p^2+p r+p-r-1} & 0 & 0 \\
 -\frac{-2 p^2 r^2+3 p^2 r-p^2-3 p r+p+r^2+r}{r \left(p^2 r-p^2+p r+p-r-1\right)} & -\frac{-2 p^2 r^2+3 p^2 r-p^2+p r^2-3 p r+p+r}{r \left(p^2 r-p^2+p r+p-r-1\right)} &
   -\frac{(p r-p) (p r-p+1)}{r \left(p^2 r-p^2+p r+p-r-1\right)} & -\frac{(p r-p-r+1) (p r-p+r)}{r \left(p^2 r-p^2+p r+p-r-1\right)} \\ \hline
 0 & -\frac{(p-1) (p r-p+1)}{p^2 r-p^2+p r+p-r-1} & 0 & -\frac{p (-r)+p+r-1}{p^2 r-p^2+p r+p-r-1} \\
 0 & -\frac{p^2 r-p^2-p r+p}{p^2 r-p^2+p r+p-r-1} & -\frac{p (-r)+p+r-1}{p^2 r-p^2+p r+p-r-1} & -\frac{p (-r)+p+r-1}{p^2 r-p^2+p r+p-r-1} \\
\end{array}
\right)_{16,v}
$$
$$
\left(
\begin{array}{cc|cc}
 -\frac{-2 p^2 r+2 p r-r+1}{p^2 r-1} & -\frac{-2 p^2 r+3 p r-p-2 r+2}{p^2 r-1} & 0 & 0 \\
 -\frac{-2 p^2 r+p^2+2 p r-2 p-r+2}{p^2 r-1} & -\frac{-2 p^2 r+p^2+3 p r-2 p-2 r+2}{p^2 r-1} & -\frac{p^2 r-p^2-p r+p+r-1}{p^2 r-1} & -\frac{(p-1) (p r-p)}{p^2 r-1} \\ \hline
 0 & -\frac{p^2 r-3 p r+p+2 r-1}{p^2 r-1} & 0 & \frac{(p-1) (r-1)}{p^2 r-1} \\
 -\frac{p-1}{p^2 r-1} & -\frac{p^2 r-3 p r+p+2 r-1}{p^2 r-1} & 0 & 0 \\
\end{array}
\right)_{16,v}
$$

\subsection{Examples of interesting facets of \texorpdfstring{$\mathbb{S}_{(p,1)^*,(r=p,1)^*}$}{}}
\label{Appendix::random_symbol_perfect}

Note that for any $p,r\in[0,1)$, $\mathbb{S}_{(p,1)^*,(r,1)^*} \subseteq \mathbb{S}_{(p=p',1)^*,(r=p',1)^*}$, where $p'\coloneqq \max\{p,r\}$. Therefore, for simplicity here we take $r=p$. We numerically found 5 interesting facets of this polytope, i.e., ones that indicate a quantum violation following a numerical search. We managed to derive the following two of them analytically.

$$G1=\left(
\begin{array}{cc|cc}
 1 & \frac{p (p (p ((p-3) p-1)+10)-10)+2}{p (p (p ((p-3) p-1)+9)-8)} & 0 & 0 \\
 1 & \frac{p (p (p ((p-3) p-1)+10)-10)+1}{p (p (p ((p-3) p-1)+9)-8)} & \frac{1}{p (p (p ((p-3) p-1)+9)-8)} & 0 \\ \hline
 0 & \frac{(p-1) ((p-3) p+1)}{p (p (p ((p-3) p-1)+9)-8)} & 0 & 0 \\
 \frac{(p-2) (p-1)}{p (p ((p-3) p-1)+9)-8} & \frac{(p-2) (p-1)^2}{p (p (p ((p-3) p-1)+9)-8)} & 0 & \frac{1-p}{p (p (p ((p-3) p-1)+9)-8)} \\
\end{array}
\right)$$

$$
G2=\left(
\begin{array}{cc|cc}
 \frac{(p-2) (p-1)}{p (p ((p-3) p-1)+9)-8}+1 & \frac{p \left(p^4-3 p^3+6 p-6\right)+1}{p (p (p ((p-3) p-1)+9)-8)} & 0 & \frac{1-p}{p (p (p ((p-3) p-1)+9)-8)} \\
 \frac{p \left(p^4-3 p^3+5 p-4\right)-1}{p (p (p ((p-3) p-1)+9)-8)} & \frac{p \left(p^4-3 p^3+5 p-4\right)-1}{p (p (p ((p-3) p-1)+9)-8)} & \frac{(p-2) (p-1)}{p (p (p ((p-3)
   p-1)+9)-8)} & -\frac{(p-2) (p-1)^2}{p (p (p ((p-3) p-1)+9)-8)} \\ \hline
 0 & 0 & 0 & \frac{p-1}{p (p (p ((p-3) p-1)+9)-8)} \\
 \frac{1}{p (p (p ((p-3) p-1)+9)-8)} & 0 & \frac{p-1}{p (p (p ((p-3) p-1)+9)-8)} & \frac{p-1}{p (p (p ((p-3) p-1)+9)-8)} \\
\end{array}
\right)
$$

\section{Proof of \cref{Theorem:modelconnection}} \label{Proof::Theorem_Model_connection}

\noindent \textbf{Lemma 1.} \textit{ Let $p,r\in[0,1)$, and $\mathbb{S}_{(p,1)^*,(r,1)^*}$ be the set of LHV behaviors obtained under the asymmetric signaling channel, as defined in \cref{eq:poly_star}. Let $\mathbf{S}_{(p,r)}$ denote the set of LHV behaviors obtained under a general relaxation of parameter independence, as defined below \cref{Def::GenPD}. Then
    \begin{equation*}
        \mathbb{S}_{(p,1)^*,(r,1)^*} = \mathbf{S}_{(p,r)}. 
    \end{equation*}}

\begin{proof}
We consider the case $p=r$ for simplicity, and note that the same proof holds for the general case. Both sets $\mathbf{S}_{p,p}$ and $\mathbb{S}_{(p,1)^*,(p,1)^*}$ are convex, so it suffices to show that every extremal point of $\mathbf{S}_{p,p}$ is contained in $\mathbb{S}_{(p,1)^*,(p,1)^*}$. Any extremal point of $\mathbf{S}_{p,p}$ is of the form 
\begin{equation*}
    \p(a,b|x,y) = \p_{A}(a|x,y)\p_{B}(b|x,y), 
\end{equation*}
where the marginals satisfy
\begin{equation} 
    | \p_{A}(a|x,0) - \p_{A}(a|x,1) | = p \ \ \ \text{and} \ \ \ | \p_{B}(b|0,y) - \p_{B}(b|1,y) | = p. \label{eq:e3}
\end{equation}
An extremal point of $\mathbb{S}_{(p,1),(p,1)}$ is of the form
\begin{equation*}
    \p(a,b|x,y) = \p_{A}(a|x,y)\p_{B}(b|x,y)
\end{equation*}
where the marginals can be expressed as the output of an asymmetric signaling channel,
\begin{equation}
\begin{aligned}
    \p_{A}(a|x,0) &=  p \, \tilde{\p}_{A}(a|x,0) + (1-p) \, \tilde{\p}_{A}(a|x,1), \\
    \p_{A}(a|x,1) &=  \tilde{\p}_{A}(a|x,1), \\
    \p_{B}(b|0,y) &=  p  \, \tilde{\p}_{B}(b|0,y) + (1-p) \, \tilde{\p}_{B}(b|1,y) \ \ \
\text{and} \\
    \p_{B}(b|1,y) &=  \tilde{\p}_{B}(b|1,y).
\end{aligned} \label{eq:p-resp}
\end{equation}
Note that a direct calculation shows that these marginals satisfy \cref{eq:e3}, i.e., $\mathbb{S}_{(p,1),(p,1)} \subseteq \mathbf{S}_{p,p}$. A similar calculation shows that the same holds for $\mathbb{S}_{(1,p),(p,1)}, \, \mathbb{S}_{(p,1),(1,p)}$ and $\mathbb{S}_{(1,p),(1,p)}$, where the symbols sent perfectly are changed.

To prove that every point in $\mathbf{S}_{p,p}$, is contained in the convex hull of $\mathbb{S}_{(p,1),(1,p)}, \mathbb{S}_{(1,p),(p,1)}, \, \mathbb{S}_{(p,1),(1,p)}$ and $\mathbb{S}_{(1,p),(1,p)}$, consider any extremal point of $\mathbf{S}_{p,p}$ that satisfies \cref{eq:e3}. We know that 
\begin{equation*}
    (\p_{A}(a|x,0) - \p_{A}(a|x,1))(-1)^t = p , \ \ \ \text{and} \ \ \ (\p_{B}(b|0,y) - \p_{B}(b|1,y))(-1)^u = p
\end{equation*}
for some $t,u \in \{0,1\}$. 

\vspace{0.2cm}

\noindent \underline{Case 1:} Suppose $t = u = 0$. Define the following response functions by inverting \cref{eq:p-resp} (this is always possible since $p > 0$):
\begin{equation}
\begin{aligned}
    \tilde{\p}_{A}(a|x,0) &=  \frac{1}{p}  \p_{A}(a|x,0) - \frac{1-p}{p} \p_{A}(a|x,1), \\
    \tilde{\p}_{A}(a|x,1) &=  \p_{A}(a|x,1), \\
    \tilde{\p}_{B}(b|0,y) &=  \frac{1}{p} \tilde{\p}_{B}(b|0,y) - \frac{1-p}{p} \tilde{\p}_{B}(b|1,y) \ \ \
\text{and} \\
    \tilde{\p}_{B}(b|1,y) &=  \p_{B}(b|1,y).
\end{aligned} \label{eq:resp_case1}
\end{equation}
We now verify that these are valid response functions. Firstly, they are normalized since $\p_{A}$ and $\p_{B}$ are normalized. Both $\tilde{\p}_{A}(a|x,1)$ and $\tilde{\p}_{B}(b|1,y)$ are positive because $\p_{A}$ and $\p_{B}$ are positive. For $\tilde{\p}_{A}(a|x,0)$, note that because $\p$ belongs to $\mathbf{S}_{p,p}$, its marginals satisfy $\p_{A}(a|x,0) = \p_{A}(a|x,1) + p$ and $\p_{B}(b|0,y) = \p_{B}(b|1,y) + p$, and therefore 
\begin{equation*}
    \p_{A}(a|x,0) = \p_{A}(a|x,1) + p = \p_{A}(a|x,1)(1-p) + p (1+\p_{A}(a|x,1)) \geq \p_{A}(a|x,1)(1-p),
\end{equation*}
which implies $\tilde{\p}_{A}(a|x,0) \geq 0$. We can similarly show $\tilde{\p}_{B}(b|0,y) \geq 0$. We therefore have that the behavior belongs to $\mathbb{S}_{(p,1),(1,p)}$, i.e., is of the form \cref{eq:p-resp} where the response functions are given by \cref{eq:resp_case1}. 

\vspace{0.2cm}

\noindent \underline{Case 2:} Suppose $t = u = 1$. Define the following response functions:
\begin{equation}
\begin{aligned}
    \tilde{\p}_{A}(a|x,0) &=  \p_{A}(a|x,0), \\
    \tilde{\p}_{A}(a|x,1) &=  \frac{1}{p}  \p_{A}(a|x,1) - \frac{1-p}{p} \p_{A}(a|x,0), \\
    \tilde{\p}_{B}(b|0,y) &=  \p_{B}(b|0,y) \ \ \
\text{and} \\
    \tilde{\p}_{B}(b|1,y) &=  \frac{1}{p}  \p_{B}(b|1,y) - \frac{1-p}{p} \p_{B}(b|0,y).
\end{aligned} \label{eq:resp_case2}
\end{equation}  
We again have normalization and positivity of the first and third line. For the second line, 
\begin{equation*}
    \p_{A}(a|x,1) = \p_{A}(a|x,0) + p = (1-p)\p_{A}(a|x,0) + p(1 + \p_{A}(a|x,0)) \geq (1-p)\p_{A}(a|x,0),
\end{equation*}
implying $\tilde{\p}_{A}(a|x,1) \geq 0$. A similar argument establishes that $\tilde{\p}_{B}(b|1,y) \geq 0$. Inverting \cref{eq:resp_case2}, we can therefore write $\p$ as the output of a channel that sends $X=0$ and $Y = 0$ perfectly, and $X=1$ and $Y=1$ with fidelity $p$, and thus it is a member of $\mathbb{S}_{(1,p),(1,p)}$.

\vspace{0.2cm}

\noindent \underline{Cases 3 and 4:} We can proceed in the same way as before. When $t = 0$ and $u = 1$, Alice's response functions are those in \cref{eq:resp_case1}, and Bob's are those in \cref{eq:resp_case2}, and we find the behavior belongs to $\mathbb{S}_{(p,1),(1,p)}$. When $t = 1$ and $u = 0$, we get the opposite. Alice's response functions are those in \cref{eq:resp_case2}, and Bob's are those in \cref{eq:resp_case1}, and we find the behavior belongs to $\mathbb{S}_{(1,p),(p,1)}$. This completes the proof.  

\end{proof}

\section{Proof of \cref{Lemma::MD_PD_and_PD,Lemma::jointbound}} \label{Appendix::MD_PD_and_PD}

\noindent \textbf{Lemma 2.} \textit{Let $A,B,X, Y$ and $\Lambda$ be random variables where $A,B,X$ and $Y$ are binary. Then for any behavior $\p(A,B|X,Y)$ admitting
    \begin{equation*}
        \p(a,b|x,y) = \sum_{\lambda} \p_{\Lambda}(\lambda|x,y)\p_{A}(a|x,y,\lambda)\p_{B}(b|x,y,\lambda),
    \end{equation*}
    where $\p_{\Lambda},\,\p_{A}$ and $\p_{B}$ are distributions, there exists a random variable $\Lambda'$ and tuple of distributions $(\p_{\Lambda'}',\p_{A}',\p_{B}')$ such that 
    \begin{equation*}
        \p(a,b|x,y) = \sum_{\lambda'} \p'_{\Lambda'}(\lambda')\p'_{A}(a|x,y,\lambda')\p'_{B}(b|x,y,\lambda').
    \end{equation*}}
\begin{proof}
    Since the starting model admits measurement dependence, for any fixed $x$ and $y$,
\begin{equation*}
    \p_{\Lambda} (\lambda|x,y) = \frac{\p(\lambda)\p(x,y|\lambda)}{\p(x,y)} = 4 \p(\lambda) p (x,y|\lambda),
\end{equation*}
where the last equality follows from the fact that $\p(x,y)=\sum_\lambda \p(x,y|\lambda)\p(\lambda) = 1/4$. Note, that $\sum_\lambda \p_{\Lambda}(\lambda|x,y) = 1$ for every choice of inputs $x$ and $y$. Let us define four random variables $\{\Lambda_{x,y}\}_{x,y}$ that take the same values as $\Lambda$, such that $\p(\Lambda_{x,y}) = \p_{\Lambda}(\Lambda|x,y)$ for every $x$ and $y$. With this we define $\Lambda'$ as a random variable such that
\begin{equation*}
    \p'_{\Lambda'}(\Lambda')  = \p(\Lambda_{00})\p(\Lambda_{01})\p(\Lambda_{10})\p(\Lambda_{11}).
\end{equation*}
In other words, $\Lambda'$ is the tuple $(\Lambda_{00},\Lambda_{01},\Lambda_{10},\Lambda_{11})$. With this one gets that $\p'_{\Lambda'}(\Lambda'|x,y) = \p'_{\Lambda'}(\Lambda')$ for any choice of $x$ and $y$. Now since, Alice and Bob have access to both $X$ and $Y$, when $\Lambda'$ is provided to both of them, they can choose their shared randomness to be $\Lambda_{x,y}$. In particular, we set

\begin{equation*}
    \p'_{A}(a|x,y,\lambda') = \p_{A}(a|x,y,\lambda_{x,y})  \quad \text{and} \quad  \p'_{B}(b|x,y,\lambda') = \p_{B}(b|x,y,\lambda_{x,y}).
\end{equation*}
Therefore, 
\begin{equation*}
    \begin{split}
        &\sum_{\lambda'}\p'_{\Lambda'}(\lambda')\p'_{A}(a|x,y,\lambda')\p'_{B}(b|x,y,\lambda')\\
        &=\sum_{(\lambda_{\tilde x, \tilde y})_{\tilde x, \tilde y}} \left(\prod_{\tilde x,\tilde y} \p(\lambda_{\tilde x, \tilde y})  \right) \p_{A}(a|x,y,\lambda_{x,y}) \p_{B}(b|x,y,\lambda_{x,y})\\
        &= \sum_{\lambda_{x,y}} \p(\lambda_{x,y}) \p_{A}(a|x,y,\lambda_{x,y}) \p_{B}(b|x,y,\lambda_{x,y}) \left[ \prod_{\tilde x\neq x, \tilde y \neq y}  \left( \sum_{\tilde x, \tilde y} \p(\lambda_{\tilde x, \tilde y}) \right)\right] \\
        &=\sum_{\lambda_{x,y}} \p(\lambda_{x,y}) \p_{A}(a|x,y,\lambda_{x,y}) \p_{B}(b|x,y,\lambda_{x,y}) \\
        &=\sum_{\lambda} \p_{\Lambda}(\lambda | x,y) \p_{A}(a|x,y,\lambda) \p_{B}(b|x,y,\lambda) \\
        &= \p(a,b|x,y).
    \end{split} 
\end{equation*}
Here the third equality follows from the fact that each sum in the product is 1 and the fourth equality follows from noticing that since all terms depend on $x$ and $y$, one can rename $\lambda_{x,y}$ as $\lambda$.    
\end{proof}

We now turn our attention to \cref{Lemma::jointbound}. The proof follows from two supporting lemmas that we state first.

\begin{lemma}
\label{Lemma::DiffBound}
    Let $\Lambda, X$ and  $Y$ be three random variables, where $X$ and $Y$ are binary. Consider a distribution $q(X,Y,\Lambda)$ over these such that $q(x,y|\lambda) \geq l$ and $0 \leq l \leq 1/4$ for any $\lambda,x$ and $y$, and $q(x,y)=1/4$. Then for any function $g : \Lambda \to [0,1]$ ,
    \begin{align*}
        &\left| \sum_{\lambda} q(\lambda|x,y)g(\lambda)  -  \sum_{\lambda} q(\lambda|x,y')g(\lambda) \right| \leq \frac{1-4l}{1-2l} \ \forall\  x, \\
        &\left| \sum_{\lambda} q(\lambda|x,y)g(\lambda)  -  \sum_{\lambda} q(\lambda|x',y)g(\lambda) \right| \leq \frac{1-4l}{1-2l} \ \forall\  y. 
    \end{align*}    
\end{lemma}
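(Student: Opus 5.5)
The plan is to reduce the statement to a linear optimization over $g$ and then bound the resulting total-variation-type quantity using only the constraints $q(x,y|\lambda)\geq l$ and $q(x,y)=1/4$. We treat the first inequality (fixed $x$, varying $y\to y'$); the second follows by the same argument with the roles of the inputs exchanged.

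\emph{Step 1 (Bayes rewriting).} Since $q(x,y)=1/4$, Bayes' rule gives $q(\lambda|x,y)=4q(\lambda)q(x,y|\lambda)$. The difference in question therefore becomes
\begin{equation*}
D(g)=4\sum_\lambda q(\lambda)g(\lambda)\big[q(x,y|\lambda)-q(x,y'|\lambda)\big].
\end{equation*}

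\emph{Step 2 (reduction to indicator functions).} $D(g)$ is linear in $g$, and $g$ ranges over $[0,1]^{\Lambda}$. Hence $D$ is maximized by the indicator $g=\1_S$ of the set $S=\{\lambda : q(x,y|\lambda)>q(x,y'|\lambda)\}$. By the symmetry $y\leftrightarrow y'$, it suffices to bound $D$ from above, since the bound on $-D$ follows identically.

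Write $w=\sum_{\lambda\in S}q(\lambda)$. Define
\begin{equation*}
\alpha=\sum_{\lambda\in S}q(\lambda)q(x,y|\lambda), \qquad \beta=\sum_{\lambda\in S}q(\lambda)q(x,y'|\lambda),
\end{equation*}
so that $D(\1_S)=4(\alpha-\beta)$.

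\emph{Step 3 (bounding $\alpha$ and $\beta$).} We use two elementary facts.
\begin{itemize}
\item The marginal constraint gives $\sum_\lambda q(\lambda)q(x,y|\lambda)=1/4$. Combined with $q(x,y|\lambda)\geq l$ on the complement $S^c$, this yields $\alpha\leq 1/4-l(1-w)$.
\item The pointwise bound $q(x,y'|\lambda)\geq l$ on $S$ gives $\beta\geq lw$.
\end{itemize}
Together these give
\begin{equation*}
\alpha-\beta\leq \tfrac14 - l(1-w) - lw = \tfrac14 - l,
\end{equation*}
and hence $|D(g)|\leq 1-4l$ for every $g$.

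\emph{Step 4 (conclusion).} Since $0\leq l\leq 1/4$, we have $1-2l\in[1/2,1]$. Therefore
\begin{equation*}
1-4l\leq \frac{1-4l}{1-2l},
\end{equation*}
which is the claimed bound.

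If a sharper, $w$-dependent estimate were needed, we would additionally use $\alpha+\beta\leq w(1-2l)$, since the other two inputs carry weight at least $2lw$ on $S$, together with the analogous bound on $S^c$. This gives $\alpha-\beta\leq(1-4l)\min\{w,1-w\}$, but this refinement is not required here.

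The only delicate point is Step 2: one must ensure that the extremal $g$ is an indicator, and that the sign and absolute value are handled by the $y\leftrightarrow y'$ symmetry. Everything else is bookkeeping with the marginal constraint $q(x,y)=1/4$. The fact that the argument yields a bound stronger than the one stated suggests that the stated constant is not tight, and it is used only as an upper bound in the proof of \cref{Lemma::jointbound}.
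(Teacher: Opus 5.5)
Your proof is correct. It takes a different and more elementary route than the paper, and it proves a stronger bound.

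\textbf{The paper's route.} The paper works with the pointwise likelihood ratio $\mu(\lambda)=q(x,y|\lambda)/q(x,y'|\lambda)$. By the per-$\lambda$ constraints $l\leq q(x,y|\lambda)\leq 1-3l$, this ratio lies in $[l/(1-3l),(1-3l)/l]$. It writes the difference as $\sum_\lambda q(\lambda|x,y')(\mu-1)g(\lambda)$ and bounds this by the total-variation mass $\gamma$. It then bounds $\gamma$ by $\min\{\frac{1-4l}{l}w',\frac{1-4l}{1-3l}(1-w')\}$, where $w'$ is the $(x,y')$-posterior mass of $\{\mu>1\}$. Maximizing over $w'$ gives $(1-4l)/(1-2l)$. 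The hypothesis $q(x,y)=1/4$ enters only through Bayes' rule, to identify the posterior ratio with the likelihood ratio.

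\textbf{Your route.} You instead use $q(x,y)=1/4$ as a global budget. With $w=q(S)$ the prior mass of $S$:
\begin{itemize}
\item the $(x,y)$-posterior mass of $S$ is at most $1-4l(1-w)$;
\item the $(x,y')$-posterior mass of $S$ is at least $4lw$.
\end{itemize}
So $w$ cancels and you get $1-4l$ directly, with no division by $l$. The paper's ratio bound involves $1/l$, and $\mu$ is undefined wherever $q(x,y'|\lambda)=0$.

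\textbf{Your constant is optimal.} Take two equiprobable hidden values with
\begin{itemize}
\item $q(\cdot|\lambda_1)=(\tfrac12-l,\,l,\,\tfrac14,\,\tfrac14)$,
\item $q(\cdot|\lambda_2)=(l,\,\tfrac12-l,\,\tfrac14,\,\tfrac14)$,
\end{itemize}
on $(x,y)=(0,0),(0,1),(1,0),(1,1)$. This satisfies all hypotheses, and taking $g$ to be the indicator of $\lambda_1$ gives a difference of exactly $1-4l$.

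\textbf{Why the paper's constant is loose.} It is tight only for the pointwise ratio constraints. Its extremal configuration needs $\mu=(1-3l)/l$ on one set and $\mu=l/(1-3l)$ on the other. That forces $q(x',0|\lambda)=q(x',1|\lambda)=l$ for every $\lambda$ (with $x'\neq x$), which contradicts $q(x',\cdot)=1/4$ when $l<1/4$.

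\textbf{Consequence.} Inserting your bound into the proof of \cref{Lemma::jointbound} would improve $\kappa$ to $\epsilon+(1-\epsilon)(1-4l)$.
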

\begin{proof}
    We prove the first inequality, and the second follows similarly. Since, $q(x,y|\lambda) \geq l$, we get using normalization $q(x,y|\lambda) \leq 1-3l$. Let $\mu = q(\lambda | x,y) / q(\lambda|x,y') $. Then using Bayes' rule and the bounds on $q(x,y|\lambda)$, we get
    \begin{equation}
   \frac{l}{1-3l} \leq\frac{q(x,y|\lambda)}{q(x,y'|\lambda)} = \mu \leq \frac{1-3l}{l} \label{eq:mu_ineqs}
    \end{equation}
Now, for any function $g : \Lambda \to [0,1]$, let 
\begin{align*}
    \Gamma &\coloneqq \left|  \sum_{\lambda} q(\lambda|x,y)g(\lambda)  -  \sum_{\lambda} q(\lambda|x,y')g(\lambda) \right| \\
    &=\left|  \sum_{\lambda} \mu q(\lambda|x,y')g(\lambda)  -  \sum_{\lambda} q(\lambda|x,y')g(\lambda) \right| \\
    &= \left| \sum_{\lambda}q(\lambda|x,y')(\mu -1) g(\lambda)  \right|
\end{align*}
Note that $\sum_{\lambda} q(\lambda|x,y')(\mu-1) =0$. Since $\mu \in \mathbb{R}$, this implies that 
\begin{equation*}
   \sum_{\lambda| \mu > 1} q(\lambda|x,y')(\mu-1) = \sum_{\lambda| \mu \leq 1} q(\lambda|x,y')(1-\mu) =: \gamma.
\end{equation*}
By splitting the sum we can rewrite $\Gamma$ as
\begin{equation*}
    \Gamma = \left|  \sum_{\lambda| \mu > 1} q(\lambda|x,y')(\mu-1)g(\lambda)  -\sum_{\lambda| \mu \leq 1} q(\lambda|x,y')(1-\mu)g(\lambda) \right|
\end{equation*}
Now, since $g(\lambda) \in [0,1]$, $\Gamma = |a-b|$ for $a,b\in [0,\gamma]$, and therefore $\Gamma \leq \gamma$.

Next we find an upper bound on $\gamma$. First, by \cref{eq:mu_ineqs} $\mu - 1 \leq (1-4l)/l $ and we get that
\begin{equation*}
    \gamma \leq \frac{1-4l}{l}\sum_{\lambda| \mu > 1} q(\lambda|x,y').
\end{equation*}
But \cref{eq:mu_ineqs} also implies that $1-\mu\leq  (1-4l)/(1-3l)$, and we get
\begin{equation*}
    \gamma \leq  \frac{1-4l}{1-3l} \left(\sum_{\lambda| \mu \leq 1} q(\lambda|x,y') \right) = \frac{1-4l}{1-3l} \left(1-\sum_{\lambda| \mu > 1} q(\lambda|x,y') \right).
\end{equation*}
From these two bounds, it follows that 
\begin{equation*}
    \gamma \leq \min \left\{\frac{1-4l}{l}\sum_{\lambda| \mu > 1} q(\lambda|x,y') ,\frac{1-4l}{1-3l} \left(1-\sum_{\lambda| \mu > 1} q(\lambda|x,y') \right)\right\}
\end{equation*}
Note that here the minimum depends on the sum inside the expressions, which may vary for different hidden variable models. However, we need a bound that holds for all hidden variable models. Note that $(1-4l)/l \geq 0$ and $(1-4l)/(1-3l)\geq 0$ for $l\in[0,1/4]$. Letting $ w= \sum_{\lambda| \mu > 1} q(\lambda|x,y') \in [0,1]$, we can find a universal bound by taking the worst case over $w\in [0,1]$, i.e.,
\begin{equation*}
    \gamma \leq \max_{w\in[0,1]} \min\Big\{ \frac{1-4l}{l} w , \frac{1-4l}{1-3l} (1-w)\Big\}.
\end{equation*}
Note that $\frac{1-4l}{l}w$ is linearly increasing in $w$, while $\frac{1-4l}{1-3l}(1-w)$ is linearly decreasing in $w$, hence the maximum occurs when both arguments of the inner minimization are equal, i.e., $\frac{1-4l}{l} w = \frac{1-4l}{1-3l}(1-w)$. Solving this, we find $w = l/(1-2l)$, and the objective function evaluates to $\gamma \leq \frac{1-4l}{1-2l}$. This completes the proof.
\end{proof}

The next lemma shows that any signaling distribution can be expressed as a convex combination of its signaling and non-signaling fractions.

\begin{lemma}
\label{Lemma::LocalPart}
    Suppose $\left|\p(a|x,y,\lambda) - \p(a|x,y',\lambda)\right| \leq \epsilon$ for all $a,x,y,y'$ and $\lambda$ and some $\epsilon \in [0,1]$. Then there exists probability distributions $u(A|X,\Lambda)$, $v(A|X,Y,\Lambda)$ such that
    \begin{equation*}
        \p(a|x,y,\lambda) = \epsilon v(a|x,y,\lambda) + (1-\epsilon) u(a|x,\lambda) 
    \end{equation*}
\end{lemma}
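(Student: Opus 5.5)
The plan is to fix $x$ and $\lambda$ and treat them as parameters. Everything then reduces to a statement about the two binary distributions $\p(\cdot|x,0,\lambda)$ and $\p(\cdot|x,1,\lambda)$, which differ pointwise by at most $\epsilon$. I will construct the input-independent part $u$ explicitly from the pointwise minima over $y$. Then I will check that what remains is a nonnegative vector of total mass $\epsilon$, so that dividing it by $\epsilon$ gives $v$.

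\textbf{Degenerate cases.} If $\epsilon=0$, then $\p(a|x,y,\lambda)$ does not depend on $y$. I take $u(a|x,\lambda)=\p(a|x,0,\lambda)$ and let $v$ be an arbitrary distribution, say uniform. If $\epsilon=1$, I take $v=\p$ and $u$ uniform. Both choices clearly satisfy the claimed decomposition.

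\textbf{Main construction for $\epsilon\in(0,1)$.} Define
\begin{equation*}
m_a:=\min_{y}\p(a|x,y,\lambda), \qquad S:=m_0+m_1 .
\end{equation*}
Let $y_0$ attain $m_0$ and $y_1$ attain $m_1$. By normalization,
\begin{equation*}
S=\p(0|x,y_0,\lambda)+1-\p(0|x,y_1,\lambda)=1-\big(\p(0|x,y_1,\lambda)-\p(0|x,y_0,\lambda)\big).
\end{equation*}
The hypothesis then gives $1-\epsilon\le S\le 1$, and in particular $S>0$. I set $u(a|x,\lambda):=m_a/S$. This is a valid distribution that depends only on $x$ and $\lambda$.

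Next I define
\begin{equation*}
v(a|x,y,\lambda):=\frac{1}{\epsilon}\Big(\p(a|x,y,\lambda)-(1-\epsilon)\,u(a|x,\lambda)\Big).
\end{equation*}
I need to check two properties of $v$.
\begin{itemize}
\item \emph{Normalization.} Summing over $a$ gives $(1-(1-\epsilon))/\epsilon=1$.
\item \emph{Positivity.} Since $\p(a|x,y,\lambda)\ge m_a$, we have
\begin{equation*}
\p(a|x,y,\lambda)-(1-\epsilon)\frac{m_a}{S}\;\ge\; m_a\Big(1-\frac{1-\epsilon}{S}\Big)\;\ge\;0,
\end{equation*}
where the last step uses $S\ge 1-\epsilon$.
\end{itemize}
Rearranging the definition of $v$ then yields exactly the decomposition in the statement.

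The only nontrivial step is the lower bound $S\ge 1-\epsilon$. This is where binarity of $A$ and $Y$ enters: the sum of the two minima is controlled by a single difference of the form appearing in the hypothesis. Everything else is bookkeeping.
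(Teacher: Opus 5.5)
Your proof is correct and is essentially the paper's construction: your $u(a|x,\lambda)=m_a/S$ is the paper's $\varphi$ (in its case (i), $\varphi=(\alpha_1,1-\alpha_0)/(1-\nu)$ with $S=1-\nu$), and your $v$ equals the paper's $\frac{\nu}{\epsilon}\zeta+\frac{\epsilon-\nu}{\epsilon}\varphi$; you simply bypass the case split and the deterministic signaling response $\zeta$ by checking positivity of $v$ directly. You also handle the edge cases $\epsilon\in\{0,1\}$, where the paper's divisions by $\epsilon$ and by $1-\nu$ are not justified.
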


\begin{proof}
    For any fixed $x$ and $\lambda$ let 
\begin{equation}
\label{eq::marginals}
    \alpha_0 := p(a=0|x,y=0,\lambda) \quad \text{ and } \quad \alpha_1 := p(a=0|x,y=1,\lambda).
\end{equation}
With this, Alice's marginals are given by
\begin{equation*}
\begin{split}
    \big(p(a=0|x,y=0,\lambda),\,p(a=1|x,y=0,\lambda)\big) = (\alpha_0,1-\alpha_0) \\
    \big(p(a=0|x,y=1,\lambda), \, p(a=1|x,y=1,\lambda)\big) = (\alpha_1,1-\alpha_1) \\
    \end{split}
\end{equation*}
Since $\left|p(a|x,y,\lambda) - p(a|x,y',\lambda)\right| \leq \epsilon$, for an arbitrary distribution admitting this relation, we get $\nu := |\alpha_0 - \alpha_1|  \leq \epsilon$. We now consider two cases, (i) $\alpha_0 \geq \alpha_1$ and (ii) $\alpha_0 \leq \alpha_1$. 

\vspace{0.2cm}

\underline{Case (i); $\alpha_0 - \alpha_1 = \nu$:}

\vspace{0.2cm}

We define the following maximally signalling distribution: $\zeta(a|x,y,\lambda) = \delta_{a,y}$, i.e., $A$ is a copy of $Y$. We then define the non-signalling distribution $\varphi(a|x,\lambda)$ by
\begin{equation*}
    \big( \varphi(0|x,\lambda),  \, \varphi(1|x,\lambda)\big) = \Big( \frac{\alpha_1}{1-\nu} , \frac{1-\alpha_0}{1-\nu}\Big).
\end{equation*}
Note that $\varphi(0|x,\lambda) + \varphi(1|x,\lambda) = \frac{1-(\alpha_0 - \alpha_1)}{1-\nu} =1$, i.e., $\varphi$ is normalized. Furthermore, $\alpha_{1}/(1-\nu) \geq 0$, and hence $\varphi$ is a valid distribution. Furthermore, it satisfies 
\begin{equation*}
    \begin{aligned}
        \nu \, \zeta(0|x,y=0,\lambda) + (1-\nu)\varphi(0|x,\lambda) &= \nu + \alpha_1 = \alpha_0 =  \p(a=0|x,y=0,\lambda)\\
        \nu \, \zeta(0|x,y=1,\lambda) + (1-\nu)\varphi(0|x,\lambda) &= \alpha_1 = \p(a=0|x,y=0,\lambda),
    \end{aligned}
\end{equation*}
i.e., $\p(a|x,y,\lambda) = \nu \, \zeta(a|x,y,\lambda) + (1-\nu)\varphi(0|x,\lambda)$.

\vspace{0.2cm}

\underline{Case (ii); $\alpha_1 - \alpha_0 =\nu$:}

\vspace{0.2cm}

This case follows similarly as the one above by switching $\alpha_0$ and $\alpha_1$ and interchanging the deterministic signaling responses. 

\vspace{0.4cm}

\noindent Combining the results of these two cases, we have 
\begin{equation*}
    \p(a|x,y,\lambda) = \nu \zeta(a|x,y,\lambda) + (1-\nu) \varphi(a|x,\lambda).
\end{equation*}
Upon rearranging;
\begin{equation*}
    \begin{split}
        &\nu \zeta(a|x,y,\lambda) + (1-\nu) \varphi(a|x,\lambda)\\
        &= \nu \zeta(a|x,y,\lambda) + ((1-\epsilon)+(\epsilon-\nu))\varphi(a|x,\lambda) \\
        &=\nu \zeta(a|x,y,\lambda) + (\epsilon-\nu)\varphi(a|x,\lambda) + (1-\epsilon) \varphi (a|x,\lambda) \\
        &= \epsilon \left[\frac{\nu}{\epsilon} \zeta(a|x,y,\lambda) + \frac{\epsilon - \nu}{\epsilon}\varphi(a|x,\lambda)   \right] + (1-\epsilon) \varphi (a|x,\lambda).
    \end{split}
\end{equation*}
Now we set 
\begin{equation*}
    u(a|x,\lambda) = \varphi (a|x,\lambda) \text{ and } v(a|x,y,\lambda) =  \frac{\nu}{\epsilon} \zeta(a|x,y,\lambda) + \frac{\epsilon - \nu}{\epsilon}\varphi(a|x,\lambda).
\end{equation*}
Note that $v$ is indeed a probability distribution since it is the convex combination of two probability distributions, where the convexity follows from the fact that $\epsilon - \nu \geq 0$. This completes the proof.
\end{proof}

    \noindent \textbf{Lemma 3.} \textit{Let $0\leq l \leq 1/4$ and $0 \leq \epsilon \leq 1$. Let $\mathbf{S}^{l}_{(\epsilon,\epsilon)}$ be the set of behaviors over binary random variables $X,Y,A$ and $B$ with relaxed measurement and parameter dependence parameters $l$ and $\xi^{\A \to \B}=\xi^{\B \to \A}=\epsilon$, respectively, as defined above Eq~\eqref{eq:lam_dep}. Then $\mathbf{S}^{l}_{(\epsilon,\epsilon)} \subseteq \mathbf{S}_{(\kappa,\kappa)}$, with 
\begin{equation*}
    \kappa = \epsilon +(1-\epsilon)\frac{1-4l}{1-2l}.
\end{equation*}}

\begin{proof} We first use the previous two Lemmas to find a bound on the amount of total signalling present in any behavior from $\mathbf{S}_{(\epsilon,\epsilon)}^{l}$: 
 \begin{align*}
    & \left|\p(a|x,y) - \p(a|x,y')\right| = \left|\sum_{\lambda} q(\lambda|x,y)\p(a|x,y,\lambda) - \sum_{\lambda} q(\lambda|x,y') \p(a|x,y',\lambda) \right|  \\
     &\stackrel{\text{Lemma~\ref{Lemma::LocalPart}}}{=}\left|(1-\epsilon) \left( \sum_{\lambda} q(\lambda|x,y) u(a|x,\lambda) - \sum_{\lambda}q(\lambda|x,y')u(a|x,\lambda) \right) + \epsilon\left(  \sum_{\lambda} q(\lambda|x,y) v(a|x,y\lambda) - \sum_{\lambda}q(\lambda|x,y')v(a|x,y'\lambda) \right)  \right| \\
     &\leq (1-\epsilon)\left| \sum_{\lambda} q(\lambda|x,y) u(a|x,\lambda) - \sum_{\lambda}q(\lambda|x,y')u(a|x,\lambda)\right| + \epsilon \left|  \sum_{\lambda} q(\lambda|x,y) v(a|x,y\lambda) - \sum_{\lambda}q(\lambda|x,y')v(a|x,y'\lambda) \right| \\
     &\stackrel{\text{Lemma~\ref{Lemma::DiffBound}}}{\leq} (1-\epsilon) \frac{1-4l}{1-2l} +\epsilon =: \kappa,
 \end{align*}
 where for the first inequality we applied the triangle inequality. A similar calculation follows for Bob's marginals. We have therefore established that any behavior in $\mathbb{S}_{(\epsilon,\epsilon)}^{l}$ is of the form
 \begin{equation}
     \p(a,b|x,y) = \sum_{\lambda} q(\lambda|xy) \p_{A}(a|x,y,\lambda) \p_{B}(b|x,y,\lambda) \label{eq:MI_decomp}
 \end{equation}
 where
 \begin{equation}
     \Big| \sum_{a} \p(a,b|x,y) - \sum_{a} \p(a,b|x',y) \Big| \leq \kappa, \ \text{and} \ \Big| \sum_{b} \p(a,b|x,y) - \sum_{b} \p(a,b|x,y') \Big| \leq \kappa \label{eq:margs}
 \end{equation}
 for all $a,b,x,x',y$ and $y'$. Furthermore, by \cref{Lemma::MD_PD_and_PD}, there exists a hidden variable $\Lambda'$ and distributions $\p_{\Lambda'}'$, $\p_{A}'$ and $\p_{B}'$ such that
 \begin{equation}
     \p(a,b|x,y) = \sum_{\lambda'} \p_{\Lambda'}(\lambda') \p_{A}'(a|x,y,\lambda') \p_{B}'(b|x,y,\lambda') \label{eq:PI_decomp}.
 \end{equation}
 Since \cref{eq:MI_decomp,eq:PI_decomp} are decompositions of the same behavior, their marginals must be equal, implying that the marginals of \cref{eq:PI_decomp} satisfy \cref{eq:margs}. We have therefore found a representation of $\p(a,b|x,y) \in \mathbf{S}_{(\epsilon,\epsilon)}^{l}$ that is measurement independent and has a measurement dependence quantified by $\kappa$ in \cref{eq:margs}. It is therefore a member of $\mathbf{S}_{(\kappa,\kappa)}$, completing the proof.
 \end{proof}

 \section{Proof sketch for Claim \ref{theorem:nsprojection}}
\label{Appendix::NS_Subspace}

\noindent \textbf{Claim 1.} \textit{Let $p,q,r,s\in [1/2,1)$ and $\mathbb{S}_{(p,q),(r,s)}$ be the set of LHV behaviors obtained under a pair of signaling channels as defined in \cref{sec:verts}, and let $$\mathbb{S}_{\mathrm{1-W}}^p \coloneqq \mathrm{ConvHull}\left[\mathbb{S}_{(p,p),(1/2,1/2)} \  \cup \ \mathbb{S}_{(1/2,1/2),(p,p)} \right].$$
    Then $ \Pi_{\mathrm{NS}}\left[\mathbb{S}_{\mathrm{1-W}}^p\right] = \Pi_{\mathrm{NS}}\left[\mathbb{S}_{(p,p),(p,p)}  \right].$ }

\vspace{0.2cm}

\noindent \textit{Proof sketch.} To prove the claim, we notice that for $\mathbb{S}_{\mathrm{1-W}}^p$, the set of local strategies is the convex hull of the union of the extremal local strategies where Alice signals to Bob and the extremal local strategies where Bob signals to Alice. Without loss of generality, we can independently take the one-way signaling model where only Alice signals to Bob or Bob signals to Alice, find the respective non-signaling subspaces and then take the convex hull of the union. 

Below, we provide an example application of the algorithm outlined in \cref{sec:NS_sub} to find the non-signaling subspace when signaling is allowed from Alice to Bob, $\mathbb{S}_{(p,p),(1/2,1/2)}$. The other two cases, $\mathbb{S}_{(1/2,1/2),(p,p)}$ and $\mathbb{S}_{(p,p),(p,p)}$, can be derived analogously to prove the claim. We omit their explicit derivation here for brevity.

In the model $\mathbb{S}_{(p,p),(1/2,1/2)}$, Bob never signals to Alice, hence all behavior lie on the hyperplanes  $\left<\mathrm{NS}_1,\mathbf{x}\right>=0$ and $\left<\mathrm{NS}_2,\mathbf{x}\right>=0$. Therefore, we only need to consider restrictions imposed by $\mathrm{NS}_3$ and $\mathrm{NS}_4$. We run the algorithm from \cref{sec:NS_sub} and calculate the set of intersection points on each of the two hyperplanes. In each case, we classify the behavior up to equivalence of relabeling operations. We first present a representative of the classes that we claim to be extremal and then present the representatives of the remaining classes as convex decompositions of the claimed extremals. 

\vspace{0.2cm}

\noindent \underline{Restriction of $\mathbb{S}_{(p,p),(1/2,1/2)}$ by $\mathrm{NS}_3$:}

\vspace{0.2cm}

We claim that there are three extremal behaviors up to equivalence of relabeling symmetries; 
$$
\left(
\begin{array}{cc|cc}
 0 & 0 & 0 & 0 \\
 0 & 1 & 1-p & p\\ \hline
 0 & 0 & 0 & 0 \\
 0 & 1 & p & 1-p \\
\end{array}
\right),  \frac{1}{2}\left(
\begin{array}{cc|cc}
 1-p & p & 0 & 1 \\
 p & 1-p & 1-p & p \\ \hline
 1-p & p & p & 1-p \\
 p & 1-p & 0 & 1 \\
\end{array}
\right), \frac{1}{2}\left(
\begin{array}{cc|cc}
 1-p & p & 1 & 0 \\
 p & 1-p & 0 & 1 \\ \hline
 1-p & p & 0 & 1 \\
 p & 1-p & 1 & 0 \\
\end{array}
\right)
$$
There are 16 elements in the first class, 32 in the second class and 8 in the third class. Elements of the rest of the classes are not extremal as they can be written as convex combinations of the distributions from the three classes above and the 16 non-signaling local deterministic distributions, as shown below. 
$$
\left(
\begin{array}{cc|cc}
 \frac{1}{2} & \frac{1}{2} & p & 1-p \\
 0 & 0 & 0 & 0 \\ \hline
 \frac{1}{2} & \frac{1}{2} & 1-p &p \\
 0 & 0 & 0 & 0 \\
\end{array}
\right)_{8} = \frac{1}{2}\left(
\begin{array}{cc|cc}
 0 & 1 & p & 1-p \\
 0 & 0 & 0 & 0 \\ \hline
 0 & 1 & 1-p & p \\
 0 & 0 & 0 & 0 \\
\end{array}
\right) + \frac{1}{2} \left(
\begin{array}{cc|cc}
 1 & 0 & p & 1-p \\
 0 & 0 & 0 & 0 \\ \hline
 1 & 0 & 1 & p \\
 0 & 0 & 0 & 0 \\
\end{array}
\right)
$$

$$
\left(
\begin{array}{cc|cc}
 \frac{1}{2} & \frac{1}{2} & 1 & 0 \\
 0 & 0 & 0 & 0 \\ \hline
 \frac{1}{2} & \frac{1}{2} & 1 & 0 \\
 0 & 0 & 0 & 0 \\
\end{array}
\right)_8 = \frac{1}{2} \left(
\begin{array}{cc|cc}
 0 & 1 & 1 & 0 \\
 0 & 0 & 0 & 0 \\ \hline
 0 & 1 & 1 & 0 \\
 0 & 0 & 0 & 0 \\
\end{array}
\right)  +\frac{1}{2} \left(
\begin{array}{cc|cc}
 1 & 0 & 1 & 0 \\
 0 & 0 & 0 & 0 \\ \hline
 1 & 0 & 1 & 0 \\
 0 & 0 & 0 & 0 \\
\end{array}
\right)
$$
$$
\left(
\begin{array}{cc|cc}
 \frac{1}{2} & \frac{1}{2} & \frac{5}{6} & \frac{1}{6} \\
 0 & 0 & 0 & 0 \\ \hline
 \frac{1}{2} & \frac{1}{2} & \frac{2}{3} & \frac{1}{3} \\
 0 & 0 & 0 & 0 \\
\end{array}
\right)_{16} = \frac{1}{2}\left(
\begin{array}{cc|cc}
 0 & 1 & \frac{2}{3} & \frac{1}{3} \\
 0 & 0 & 0 & 0 \\ \hline
 0 & 1 & \frac{1}{3} & \frac{2}{3} \\
 0 & 0 & 0 & 0 \\
\end{array}
\right) +\frac{1}{2}\left(
\begin{array}{cc|cc}
 1 & 0 & 1 & 0 \\
 0 & 0 & 0 & 0 \\ \hline
 1 & 0 & 1 & 0 \\
 0 & 0 & 0 & 0 \\
\end{array}
\right)
 $$
 $$
 \left(
\begin{array}{cc|cc}
 \frac{1}{2} & \frac{1}{2} & \frac{1}{2} & \frac{1}{2} \\
 0 & 0 & 0 & 0 \\ \hline
 \frac{1}{2} & \frac{1}{2} & \frac{1}{2} & \frac{1}{2} \\
 0 & 0 & 0 & 0 \\
\end{array}
\right)_4 = \frac{1}{2} \left(
\begin{array}{cc|cc}
 0 & 1 & 1 & 0 \\
 0 & 0 & 0 & 0 \\ \hline
 0 & 1 & 1 & 0 \\
 0 & 0 & 0 & 0 \\
\end{array}
\right) + \frac{1}{2} \left(
\begin{array}{cc|cc}
 1 & 0 & 0 & 1 \\
 0 & 0 & 0 & 0 \\ \hline
 1 & 0 & 0 & 1 \\
 0 & 0 & 0 & 0 \\
\end{array}
\right)
  $$
  $$
  \left(
\begin{array}{cc|cc}
 \frac{1}{2} & \frac{1}{2} & 1 & 0 \\
 0 & 0 & 0 & 0 \\ \hline
 \frac{1-p}{2} & \frac{p}{2} & \frac{1}{2} & 0 \\
 \frac{p}{2} & \frac{1-p}{2} & \frac{1}{2} & 0 \\
\end{array}
\right)_{16} =\frac{1-p}{2}\left(
\begin{array}{cc|cc}
 0 & 1 & 1 & 0 \\
 0 & 0 & 0 & 0 \\ \hline
 0 & 0 & 0 & 0 \\
 0 & 1 & 1 & 0 \\
\end{array}
\right) + \frac{1-p}{2} \left(
\begin{array}{cc|cc}
 1 & 0 & 1 & 0 \\
 0 & 0 & 0 & 0 \\ \hline
 1 & 0 & 1 & 0 \\
 0 & 0 & 0 & 0 \\
\end{array}
\right) $$

$$+  \frac{p}{2} \left(
\begin{array}{cc|cc}
 0 & 1 & 1 & 0 \\
 0 & 0 & 0 & 0 \\ \hline
 0 & 1 & 1 & 0 \\
 0 & 0 & 0 & 0 \\
\end{array}
\right) +  \frac{p}{2} \left(
\begin{array}{cc|cc}
 1 & 0 & 1 & 0 \\
 0 & 0 & 0 & 0 \\ \hline
 0 & 0 & 0 & 0 \\
 1 & 0 & 1 & 0 \\
\end{array}
\right)
  $$
  $$
  \left(
\begin{array}{cc|cc}
 \frac{1}{2} & \frac{1}{2} & \frac{1+p}{2} & \frac{1-p}{2} \\
 0 & 0 & 0 & 0 \\ \hline
 \frac{1-p}{2} & \frac{p}{2} & \frac{1}{2} & 0 \\
 \frac{p}{2} & \frac{1-p}{2} & \frac{1-p}{2} & \frac{p}{2} \\
\end{array}
\right)_{32} = \frac{1-p}{2}\left(
\begin{array}{cc|cc}
 0 & 1 & p & 1-p \\
 0 & 0 & 0 & 0 \\ \hline
 0 & 0 & 0 & 0 \\
 0 & 1 & 1-p & p \\
\end{array}
\right) +  \frac{1-p}{2}\left(
\begin{array}{cc|cc}
 1 & 0 & 1 & 0 \\
 0 & 0 & 0 & 0 \\ \hline
 1 & 0 & 1 & 0 \\
 0 & 0 & 0 & 0 \\
\end{array}
\right) $$
$$+ \frac{p}{2}\left(
\begin{array}{cc|cc}
 0 & 1 & 1 & 0 \\
 0 & 0 & 0 & 0 \\ \hline
 0 & 1 & 1 & 0 \\
 0 & 0 & 0 & 0 \\
\end{array}
\right) + 
\frac{p}{2}  \left(
\begin{array}{cc|cc}
 1 & 0 & p & 1-p \\
 0 & 0 & 0 & 0 \\ \hline
 0 & 0 & 0 & 0 \\
 1 & 0 & 1-p & p \\
\end{array}
\right)
  $$
  $$
  \left(
\begin{array}{cc|cc}
 \frac{1}{2} & \frac{1}{2} & \frac{2-p}{2} & \frac{p}{2} \\
 0 & 0 & 0 & 0 \\ \hline
 \frac{1-p}{2} & \frac{p}{2} & \frac{1}{2} & 0 \\
 \frac{p}{2} & \frac{1-p}{2} & \frac{p}{2} & \frac{1-p}{2} \\
\end{array}
\right)_{32} = \frac{1-p}{2}\left(
\begin{array}{cc|cc}
 0 & 1 & 1-p & p \\
 0 & 0 & 0 & 0 \\ \hline
 0 & 0 & 0 & 0 \\
 0 & 1 & p & 1-p \\
\end{array}
\right) +  \frac{1-p}{2}\left(
\begin{array}{cc|cc}
 1 & 0 & 1 & 0 \\
 0 & 0 & 0 & 0 \\ \hline
 1 & 0 & 1 & 0 \\
 0 & 0 & 0 & 0 \\
\end{array}
\right)$$
$$+ \frac{p}{2}\left(
\begin{array}{cc|cc}
 0 & 1 & 1 & 0 \\
 0 & 0 & 0 & 0 \\ \hline
 0 & 1 & 1 & 0 \\
 0 & 0 & 0 & 0 \\
\end{array}
\right) + \frac{p}{2}  \left(
\begin{array}{cc|cc}
 1 & 0 & 1-p & p \\
 0 & 0 & 0 & 0 \\ \hline
 0 & 0 & 0 & 0 \\
 1 & 0 & p & 1-p \\
\end{array}
\right)
  $$
  $$
  \left(
\begin{array}{cc|cc}
 \frac{1}{2} & \frac{1}{2} & \frac{1}{2} & \frac{1}{2} \\
 0 & 0 & 0 & 0 \\ \hline
 \frac{1-p}{2} & \frac{p}{2} & \frac{1}{2} & 0 \\
 \frac{p}{2} & \frac{1-p}{2} & 0 & \frac{1}{2} \\
\end{array}
\right)_{16} = \frac{1-p}{2}\left(
\begin{array}{cc|cc}
 0 & 1 & 0 & 1 \\
 0 & 0 & 0 & 0 \\ \hline
 0 & 0 & 0 & 0 \\
 0 & 1 & 0 & 1 \\
\end{array}
\right) + \frac{1-p}{2}\left(
\begin{array}{cc|cc}
 1 & 0 & 1 & 0 \\
 0 & 0 & 0 & 0 \\ \hline
 1 & 0 & 1 & 0 \\
 0 & 0 & 0 & 0 \\
\end{array}
\right) $$
$$+ \frac{p}{2}\left(
\begin{array}{cc|cc}
 0 & 1 & 1 & 0 \\
 0 & 0 & 0 & 0 \\ \hline
 0 & 1 & 1 & 0 \\
 0 & 0 & 0 & 0 \\
\end{array}
\right) + \frac{p}{2} \left(
\begin{array}{cc|cc}
 1 & 0 & 0 & 1 \\
 0 & 0 & 0 & 0 \\ \hline
 0 & 0 & 0 & 0 \\
 1 & 0 & 0 & 1 \\
\end{array}
\right)
  $$
  $$
  \left(
\begin{array}{cc|cc}
 \frac{p}{2} & \frac{1-p}{2} & \frac{1}{2} & 0 \\
 \frac{1-p}{2} & \frac{p}{2} & \frac{1}{2} & 0 \\ \hline
 \frac{1-p}{2} & \frac{p}{2} & \frac{1}{2} & 0 \\
 \frac{p}{2} & \frac{1-p}{2} & \frac{1}{2} & 0 \\
\end{array}
\right)_{24} = \frac{1-p}{2}\left(
\begin{array}{cc|cc}
 0 & 0 & 0 & 0 \\
 1 & 0 & 1 & 0 \\ \hline
 1 & 0 & 1 & 0 \\
 0 & 0 & 0 & 0 \\
\end{array}
\right) + \frac{1-p}{2}\left(
\begin{array}{cc|cc}
 0 & 1 & 1 & 0 \\
 0 & 0 & 0 & 0 \\ \hline
 0 & 0 & 0 & 0 \\
 0 & 1 & 1 & 0 \\
\end{array}
\right) $$
$$+ \frac{p}{2} \left(
\begin{array}{cc|cc}
 0 & 0 & 0 & 0 \\
 0 & 1 & 1 & 0 \\ \hline
 0 & 1 & 1 & 0 \\
 0 & 0 & 0 & 0 \\
\end{array}
\right)+ \frac{p}{2} \left(
\begin{array}{cc|cc}
 1 & 0 & 1 & 0 \\
 0 & 0 & 0 & 0 \\ \hline
 0 & 0 & 0 & 0 \\
 1 & 0 & 1 & 0 \\
\end{array}
\right)
  $$
  $$
\left(
\begin{array}{cc|cc}
 \frac{1}{2} & \frac{1}{2} & p & 1-p \\
 0 & 0 & 0 & 0 \\ \hline
 \frac{1-p}{2} & \frac{p}{2} & \frac{1-p}{2} & \frac{p}{2} \\
 \frac{p}{2} & \frac{1-p}{2} & \frac{1-p}{2} & \frac{p}{2} \\
\end{array}
\right)_{16} = \frac{1-p}{2}\left(
\begin{array}{cc|cc}
 0 & 1 & p & 1-p \\
 0 & 0 & 0 & 0 \\ \hline
 0 & 0 & 0 & 0 \\
 0 & 1 & 1-p & p \\
\end{array}
\right)+ \frac{1-p}{2}\left(
\begin{array}{cc|cc}
 1 & 0 & p & 1-p \\
 0 & 0 & 0 & 0 \\ \hline
 1 & 0 & 1-p & p \\
 0 & 0 & 0 & 0 \\
\end{array}
\right)$$
$$+ \frac{p}{2} \left(
\begin{array}{cc|cc}
 0 & 1 & p & 1-p \\
 0 & 0 & 0 & 0 \\ \hline
 0 & 1 & 1-p & p \\
 0 & 0 & 0 & 0 \\
\end{array}
\right) + \frac{p}{2}   \left(
\begin{array}{cc|cc}
 1 & 0 & p & 1-p \\
 0 & 0 & 0 & 0 \\ \hline
 0 & 0 & 0 & 0 \\
 1 & 0 & 1-p & p \\
\end{array}
\right)  
  $$
  $$
  \left(
\begin{array}{cc|cc}
 \frac{p}{2} & \frac{1-p}{2} & \frac{p}{2} & \frac{1-p}{2} \\
 \frac{1-p}{2} & \frac{p}{2} & \frac{p}{2} & \frac{1-p}{2} \\ \hline
 \frac{1-p}{2} & \frac{p}{2} & \frac{1-p}{2} & \frac{p}{2} \\
 \frac{p}{2} & \frac{1-p}{2} & \frac{1-p}{2} & \frac{p}{2} \\
\end{array}
\right)_{8} = \frac{1-p}{2}\left(
\begin{array}{cc|cc}
 0 & 0 & 0 & 0 \\
 1 & 0 & p & 1-p \\ \hline
 1 & 0 & 1-p & p \\
 0 & 0 & 0 & 0 \\
\end{array}
\right) + \frac{1-p}{2}\left(
\begin{array}{cc|cc}
 0 & 1 & p & 1-p \\
 0 & 0 & 0 & 0 \\ \hline
 0 & 0 & 0 & 0 \\
 0 & 1 & 1-p & p \\
\end{array}
\right) $$
$$+ \frac{p}{2}\left(
\begin{array}{cc|cc}
 0 & 0 & 0 & 0 \\
 0 & 1 & p & 1-p \\ \hline
 0 & 1 & 1-p & p \\
 0 & 0 & 0 & 0 \\
\end{array}
\right)  + \frac{p}{2} \left(
\begin{array}{cc|cc}
 1 & 0 & p & 1-p \\
 0 & 0 & 0 & 0 \\ \hline
 0 & 0 & 0 & 0 \\
 1 & 0 & 1-p & p \\
\end{array}
\right)
  $$
  $$
  \left(
\begin{array}{cc|cc}
 \frac{p}{2} & \frac{1-p}{2} & \frac{p}{2} & \frac{1-p}{2} \\
 \frac{1-p}{2} & \frac{p}{2} & \frac{1-p}{2} & \frac{p}{2} \\ \hline
 \frac{1-p}{2} & \frac{p}{2} & \frac{1-p}{2} & \frac{p}{2} \\
 \frac{p}{2} & \frac{1-p}{2} & \frac{p}{2} & \frac{1-p}{2} \\
\end{array}
\right)_{8} = \frac{1-p}{2}\left(
\begin{array}{cc|cc}
 0 & 0 & 0 & 0 \\
 0 & 1 & 0 & 1 \\ \hline
 0 & 1 & 0 & 1 \\
 0 & 0 & 0 & 0 \\
\end{array}
\right) + \frac{1-p}{2} \left(
\begin{array}{cc|cc}
 0 & 0 & 0 & 0 \\
 0 & 1 & 1 & 0 \\ \hline
 0 & 1 & 1 & 0 \\
 0 & 0 & 0 & 0 \\
\end{array}
\right) $$
$$+ \frac{p}{4} \left(
\begin{array}{cc|cc}
 0 & 0 & 0 & 0 \\
 1 & 0 & 0 & 1 \\ \hline
 1 & 0 & 0 & 1 \\
 0 & 0 & 0 & 0 \\
\end{array}
\right) + \frac{p}{4}  \left(
\begin{array}{cc|cc}
 0 & 1 & 1 & 0 \\
 0 & 0 & 0 & 0 \\ \hline
 0 & 0 & 0 & 0 \\
 0 & 1 & 1 & 0 \\
\end{array}
\right)
  $$
  $$
  +\frac{p}{4} \left(
\begin{array}{cc|cc}
 1 & 0 & 0 & 1 \\
 0 & 0 & 0 & 0 \\ \hline
 0 & 0 & 0 & 0 \\
 1 & 0 & 0 & 1 \\
\end{array}
\right) +\frac{p}{4} \left(
\begin{array}{cc|cc}
 1 & 0 & 1 & 0 \\
 0 & 0 & 0 & 0 \\ \hline
 0 & 0 & 0 & 0 \\
 1 & 0 & 1 & 0 \\
\end{array}
\right)
  $$
The numbers at the subscripts denote the cardinality of the respective classes.  

\vspace{0.2cm}

\noindent \underline{Restriction of $\mathbb{S}_{(p,p),(1/2,1/2)}$ by $\mathrm{NS}_4$:}

\vspace{0.2cm}

We take all the extremal distributions obtained in the previous analysis and find that upon further restricting to the hyperplane $\left<\mathrm{NS}_4,\mathbf{x}=0\right>$, using the sequence of steps described in \cref{sec:NS_sub}, there is only one class of extremal states with a cardinality of 16. A representative of this class is

\begin{equation} \label{eq:rep_class}
\frac{1}{2} \left(\begin{array}{cc|cc}
        1 & 0 & p & 1-p \\
        0 & 1 & 1-p & p \\  \hline
        1 & 0 & 1-p & p \\
        0 & 1 & p & 1-p \\
    \end{array}\right).
\end{equation}

In the following, we present one element from each extremal class. A convex decomposition, like the one obtained in the previous section, can also be obtained. However, an easier way of noticing that they are not extremal is by the fact that they are all local. The subscripts represent the number of elements present in each class.
$$
\left(
\begin{array}{cc|cc}
 0 & 0 & 0 & 0 \\
 0 & 1 & \frac{1}{2} & \frac{1}{2} \\ \hline
 0 & 0 & 0 & 0 \\
 0 & 1 & \frac{1}{2} & \frac{1}{2} \\
\end{array}
\right)_{8}, \left(
\begin{array}{cc|cc}
 0 & 0 & 0 & 0 \\
 0 & 1 & \frac{1}{2} & \frac{1}{2} \\ \hline
 0 & \frac{1}{2} & \frac{1-p}{2} & \frac{p}{2} \\
 0 & \frac{1}{2} & \frac{p}{2} & \frac{1-p}{2} \\
\end{array}
\right)_{16}, \left(
\begin{array}{cc|cc}
 0 & 0 & 0 & 0 \\
 \frac{1}{2} & \frac{1}{2} & \frac{1}{2} & \frac{1}{2} \\ \hline
 0 & 0 & 0 & 0 \\
 \frac{1}{2} & \frac{1}{2} & \frac{1}{2} & \frac{1}{2} \\
\end{array}
\right)_{4},\left(
\begin{array}{cc|cc}
 0 & 0 & 0 & 0 \\
 \frac{1}{2} & \frac{1}{2} & \frac{1}{2} & \frac{1}{2} \\ \hline
 \frac{1}{2} & 0 & \frac{1-p}{2} & \frac{p}{2} \\
 0 & \frac{1}{2} & \frac{p}{2} & \frac{1-p}{2} \\
\end{array}
\right)_{16}
$$
$$
\left(
\begin{array}{cc|cc}
 0 & \frac{1}{2} & \frac{p}{2} & \frac{1}{6} \\
 0 & \frac{1}{2} & \frac{1}{6} & \frac{p}{2} \\ \hline
 0 & \frac{1}{2} & \frac{1-p}{2} & \frac{p}{2} \\
 0 & \frac{1}{2} & \frac{p}{2} & \frac{1-p}{2} \\
\end{array}
\right)_{8}, \left(
\begin{array}{cc|cc}
 \frac{1-p}{3} & \frac{p}{3} & 0 & \frac{1}{3} \\
 \frac{p}{3} & \frac{2-p}{3} & \frac{1}{3} & \frac{1}{3} \\ \hline
 \frac{1-p}{3} & \frac{2(1-p)}{3} & \frac{1-p}{3} & \frac{2(1-p)}{3} \\
 \frac{p}{3} & \frac{2p}{3} & \frac{p}{3} & \frac{2p}{3} \\
\end{array}
\right)_{32}, \left(
\begin{array}{cc|cc}
 \frac{1-p}{3} & \frac{p}{3} & 0 & \frac{1}{3} \\
 \frac{p}{3} & \frac{2-p}{3} & \frac{1}{3} & \frac{1}{3} \\ \hline
 \frac{p}{3} & \frac{1-p}{3} & 0 & \frac{1}{3} \\
 \frac{1-p}{3} & \frac{1+p}{3} & \frac{1}{3} & \frac{1}{3} \\
\end{array}
\right)_{32},  
$$
$$
\left(
\begin{array}{cc|cc}
 \frac{1-p}{3} & \frac{p}{3} & \frac{p}{3} & \frac{1-p}{3} \\
 \frac{p}{3} & \frac{2-p}{3} & \frac{1-p}{3} & \frac{1+p}{3} \\ \hline
 \frac{1-p}{3} & \frac{p}{3} & 0 & \frac{1}{3} \\
 \frac{p}{3} & \frac{2-p}{3} & \frac{1}{3} & \frac{1}{3} \\
\end{array}
\right)_{32},
\left(
\begin{array}{cc|cc}
 \frac{1-p}{3} & \frac{p}{3} & \frac{2-p}{6} & \frac{p}{6} \\
 \frac{p}{3} & \frac{2-p}{3} & \frac{p}{6} & \frac{4-p}{6} \\ \hline
 \frac{p}{3} & \frac{1-p}{3} & \frac{p}{6} & \frac{2-p}{6} \\
 \frac{1-p}{3} & \frac{1+p}{3} & \frac{2-p}{6} & \frac{2+p}{6} \\
\end{array}
\right)_{32}, \left(
\begin{array}{cc|cc}
 \frac{1-p}{3} & \frac{p}{3} & \frac{p}{3} & \frac{1-p}{3} \\
 \frac{p}{3} & \frac{2-p}{3} & \frac{2-p}{3} & \frac{p}{3} \\ \hline
 \frac{p}{3} & \frac{1-p}{3} & \frac{1-p}{3} & \frac{p}{3} \\
 \frac{1-p}{3} & \frac{1+p}{3} & \frac{1+p}{3} & \frac{1-p}{3} \\
\end{array}
\right)_{32},
$$
$$
\left(
\begin{array}{cc|cc}
 \frac{p}{3} & \frac{1-p}{3} & 0 & \frac{1}{3} \\
 \frac{1-p}{3} & \frac{1+p}{3} & \frac{1}{3} & \frac{1}{3} \\ \hline
 \frac{2-p}{6} & \frac{1-p}{3} & \frac{1-p}{3} & \frac{2-p}{6} \\
 \frac{p}{6} & \frac{1+p}{3} & \frac{p}{3} & \frac{2+p}{6} \\
\end{array}
\right)_{32},
\left(
\begin{array}{cc|cc}
 \frac{p}{3} & \frac{p}{6} & \frac{p}{3} & \frac{p}{6} \\
 \frac{p}{6} & \frac{3-2p}{3} & \frac{1-p}{3} & \frac{4-p}{6} \\ \hline
 \frac{p}{3} & \frac{1-p}{3} & 0 & \frac{1}{3} \\
 \frac{p}{6} & \frac{4-p}{6} & \frac{1}{3} & \frac{1}{3} \\
\end{array}
\right)_{32}, \left(
\begin{array}{cc|cc}
 \frac{1-p}{2} & \frac{p}{2} & 0 & \frac{1}{2} \\
 \frac{p}{2} & \frac{1-p}{2} & \frac{1}{4} & \frac{1}{4} \\ \hline
 \frac{p}{4} & \frac{2-p}{4} & \frac{1}{4} & \frac{1}{4} \\
 \frac{2-p}{4} & \frac{p}{4} & 0 & \frac{1}{2} \\
\end{array}
\right)_{16}, $$
$$\left(
\begin{array}{cc|cc}
 \frac{1-p}{2} & \frac{p}{2} & 0 & \frac{1}{2} \\
 \frac{p}{2} & \frac{1-p}{2} & \frac{4-3p}{8} & \frac{3p}{8} \\ \hline
 \frac{3p}{8} & \frac{4-3p}{8} & \frac{1-p}{2} & \frac{p}{2} \\
 \frac{4-3p}{8} & \frac{3p}{8} & \frac{p}{8} & \frac{p}{8} \\
\end{array}
\right)_{32}, \left(
\begin{array}{cc|cc}
 \frac{1-p}{2} & 1-p & \frac{1-p}{2} & 1-p \\
 2p-1 & \frac{1-p}{2} & \frac{5p-3}{4} & \frac{1+p}{4} \\ \hline
 \frac{1-p}{2} & \frac{1-p}{3} & \frac{1-p}{2} & 1-p \\
 2p-1 & \frac{1-p}{2} & \frac{5p-3}{4} & \frac{1+p}{4} \\
\end{array}
\right)_{16},
$$
$$
\left(
\begin{array}{cc|cc}
 \frac{1-p}{2} & \frac{p}{2} & \frac{1-p}{2} & \frac{p}{2} \\
 \frac{p}{2} & \frac{1-p}{2} & \frac{p}{8} & \frac{4-p}{8} \\ \hline
 \frac{3p}{8} & \frac{4-3p}{8} & \frac{4-3p}{8} & \frac{3p}{8} \\
 \frac{4-3p}{8} & \frac{3p}{8} & 0 & \frac{1}{2} \\
\end{array}
\right)_{32}, \left(
\begin{array}{cc|cc}
 \frac{p}{4} & \frac{2-p}{4} & \frac{1-p}{2} & \frac{p}{2} \\
 \frac{2-p}{4} & \frac{p}{4} & \frac{p}{2} & \frac{1-p}{2} \\ \hline
 \frac{p}{4} & \frac{2-p}{4} & \frac{13-2p}{28} & \frac{1+2p}{28} \\
 \frac{2-p}{4} & \frac{p}{4} & \frac{2p+1}{28} & \frac{13+2p}{28} \\
\end{array}
\right)_{16}, \left(
\begin{array}{cc|cc}
 \frac{p}{4} & 1-p & \frac{p}{4} & 1-p \\
 \frac{p}{2} & \frac{p}{4} & \frac{p}{2} & \frac{p}{4} \\ \hline
 \frac{3p}{8} & \frac{3p}{8} & \frac{3p}{8} & \frac{3p}{8} \\
 \frac{3p}{8} & \frac{8-9p}{8} &\frac{3p}{8} & \frac{8-9p}{8}\\
\end{array}
\right)_{12},
$$
$$
\left(
\begin{array}{cc|cc}
 \frac{p}{4} & \frac{p}{2} & \frac{3p}{8} & \frac{3p}{8} \\
 1-p & \frac{p}{4} & \frac{8-9p}{8} & \frac{3p}{8} \\ \hline
 \frac{8-9p}{8} & \frac{3p}{8} & \frac{8-7p}{8} & \frac{p}{8} \\
 \frac{3p}{8} & \frac{3p}{8} & \frac{p}{8} & \frac{5p}{8} \\
\end{array}
\right)_{8}, \left(
\begin{array}{cc|cc}
 \frac{3p}{8} & \frac{3p}{8} & 0 & \frac{3p}{4} \\
 \frac{4-3p}{8} & \frac{4-3p}{8} & \frac{4-3p}{8} & \frac{4-3p}{8} \\ \hline
 \frac{4-3p}{8} & \frac{4-3p}{8} & \frac{4-3p}{8} & \frac{4-3p}{8} \\
 \frac{3p}{8} & \frac{3p}{8} & 0 & \frac{3p}{4} \\
\end{array}
\right)_{8},
$$
$$
\left(
\begin{array}{cc|cc}
 \frac{3p-1}{4} & \frac{3(1-p)}{4} & \frac{1-p}{2} & \frac{p}{2} \\
 \frac{3p-1}{4} & \frac{3(1-p)}{4} & \frac{1-p}{4} & \frac{1+p}{4} \\ \hline
 \frac{3p-1}{4} & \frac{3(1-p)}{4} & \frac{1-p}{2} & \frac{p}{2} \\
 \frac{3p-1}{4} & \frac{3(1-p)}{4} & \frac{1-p}{4} & \frac{1+p}{4} \\
\end{array}
\right)_{8}, \left(
\begin{array}{cc|cc}
 \frac{3p}{8} & \frac{3p}{8} & \frac{p}{4} & \frac{p}{2} \\
 \frac{3p}{8} & \frac{8-9p}{8} & 1-p & \frac{p}{4} \\ \hline
 \frac{p}{4} & 1-p & \frac{8-9p}{8} & \frac{3p}{8} \\
 \frac{p}{2} & \frac{p}{4} & \frac{3p}{8} & \frac{3p}{8} \\
\end{array}
\right)_{8},\left(
\begin{array}{cc|cc}
 \frac{3p-1}{4} & \frac{3-3p}{4} & \frac{1-p}{2} & \frac{p}{2} \\
 \frac{3p-1}{4} & \frac{3-3p}{4} & \frac{p}{2} & \frac{1-p}{2} \\ \hline
 \frac{3p-1}{4} & \frac{3-3p}{4} & \frac{1+p}{4} & \frac{1-p}{4} \\
 \frac{3p-1}{4} & \frac{3-3p}{4} & \frac{1-p}{4} & \frac{1+p}{4} \\
\end{array}
\right)_{16}
$$
$$
\left(
\begin{array}{cc|cc}
 \frac{1}{4} & \frac{1}{4} & \frac{1}{4} & \frac{1}{4} \\
 \frac{1}{4} & \frac{1}{4} & \frac{1}{4} & \frac{1}{4} \\ \hline
 \frac{1}{4} & \frac{1}{4} & \frac{1}{4} & \frac{1}{4} \\
 \frac{1}{4} & \frac{1}{4} & \frac{1}{4} & \frac{1}{4} \\
\end{array}
\right)_1.
$$
We therefore find that the non-signalling subspace of $\mathbb{S}_{(p,p),(1/2,1/2)}$ is equal to the convex hull of the extremal class \eqref{eq:rep_class}.

\section{Proof of \cref{lem:self-test,lem:2_ways}} \label{app:proof1}

\noindent \textbf{Proposition 2.} \textit{Let $p \in (2/5,1)$ and $T_{p}$ be the Bell functional
    \begin{equation}
        T_{p}(\mathrm{P}) = \langle B_{01} - B_{00} \rangle + \frac{1-p}{p}\langle B_{10} - B_{11} \rangle  + \langle A_{0}(B_{0} +B_{1}) \rangle + \frac{1-p}{p} \langle A_{1}(B_{0}  - B_{1} )\rangle \label{eq:tilt_chsh_app}
    \end{equation}
    Then the following properties hold:
    \begin{enumerate}
        \item For all non-signaling local behaviors $\mathrm{P}_{\mathrm{L}}^{\mathrm{NS}}$, $T_{p}(\mathrm{P}_{\mathrm{L}}^{\mathrm{NS}}) \leq 2(|2p-1|+1-p)/p$.
        \item For all one-way signaling local behaviors $\mathrm{P}_{\mathrm{L}}^{\mathrm{S}} \in \mathbb{S}_{(p,p),(1/2,1/2)}$, $T_{p}(\mathrm{P}_{\mathrm{L}}^{\mathrm{S}}) \leq 2\max\Big\{p , \Big| -p + 2\frac{(1-p)^2}{p}\Big|\Big\} + 2(1-p)$.
        \item For all non-signaling quantum behaviors $\mathrm{P}_{\mathrm{Q}}^{\mathrm{NS}}$, $T_{p}(\mathrm{P}_{\mathrm{Q}}^{\mathrm{NS}}) \leq 2\sqrt{2}\sqrt{\frac{p}{3p-1}}$.
        \item Up to local isometries, there is a unique non-signaling quantum strategy that achieves $T_{p}(\mathrm{P}_{\mathrm{Q}}^{\mathrm{NS}}) = 2\sqrt{2}\sqrt{\frac{p}{3p-1}}$, given by
        \begin{equation}
        \begin{gathered}
            \rho = \ketbra{\psi_{\theta}}{\psi_{\theta}}, \ \ket{\psi_{\theta}} = \cos(\theta) \ket{00} + \sin(\theta) \ket{11},\\
            A_{0} = \sigma_{X}, \ \ \ A_{1} = \sigma_{Z}, \\
            B_{0} = \cos(\phi) \, \sigma_{Z} + \sin(\phi) \, \sigma_{X}, \\
            B_{1} = -\cos(\phi) \, \sigma_{Z} + \sin(\phi) \, \sigma_{X},
        \end{gathered}
        \end{equation}
        where
        \begin{equation}
            \begin{gathered}
                \theta = \frac{1}{2} \mathrm{arctan}\big[f_{1}/g_{1}\big] - \pi/2,\ \
                \phi = \mathrm{arctan}\big[f_{2}/g_{2}\big], \\
                f_{1} = \frac{\sqrt{p(-2 + 5p)}}{1 - 3p}, \ \ g_{1} = \frac{1-2p}{-1+3p},\\
                f_{2} = -\sqrt{\frac{-2+5p}{-2+6p}}, \ \ g_{2} = \sqrt{\frac{p}{-2+6p}}. 
            \end{gathered} \label{eq:angs_app}
        \end{equation}
        when $p \in (2/5,1/2) \cup (1/2,1)$ and $\theta = \phi= -\pi/4$ when $p = 1/2$.
    \end{enumerate}}
\begin{proof}

\vspace{0.2cm}

\noindent \underline{Part 1:} The maximum value is achieved by a non-signaling local deterministic behavior, and in the minimal Bell scenario this is characterized by a tuple $(\mathsf{a}_{0},\mathsf{a}_{1},\mathsf{b}_{0},\mathsf{b}_{1}) \in \{0,1\}^{4}$ with $\langle A_{x} \rangle = (-1)^{\mathsf{a}_{x}}$, $\langle B_{y} \rangle = (-1)^{\mathsf{b}_{y}}$ and $\langle A_{x}B_{y}\rangle  = (-1)^{\mathsf{a}_{x} + \mathsf{b}_{y}}$. With this, the Bell expression \eqref{eq:new_ineq_ns} reduces to
\begin{equation}
    T_{p}(\mathrm{P}^{\text{NS}}_{\mathrm{L}}) = \frac{1-2p}{p}((-1)^{\mathsf{b}_{0}}-(-1)^{\mathsf{b}_{1}}) + (-1)^{\mathsf{a}_{0}}((-1)^{\mathsf{b}_{0}}+(-1)^{\mathsf{b}_{1}}) + \frac{1-p}{p} (-1)^{\mathsf{a}_{1}}((-1)^{\mathsf{b}_{0}}-(-1)^{\mathsf{b}_{1}}). \label{eq:expr_local}
\end{equation}
We now consider two cases. First, suppose $\mathsf{b}_{0} = \mathsf{b}_{1}$. Then \eqref{eq:expr_local} reduces to $2(-1)^{\mathsf{a}_{0} + \mathsf{b}_{0}} \leq 2$ as claimed. If $\mathsf{b}_{0} = \mathsf{b}_{1} \oplus 1$, then \eqref{eq:expr_local} reduces to 
\begin{equation*}
    2\frac{1-2p}{p}(-1)^{\mathsf{b}_{0}} + 2\frac{1-p}{p}(-1)^{\mathsf{a}_{1}+\mathsf{b}_{0}} \leq 2\frac{|2p-1|}{p} + 2\frac{1-p}{p},
\end{equation*}
which is greater than 2 for $p \in (2/5,1/2)$, and equal to 2 for $p \in [1/2,1)$, establishing the claim.

\vspace{0.2cm}

\noindent \underline{Part 2:} According to \cref{eq:p_sig_model}, the maximum of $T_{p}(\mathrm{P}_{\mathrm{L}}^{\mathrm{S}})$ is achieved at a vertex
\begin{equation*}
    \p(a,b|x,y) = \Bigg(\sum_{\tilde{y}}\p_{\tilde{Y}|Y}(\tilde{y}|y) \, \delta_{a,f_{A}(x,\tilde{y})}\Bigg)\Bigg(\sum_{\tilde{x}}\p_{\tilde{X}|X}(\tilde{x}|x) \, \delta_{b,f_{B}(\tilde{x},y)}\Bigg)
\end{equation*}
where $f_{A},f_{B}:\{0,1\}^{2} \to \{0,1\}$ are deterministic functions. In the case of symmetric one-way signaling from Alice to Bob, $\p_{\tilde{Y}|Y}(\tilde{y}|y) = 1/2$ and $\p_{\tilde{X}|X}(\tilde{x}|x) = p$ if $\tilde{x} =x$ and $1-p$ if $\tilde{x} = x \oplus 1$. We therefore have
\begin{equation*}
    \p(a,b|x,y) = \delta_{a,f_{A}(x)} \Big( p \, \delta_{b,f_{B}(x,y)} + (1-p)\, \delta_{b,f_{B}(x\oplus 1,y)} \Big),
\end{equation*}
where $f_{A}:\{0,1\} \to \{0,1\}$ is a deterministic function. Let $\mathsf{a}_{x} = f_{A}(x)$ and $\mathsf{b}_{\tilde{x},y} = f_{B}(\tilde{x},y)$. We then have
\begin{equation*}
    \begin{aligned}
        \langle A_{x} \rangle &= (-1)^{\mathsf{a}_{x}}, \\ 
        \langle B_{xy} \rangle &= p(-1)^{\mathsf{b}_{x,y}} + (1-p)(-1)^{\mathsf{b}_{x\oplus 1,y}} \ \text{and}\\
        \langle A_{x}B_{y} \rangle &= p(-1)^{\mathsf{a}_{x} + \mathsf{b}_{x,y}} + (1-p)(-1)^{\mathsf{a}_{x} + \mathsf{b}_{x\oplus 1,y}}.
    \end{aligned}
\end{equation*}
Inserting this into the expression $T_{p}$,
\begin{multline*}
    T_{p}(\mathrm{P}_{\mathrm{L}}^{\mathrm{S}}) = p(-1)^{\mathsf{b}_{0,1}} + (1-p)(-1)^{\mathsf{b}_{1,1}} - p(-1)^{\mathsf{b}_{0,0}} - (1-p)(-1)^{\mathsf{b}_{1,0}} \\ + \frac{1-p}{p}\big( p(-1)^{\mathsf{b}_{1,0}} + (1-p)(-1)^{\mathsf{b}_{0,0}} - p(-1)^{\mathsf{b}_{1,1}} - (1-p)(-1)^{\mathsf{b}_{0,1}}\big) \\ + p(-1)^{\mathsf{a}_{0} + \mathsf{b}_{0,0}} + (1-p)(-1)^{\mathsf{a}_{0} + \mathsf{b}_{1,0}} + p(-1)^{\mathsf{a}_{0} + \mathsf{b}_{0,1}} + (1-p)(-1)^{\mathsf{a}_{0} + \mathsf{b}_{1,1}} \\
    + \frac{1-p}{p}\big( p(-1)^{\mathsf{a}_{1} + \mathsf{b}_{1,0}} + (1-p)(-1)^{\mathsf{a}_{1} + \mathsf{b}_{0,0}} - p(-1)^{\mathsf{a}_{1} + \mathsf{b}_{1,1}} - (1-p)(-1)^{\mathsf{a}_{1} + \mathsf{b}_{0,1}}\big).
\end{multline*}
By pairing up terms, the above is equal to
\begin{multline*}
    (-1)^{\mathsf{b}_{0,0}}\Big(-p + \frac{(1-p)^2}{p} + p(-1)^{\mathsf{a}_{0}} + (-1)^{\mathsf{a}_{1}}\frac{(1-p)^2}{p}\Big) \\ 
    + (-1)^{\mathsf{b}_{0,1}}\Big( p - \frac{(1-p)^2}{p} + p(-1)^{\mathsf{a}_{0}} - (-1)^{\mathsf{a}_{1}}\frac{(1-p)^2}{p}\Big) \\
    + (1-p)(-1)^{\mathsf{b}_{1,0}}\Big( (-1)^{\mathsf{a}_{0}} + (-1)^{\mathsf{a}_{1}}\Big)  +(1-p)(-1)^{\mathsf{b}_{1,1}}\Big((-1)^{\mathsf{a}_{0}} -  (-1)^{\mathsf{a}_{1}}\Big).
\end{multline*}
We now consider two cases. First, suppose $\mathsf{b}_{0,0} = \mathsf{b}_{0,1}$. Then the above reduces to
\begin{multline*}
    2p(-1)^{\mathsf{b}_{0,0}+\mathsf{a}_{0}}
    + (1-p)(-1)^{\mathsf{b}_{1,0}}\Big( (-1)^{\mathsf{a}_{0}} + (-1)^{\mathsf{a}_{1}}\Big)  +(1-p)(-1)^{\mathsf{b}_{1,1}}\Big((-1)^{\mathsf{a}_{0}} -  (-1)^{\mathsf{a}_{1}}\Big) \\
    \leq 2p(-1)^{\mathsf{b}_{0,0}+\mathsf{a}_{0}}
    + 2(1-p)(-1)^{\mathsf{b}_{1,0}+\mathsf{a}_{0}} \leq 2
\end{multline*}
as claimed. If $\mathsf{b}_{0,0} = \mathsf{b}_{0,1} \oplus 1$, we have
\begin{multline*}
    2(-1)^{\mathsf{b}_{0,0}}\Big(-p + \frac{(1-p)^2}{p} + (-1)^{\mathsf{a}_{1}}\frac{(1-p)^2}{p}\Big) + (1-p)(-1)^{\mathsf{b}_{1,0}}\Big( (-1)^{\mathsf{a}_{0}} + (-1)^{\mathsf{a}_{1}}\Big)  +(1-p)(-1)^{\mathsf{b}_{1,1}}\Big((-1)^{\mathsf{a}_{0}} -  (-1)^{\mathsf{a}_{1}}\Big) \\ \leq 2(-1)^{\mathsf{b}_{0,0}}\Big(-p + \frac{(1-p)^2}{p} + (-1)^{\mathsf{a}_{1}}\frac{(1-p)^2}{p}\Big) + 2(1-p) \\
    \leq 2\max\Big\{p , \Big| -p + 2\frac{(1-p)^2}{p}\Big|\Big\} + 2(1-p).
\end{multline*}
Noting that the right hand side of the above equation is greater than or equal to 2 for $p\in (2/5,1)$ proves the claim.

\vspace{0.2cm}

\noindent \underline{Part 3:} We prove Part 3 via a sum of squares decomposition. Let $\mathcal{H}_{Q_{A}},\mathcal{H}_{Q_{B}},\{M_{a|x}\},\{N_{b|y}\},\rho_{Q_{A}Q_{B}}$ be a non-signaling quantum strategy, and $\mathrm{P}_{\mathrm{q}}$ the induced behavior . Let $A_{x} = \frac{1}{2}(\id_{Q_{A}} + (-1)^{a}M_{a|x})$ and $B_{y} = \frac{1}{2}(\id_{Q_{B}} + (-1)^{b}B_{b|y})$ denote Alice's and Bob's observables, such that $\langle A_{x}B_{y}\rangle = \tr[(A_{x}\otimes B_{y})\rho]$, $\langle A_{x} \rangle = \tr[(A_{x} \otimes \id_{Q_{B}})\rho]$ and $\langle B_{y} \rangle = \tr[(\id_{Q_{A}}\otimes B_{y})\rho]$. Then the following Bell operator
\begin{equation*}
    B_{p} = \frac{1-2p}{p}\Big( \id_{Q_A} \otimes B_{0} - \id_{Q_A} \otimes B_{1} \Big) + A_{0} \otimes B_{0} + A_{0} \otimes B_{1} + \frac{1-p}{p}\Big( A_{1} \otimes B_{0} - A_{1} \otimes B_{1}\Big) 
\end{equation*}
satisfies $T_{p}(\mathrm{P}_{\mathrm{q}}) = \tr[B_{p}\rho]$. Furthermore, if there exists a real number $\eta$ and set of polynomials $P_{i}$ in the operators $A_{x}\otimes \id_{Q_B}$, $\id_{Q_A} \otimes B_{y}$ and $A_{x}\otimes B_{y}$ such that
\begin{equation*}
    \eta \, \id_{Q_{A}Q_{B}} - B_{p} = \sum_{i}P_{i}^{\dagger}P_{i},
\end{equation*}
$\eta$ is an upper bound on the maximum quantum value of $T_{p}(\mathrm{P}_{\mathrm{q}})$, i.e., 
\begin{equation*}
    \eta - T_{p}(\mathrm{P}_{\mathrm{q}}) = \tr[(\eta \, \id_{Q_{A}Q_{B}} - B_{p})\rho] = \sum_{i}\tr[P_{i}^{\dagger}P_{i}\rho] \geq 0.
\end{equation*}
Importantly, the polynomials $P_{i}$ are independent of the actual realization of the operators $A_{x}$ and $B_{y}$, implying that the upper bound holds for all non-signaling quantum strategies. 

As mentioned in the main text, the family of Bell inequalities \eqref{eq:tilt_chsh_app} are a special case of a family discovered by Barizien, Sekatski and Bancal~\cite{Barizien_2024}. An SOS decomposition was provided in Ref.~\cite{Barizien_2024}, and we restate it here for the convenience of the reader. Let $p\in (2/5,1)$, and $(\theta,\phi)$ be given by \cref{eq:angs_app}. Note that for $p\in (2/5,1)$,
\begin{equation*}
    \begin{aligned}
        \sin(\phi) &= -\sqrt{\frac{2-5p}{2-6p}} \in (-\sqrt{3}/2,0), \\
        \cos(\phi) &= \sqrt{\frac{p}{-2+6p}} \in (0,1), \\
        \sin(2\theta) &= \frac{\sqrt{p(-2+5p)}}{1-3p} \in (-\sqrt{3}/2,0) \ \text{and} \\
        \cos(2\theta) &= \frac{1-2p}{-1+3p} \in (-1/2,1).
    \end{aligned}
\end{equation*}
Define the following polynomials:
\begin{equation*}
    \begin{aligned}
        N_{0} &= A_{1} \otimes \id_{Q_{B}} - \id_{Q_{A}} \otimes \frac{B_{0}-B_{1}}{2\cos(\phi)},\\
        N_{1} &= A_{0} - \sin(2\theta)\, \id_{Q_{A}} \otimes \frac{B_{0} + B_{1}}{2\sin(\phi)} - \cos(2\theta) \, A_{0} \otimes \frac{B_{0}-B_{1}}{2\cos(\phi)}.
    \end{aligned}
\end{equation*}
Let $\lambda^{2} > 0$ be such that
\begin{equation*}
    \frac{1}{\lambda^{2}}  = \frac{\sin^{2}(2\theta)}{\tan^{2}(\phi)} - \cos^{2}(2\theta) = \frac{1-p}{-1+3p} \in (1,3).
\end{equation*}
Then by direct calculation under the projective measurement assumption (i.e., $A_{x}^{2} = \id_{Q_{A}}$ and $B_{y}^{2} = \id_{Q_{B}}$), we find
\begin{multline}
    N_{0}^{2} + \lambda^{2}\, N_{1}^{2} = \Big(1 + \lambda^{2} + \frac{1}{2\cos(\phi)} + \frac{\lambda^2 \cos^{2}(2\theta)}{2\cos^{2}(\phi)} + \frac{\lambda^2 \sin^{2}(2\theta)}{2\sin^{2}(\phi)}\Big) \id_{Q_{A}Q_{B}} \\- \Big( A_{1}\otimes \frac{B_{0} - B_{1}}{\cos(\phi)} + \lambda^{2}\, \sin(2\theta)\, A_{0} \otimes \frac{B_{0} + B_{1}}{\sin(\phi)} + \lambda^{2} \, \cos(2\theta) \, \id_{Q_{A}} \otimes \frac{B_{0} - B_{1}}{\cos(\phi)}\Big). \label{eq:big_expr}
\end{multline}
By multiplying both sides of the above expression by $\sin(\phi)/(\lambda^{2}\sin(2\theta)) >0 $ and performing trigonometric simplifications, we recover 
\begin{equation}
    2\sqrt{2}\sqrt{\frac{p}{3p-1}} \, \id_{Q_{A}Q_{B}} - B_{p} = \frac{\sin(\phi)}{\lambda^{2} \,\sin(2\theta)}\Big(N_{0}^{2} + \lambda^{2} \, N_{1}^{2}\Big) \geq 0
\end{equation}
as desired.

\vspace{0.2cm}

\noindent \underline{Part 4:} The self-testing properties of the family of Bell expressions \eqref{eq:big_expr} were proven in~\cite[Appendix B]{Barizien_2024}. For fully non-signaling quantum behaviors, \eqref{eq:tilt_chsh_app} is a special case of \eqref{eq:big_expr}, and the self-testing proof follows immediately. 
    
\end{proof}

\noindent \textbf{Proposition 3.} \textit{Let $p,r \in (0,1)$ and $W_{p,r}$ be the Bell functional 
    \begin{equation}
    W_{p,r}(\mathrm{P}) = 1 - \p(11|00) - (1-r)\p(01|01) \\ - (1-p)\p(10|10) + (1-p)(1-r)\p(11|11). \label{eq:2way_app}
    \end{equation}
    Then the following properties hold:
    \begin{enumerate}
        \item For all two-way signaling local behaviors $\mathrm{P}_{\mathrm{L}}^{\mathrm{S}} \in \mathbb{S}_{(p,1),(r,1)}$, $W_{p,r}(\mathrm{P}_{\mathrm{L}}^{\mathrm{S}}) \leq 1$.
        \item There exists a non-signaling quantum behavior $P_{\mathrm{Q}}^{\mathrm{NS}}$ such that $W_{p,r}(\mathrm{P}_{\mathrm{Q}}^{\mathrm{S}}) > 1$. 
    \end{enumerate}}

\begin{proof}
\noindent \underline{Part 1:} As done in the proof of \cref{lem:self-test}, we can parameterise a LHV model using 8 $\pm 1$ valued variables $\alpha_{x,y} = (-1)^{\mathsf{a}_{x,y}}$ and $\beta_{x,y} = (-1)^{\mathsf{b}_{x,y}}$ for $x,y\in \{0,1\}$, where $\mathsf{a}_{x,y}$ and $\mathsf{b}_{x,y}$ are the binary response functions. We then define 
\begin{equation*}
\begin{aligned}
    \langle A_{x,0} \rangle &= r\alpha_{x,0} + (1-r)\alpha_{x,1} \\
    \langle A_{x,1} \rangle &= \alpha_{x,1} \\
    \langle B_{0,y} \rangle &= p\beta_{0,y} + (1-p)\beta_{1,y} \\
    \langle B_{1,y} \rangle &= \beta_{1,y},
\end{aligned}
\end{equation*}
and the respective probabilities are given by
\begin{equation*}
    \p(a,b|x,y) = \frac{1}{4}\big( 1 + (-1)^a \langle A_{x,y} \rangle\big)\big( 1 + (-1)^b \langle B_{x,y} \rangle \big).
\end{equation*}
Inserting this into \cref{eq:2way_app}, $W_{p,r}(\mathrm{P}) \leq 1$ is equivalent to
\begin{multline}
         (1-p)(1-r)(1-\langle A_{1,1}\rangle)(1-\langle B_{1,1}\rangle) \leq (1-\langle A_{0,0}\rangle)(1-\langle B_{0,0}\rangle) \\ + (1-r)(1+\langle A_{0,1}\rangle)(1-\langle B_{0,1}\rangle) + (1-p)(1-\langle A_{1,0}\rangle)(1+\langle B_{1,0}\rangle). \label{eq:big1}
\end{multline}
Note that 
\begin{equation}
\begin{aligned}
    (1-\langle A_{0,0}\rangle)(1-\langle B_{0,0}\rangle) &= \big(r(1-\alpha_{0,0}) + (1-r)(1-\alpha_{0,1})\big)\big(p(1-\beta_{0,0}) + (1-p)(1-\beta_{1,0})\big) \\
    &\geq (1-r)(1-p)(1-\alpha_{0,1})(1-\beta_{1,0}), \label{eq:sub1}
\end{aligned}
\end{equation}
where we used the fact that $1 \pm \alpha_{x,y} \geq 0$ and $1 \pm \beta_{x,y} \geq 0$. Similarly we have
\begin{equation}
    (1-r)(1+\langle A_{0,1}\rangle)(1-\langle B_{0,1}\rangle) \geq (1-r)(1-p)(1+\alpha_{0,1})(1-\beta_{1,1}) \label{eq:sub2}
\end{equation}
and 
\begin{equation}
    (1-p)(1-\langle A_{1,0}\rangle)(1+\langle B_{1,0}\rangle) \geq (1-r)(1-p)(1-\alpha_{1,1})(1+\beta_{1,0}). \label{eq:sub3}
\end{equation}
Note that these bounds can be achieved by choosing $\alpha_{0,0} = \beta_{0,0} = \beta_{0,1} = \alpha_{1,0} = +1$. Inserting \cref{eq:sub1,eq:sub2,eq:sub3} into the right hand side of \cref{eq:big1}, $W_{p,r}(\mathrm{P}) \leq 1$ if the following inequality holds:
\begin{equation}
     (1-p)(1-r)(1-\langle A_{1,1}\rangle)(1-\langle B_{1,1}\rangle) \leq (1-p)(1-r)\Big( (1-\alpha_{0,1})(1-\beta_{1,0}) + (1+\alpha_{0,1})(1-\beta_{1,1}) + (1-\alpha_{1,1})(1+\beta_{1,0})\Big). \label{eq:newIneq}
\end{equation}
Expanding and rearranging (recall that $p,r<1$ so $(1-p)(1-r) > 0$), \cref{eq:newIneq} is equivalent to 
\begin{equation}
    \alpha_{1,1}(\beta_{1,1} + \beta_{1,0}) + \alpha_{0,1}(\beta_{1,1} - \beta_{1,0}) \leq 2.
\end{equation}
This is equivalent to the CHSH inequality~\cite{CHSH}, and thus holds for all choices of $\alpha_{1,1}, \, \beta_{1,1}, \, \beta_{1,0}$ and $\alpha_{0,1}$. Moreover, it can be saturated by choosing, for example, $\alpha_{1,1} = \alpha_{0,1} = \beta_{1,1} = \beta_{1,0} = +1$. This completes the proof of Part 1.

\vspace{0.2cm}

\noindent \underline{Part 2:} To prove the second part, we provide a non-signaling quantum behavior $\mathrm{P}^{\text{NS}}_{\text{Q}}$ that satisfies $W_{p,r}(\mathrm{P}^{\text{NS}}_{\text{Q}}) > 1$ for all $p<1$ and $r < 1$. Define the following parameters:
\begin{equation}
    \begin{aligned}
        \phi_{0} &= \pi + 2 \, \arcsin\Bigg( \sqrt{\frac{2}
        {2^{\frac{2}{3}} + 2}} \Bigg) \\
        \phi_{1} &= \arctan(2\sqrt{2})\\
        \theta &= \pi - \arctan\Bigg( \frac{2^{\frac{2}{3}}}{2} \Bigg).
        \end{aligned}
\end{equation}
The two-qubit non-signaling quantum strategy is given by
\begin{equation}
    \begin{gathered}
        \ket{\psi} = \cos(\theta)\ket{00} + \sin(\theta) \ket{11}\\
        A_{x} = \cos(\phi_{x}) \, \sigma_{Z} + \sin(\phi_{x}) \,\sigma_{X} \\
        B_{y} = A_{y},
    \end{gathered}
\end{equation}
and its induced behavior $\mathrm{P}^{\text{NS}}_{\text{Q}}$ satisfies, for $p,r\in (0,1)$,
\begin{multline}
    W_{p,r}(\mathrm{P}^{\text{NS}}_{\text{Q}}) -1 = \frac{4 + 11\cdot 2^{\frac{1}{3}} + 3 \cdot 2^{\frac{2}{3}} + 2(2^{\frac{1}{3}}-1)(r-1)p-2(2^{\frac{1}{3}}-1)r}{3(2+3\cdot 2^{\frac{1}{3}} + 2^{\frac{2}{3}})} -1 \\= \frac{1}{45}(16 + 2\cdot 2^{\frac{3}{2}} - 11 \cdot 2^{\frac{2}{3}})(1-p)(1-r) > 0,
\end{multline}
where we noted that the constant term $\frac{1}{45}(16 + 2\cdot 2^{\frac{3}{2}} - 11 \cdot 2^{\frac{2}{3}}) \approx 0.0235$ is strictly positive. Therefore $W_{p,r}(\mathrm{P}^{\text{NS}}_{\text{Q}}) > 1$, completing the proof.
\end{proof}

\newpage 
 \section{Additional plots} \label{app:plots}
    
\begin{figure}[hbt!]
    \centering
    \includegraphics[width=0.5\textwidth]{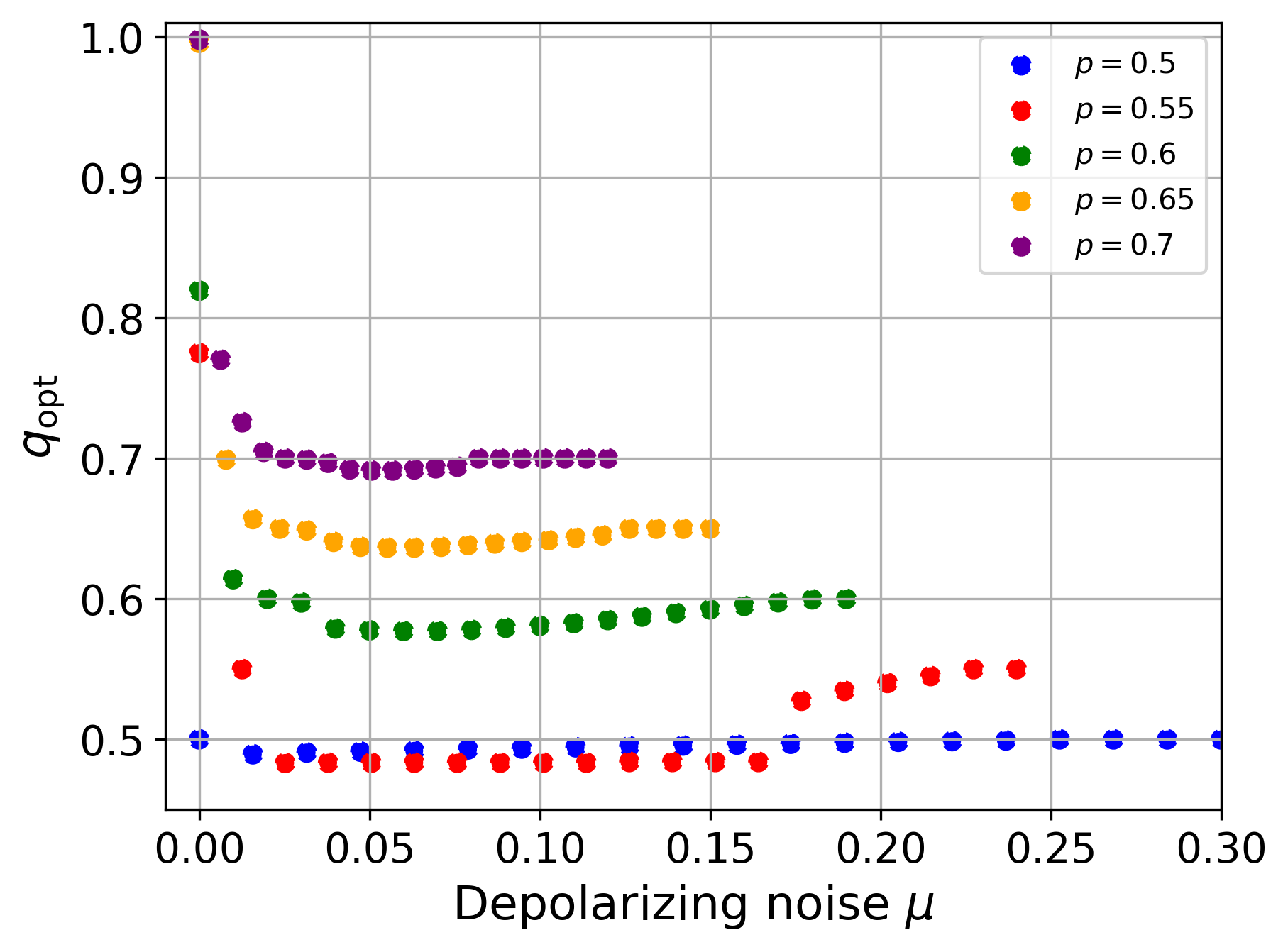}
    \caption{Value of the parameter $q \in (2/5,1)$ that certifies the maximum randomness from the family of Bell expressions $T_{q}$ as a function of depolarizing noise and the probability that Alice's input is sent to Bob, $p \in [1/2,1]$.}
    \label{fig:rand_3}
\end{figure}

\end{document}